\documentclass[11pt,a4paper]{article}
\usepackage[margin=2cm]{geometry}
\usepackage[T1]{fontenc}
\usepackage[utf8]{inputenc}
\usepackage[noadjust]{cite}
\usepackage{lmodern}
\usepackage{amsthm}
\usepackage{thmtools}
\makeatletter
\@ifundefined{newcounteralias}{}{%
  \renewcommand\thmt@autorefsetup{%
    \@xa\def\csname\thmt@envname autorefname\@xa\endcsname
      \@xa{\thmt@thmname}%
  }%
}
\makeatother

\usepackage{thm-autoref}
\usepackage[shortlabels]{enumitem}
\usepackage[center]{caption}
\usepackage{latexsym}
\usepackage{amsfonts}
\usepackage{graphicx}
\usepackage{mathrsfs}
\usepackage[normalem]{ulem}
\usepackage{amssymb, mathdots, mathtools, xkeyval, tikz, tkz-euclide, adjustbox, float, array, subcaption, listings}
\usepackage{comment}
\usepackage{xspace}
\usepackage[colorlinks=true,
linkcolor =blue,
citecolor=blue,
urlcolor=blue]{hyperref}
\usepackage[capitalise]{cleveref}

\theoremstyle{plain}

\newtheorem{theorem}{Theorem}[section]
\newtheorem{lemma}[theorem]{Lemma}

\newtheorem{corollary}[theorem]{Corollary}

\newtheorem{claim}{Claim}[theorem]
\newenvironment{claimproof}{\noindent\textit{Proof of Claim \theclaim:}}{\hfill$\lrcorner$\\} 

\crefname{claim}{Claim}{Claims}

\newcommand{\col}[1]{\textsc{#1-Coloring}\xspace}
\newcommand{\lcol}[1]{\textsc{List #1-Coloring}\xspace}
\newcommand{\Wone}{\textsf{W[1]}\xspace}
\newcommand{\NP}{\textsf{NP}\xspace}
\newcommand{\XP}{\textsf{XP}\xspace}
\newcommand{\FPT}{\textsf{FPT}\xspace}
\newcommand{\Clique}{\textsc{Multicolored Clique}\xspace}
\newcommand{\MSI}{\textsc{Multicolored Subgraph Isomorphism}\xspace}
\newcommand{\Oh}{\mathcal{O}}

\newcommand{\port}{\textsf{port}}
\renewcommand{\phi}{\varphi}

\newcommand{\cC}{\mathcal{C}}

\newcommand{\cF}{\mathcal{F}}

\newcommand{\cS}{\mathcal{S}}

\usepackage{todonotes}

\title{Parameterized complexity of $k$-\textsc{Coloring}\\
in graphs with no long induced paths}

\author{Paweł Rzążewski\thanks{Warsaw University of Technology. Supported by the National Science Centre grant 2024/54/E/ST6/00094.}}
\date{}

\begin{document}
	\begin{titlepage}
	\date{}
	\maketitle

\begin{abstract}
    We study the parameterized complexity of \textsc{(List) $k$-Coloring} in $H$-free graphs, where $H$ is a linear forest, that is, a disjoint union of paths.
    First, considering $k$ as the parameter, we establish the following:
    \begin{enumerate}
      \item For any $s \geq 0$, \textsc{List $k$-Coloring} in $(P_4+sP_1)$-free graphs is fixed-parameter tractable (\textsf{FPT}).
      \item \textsc{$k$-Coloring} is \textsf{W[1]}-hard in $2P_2$-free graphs.
    \end{enumerate}
    The second result settles, in a strong form, a long-standing open problem posed by Hoàng, Kamiński, Lozin, Sawada, and Shu [Algorithmica, 2010].
    
    Next, we prove that \textsc{$k$-Coloring} is \textsf{NP}-hard in $(P_4+P_2)$-free graphs. These three findings, together with known results from classical, non-parameterized complexity, yield a complete complexity classification of \textsc{$k$-Coloring} and \textsc{List $k$-Coloring} in $H$-free graphs, parameterized by $k$, into the cases: \textsf{FPT}, \textsf{XP} but \textsf{W[1]}-hard, and \textsf{paraNP}-hard.

    We also prove that \textsc{$3$-Coloring} is \textsf{W[1]}-hard in $P_t$-free graphs when parameterized by $t$. This answers a question of Golovach, Johnson, Paulusma, and Song [Journal of Graph Theory, 2017].

    Finally, as a byproduct of our algorithm for \textsc{List $k$-Coloring} in $(P_4+sP_1)$-free graphs,
    we show that, for every fixed $s$ and $k$, there are only finitely many $(P_4+sP_1)$-free minimal obstructions to $k$-colorability. This settles a conjecture of Cameron, Hoàng, and Sawada [Disc. Appl. Math., 2022] and completes the dichotomy concerning the finiteness of the family of vertex-$k$-critical $H$-free graphs for every graph $H$ and every $k$.
\end{abstract}

	\thispagestyle{empty}

  \bigskip
  \bigskip
  \bigskip
\tableofcontents
\end{titlepage}
\thispagestyle{empty}
\newpage\setcounter{page}{1}

\section{Introduction}
The complexity of \col{$k$} in restricted graph classes is one of the central topics in algorithmic graph theory.
Particular attention has been paid to \emph{$H$-free graphs}, i.e., graphs that do not contain a fixed graph $H$ as an induced subgraph~\cite{DBLP:journals/gc/RanderathS04,DBLP:journals/jgt/GolovachJPS17}.
For an integer $t$, by $P_t$ we denote the path on $t$ vertices.
For graphs $H_1$ and $H_2$, by $H_1 + H_2$ we denote the disjoint union of $H_1$ and $H_2$.
We write $sH$ for the disjoint union of $s$ copies of $H$.

Classical hardness results imply that whenever $H$ is not a linear forest, \col{$k$} is \NP-hard on $H$-free graphs for every fixed $k \ge 3$~\cite{DBLP:journals/cpc/Emden-WeinertHK98,DBLP:journals/siamcomp/Holyer81a,DBLP:journals/jal/LevenG83}.
By contrast, for every fixed $k$, \col{$k$} is polynomial-time solvable on $H$-free graphs if either (a) $H$ is an induced subgraph of $sP_3$ for some $s$~\cite{DBLP:journals/corr/abs-2505-00412,DBLP:journals/combinatorica/ChudnovskyHS24,DBLP:journals/siamdm/HajebiLS22}, or (b) $H$ is an induced subgraph of $P_5 + sP_1$ for some $s$~\cite{DBLP:journals/algorithmica/0001GKP15,DBLP:journals/algorithmica/HoangKLSS10}.
These algorithmic results actually work for the more general \lcol{$k$} problem, where each vertex $v$ is assigned a list $L(v) \subseteq [k]$ of admissible colors, and we ask for a proper coloring respecting these lists.

For $k \ge 5$, \lcol{$k$} in $H$-free graphs is \NP-hard for every linear forest not covered by the tractable cases listed above~\cite{DBLP:journals/iandc/GolovachPS14,DBLP:journals/ejc/Huang16,DBLP:journals/algorithmica/0001GKP15}.
For \col{$4$}, the problem is polynomial-time solvable in $P_6$-free graphs~\cite{DBLP:conf/soda/SpirklCZ19} and in $H$-free graphs whenever $H$ is a linear forest on at most 5 vertices~\cite{GolovachPaulusmaSong2013Coloring} (this also follows from the results for general $k$ listed above).
On the other hand, the problem is \NP-hard in $P_7$-free graphs~\cite{DBLP:journals/ejc/Huang16}.

The picture for \col{$3$} is much less clear. Polynomial-time algorithms are known for $P_7$-free graphs~\cite{DBLP:journals/combinatorica/BonomoCMSSZ18} and for $(P_6+sP_3)$-free graphs for every fixed $s$~\cite{DBLP:journals/algorithmica/ChudnovskyHSZ21}. It is believed that \col{$3$} is polynomial-time solvable in $H$-free graphs for every linear forest $H$. The existence of a quasipolynomial-time algorithm for all such cases~\cite{DBLP:conf/sosa/PilipczukPR21} lends further support to this view and suggests that the problem is unlikely to be \NP-hard. At the same time, we still seem far from obtaining a polynomial-time algorithm.

In this paper, we focus on the parameterized complexity of \col{$k$} in $H$-free graphs. Among possible choices of the parameter, arguably the most natural one is $k$ itself.
Restating the results listed above, for every fixed $s$, the \col{$k$} problem parameterized by $k$ belongs to the class \XP when restricted to $sP_3$-free graphs or $(P_5 + sP_1)$-free graphs,
while it is \textsf{paraNP}-hard in $H$-free graphs whenever $H$ is not a linear forest or contains $P_6$ as an induced subgraph.
It is also known that the problem is \FPT in $H$-free graphs if $H = P_4$, as shown by Jansen and Scheffler~\cite{DBLP:journals/dam/JansenS97}, and if $H \in \{P_3 + P_1\} \cup \bigcup_{s \geq 1}\{P_2+sP_1\}$, as shown by Couturier, Golovach, Kratsch, and Paulusma~\cite{DBLP:journals/jda/CouturierGKP12}.

\paragraph{Our results.}
As our first result, we significantly extend the algorithmic results discussed above by showing the following result.

\begin{restatable}{theorem}{thmpfour}
    \label{thm:p4sp1algo}
    For every fixed $s \geq 0$, \lcol{$k$} in $n$-vertex $(P_4+sP_1)$-free graphs can be solved in time $2^{\Oh(k^5)} \cdot n^{\Oh(1)}$.
    In particular, the problem is \FPT when parameterized by $k$.
\end{restatable}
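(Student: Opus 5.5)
We plan to prove \cref{thm:p4sp1algo} by induction on $s$, using the case $s=0$ ($P_4$-free graphs, i.e., cographs) as the base. For cographs, \lcol{$k$} can be solved in FPT time by dynamic programming over the cotree. For every node of the cotree we store the family of all ``color-set profiles'' that can be realized on the corresponding subgraph. At a disjoint-union node the realizable color sets combine freely. At a join node the two sides must use disjoint color sets. Since there are only $2^k$ subsets of $[k]$, each node stores at most $2^{k}$ bits of information, which already yields time $2^{\Oh(k)} n^{\Oh(1)}$. Lists are handled at the leaves. A cograph with no $k$-coloring must also have no clique of size $k+1$, and this lets us assume $\omega \le k$ throughout.

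For the inductive step, let $G$ be $(P_4+sP_1)$-free with $s\ge 1$. If $G$ is additionally $(P_4+(s-1)P_1)$-free, we apply induction directly. Otherwise we fix an induced copy $Q$ of $P_4+(s-1)P_1$ on $t=s+3$ vertices. Every vertex outside $N[Q]$ must then lie in a $P_4$-free graph together with $Q$'s pattern, so $G-N[Q]$ is $P_4$-free. We guess the colors of $Q$, which gives $k^{t}$ branches, and propagate them to shrink the lists of $N(Q)$.

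The difficulty is that $N(Q)$ is unbounded. The plan is to partition $N(Q)$ by its adjacency type to $Q$, which gives at most $2^t$ classes. We will show that each class, and each class together with the non-neighborhood, has restricted structure: for instance, a vertex $v\in N(Q)$ adjacent to a single component of $Q$ yields, with the remaining isolated vertices, a copy of $(P_4+(s-1)P_1)$-like patterns that forbid long induced paths through $v$.

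The main obstacle is reducing the unbounded neighborhood to bounded complexity. Our intended tool is a ``boundedly many representative'' argument. Since $\omega\le k$ and the graph is $(P_4+sP_1)$-free, Ramsey-type bounds show that any set inducing a $P_4$ contains only $\Oh(k^{\Oh(s)})$ vertices with pairwise distinct relevant neighborhoods. We then branch over a bounded-size hitting structure of size $\mathrm{poly}(k)$ and guess its coloring, which costs $k^{\mathrm{poly}(k)} = 2^{\Oh(k^5)}$ to within log factors; the choice of degree $5$ comes from the sizes of these Ramsey witnesses. After this guessing, every remaining vertex of $N(Q)$ either has a list of size at most $k-1$, allowing induction on $k$, or is part of a $P_4$-free remainder solvable by the base case. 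To organize the recursion we will use a measure combining $s$ with the total list size. We expect that making the branching depth bounded by a function of $k$ alone, independent of $n$, requires a careful potential argument, and that this is the crux of the proof.
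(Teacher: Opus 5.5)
\textbf{There is a genuine gap: the proposal has no working inductive step.} The part you flag as ``the crux'' (bounding the branching by a function of $k$ alone) is the entire difficulty, and the steps leading up to it do not get you there.

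First, if $Q$ induces $P_4+(s-1)P_1$, then any vertex outside $N[Q]$ would extend $Q$ to an induced $P_4+sP_1$. So in fact $V(G)=N[Q]$. There is no $P_4$-free remainder, and after guessing the colors of $Q$ the graph $G-Q$ is still only $(P_4+sP_1)$-free. The induction on $s$ therefore never becomes applicable.

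Second, the guess removes one color from the list of every vertex of $N(Q)$ that has a full list. But an instance of \lcol{$k$} in which all lists have size at most $k-1$ is not an instance of \lcol{$(k-1)$}, because different vertices lose different colors. So there is no induction on $k$ either. A potential based on total list size is bounded only by $kn$, which makes the branching depth depend on $n$.

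Third, the ``Ramsey-type bound'' for ``a set inducing a $P_4$'' is not meaningful as written, since such a set has four vertices. The ``hitting structure of size $\mathrm{poly}(k)$'' is never defined. Nothing explains why guessing colors on a bounded set would make the unbounded set $N(Q)$ tractable, and the exponent $5$ is not derived from anything.

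\textbf{The missing idea is a way to shrink the instance to size depending only on $k$.} The paper does this with one safe deletion rule rather than branching.

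\begin{enumerate}
\item A $k$-colorable graph has $n\le k\alpha(G)$, and a maximum independent set of a $(P_4+sP_1)$-free graph can be found in polynomial time (\cref{thm:mis-p4sp1}). So once $n>kM$ with $M=2^{\Oh(k^2)}$, one either rejects or holds an independent set $X$ of size $M$.
\item Consider the traces $N(v)\cap X$. Two nonadjacent vertices with distinct, intersecting traces, plus $s$ common non-neighbors in $X$, would induce $P_4+sP_1$. A sunflower-like extraction then gives either a $K_{k+1}$ or a set $S\subseteq X$ such that every vertex sees at most $s$, or all but at most $s$, vertices of $S$ (\cref{lem:extract-is}).
\item The vertices with at most $s$ neighbors in $S$ induce a cograph $W$ with at least $q=2^{\Oh(k)}$ components meeting $S$. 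Every vertex of $D=V(G)\setminus W$ is complete to all but at most $s$ of these components (\cref{lem:components}).
\item For each $K\subseteq[k]$, keep $sk+1$ of these components that are not colorable from $L|K$, and delete all the others (\cref{lem:reduce}). This is safe: given a coloring of the rest, let $K$ be the set of colors unused on $D$. At most $sk$ surviving components use a color outside $K$, so every deleted component is colorable from $L|K$, and such colorings never clash with $D$.
\item Iterating leaves at most $kM$ vertices. Running the $n^{\Oh(k^3)}$ algorithm of \cref{thm:lcol-p5sp1} on that kernel gives $2^{\Oh(k^5)}$.
\end{enumerate}

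None of these ingredients appears in your plan: the independence-number threshold, the homogeneous independent set with its clique alternative, and the irrelevant-component argument via color sets $K$. Without something playing their role, the branching cannot be bounded in terms of $k$ alone.
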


Our second result originates in the paper of Hoàng, Kamiński, Lozin, Sawada, and Shu~\cite{DBLP:journals/algorithmica/HoangKLSS10} showing that for every fixed $k$, \col{$k$} can be solved in polynomial time in $P_5$-free graphs. The authors asked whether the problem is \FPT (when parameterized by $k$).
Later, Couturier, Golovach, Kratsch, and Paulusma~\cite{DBLP:journals/jda/CouturierGKP12} observed that the question remains open even for $2P_2$-free graphs.
This question was later repeated in subsequent papers and at numerous workshops~\cite{DBLP:journals/jgt/GolovachJPS17,DBLP:journals/siamdm/ChudnovskyKPRS21,DBLP:journals/dagstuhl-reports/ChudnovskyPS19,DBLP:journals/dagstuhl-reports/FellowsGMS12,DBLP:journals/dagstuhl-reports/ChudnovskyMPSA22}. As noted by Golovach~\cite{DBLP:journals/dagstuhl-reports/ChudnovskyMPSA22}, the problem ``remains open for a long time despite all efforts.''
Quite surprisingly, we show that the problem is actually \Wone-hard.

\begin{restatable}{theorem}{thmtwoptwo}
    \label{thm:2p2hard}
    \col{$k$} in $2P_2$-free graphs is \Wone-hard when parameterized by $k$.
    Furthermore, unless the ETH fails, the problem cannot be solved in time $f(k) \cdot n^{o(k)}$ on $n$-vertex instances for any computable function $f$.
\end{restatable}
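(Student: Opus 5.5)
The plan is a parameterized reduction from \Clique with $k$ color classes $V_1,\dots,V_k$, each of size $n$. It should produce, in polynomial time, a $2P_2$-free graph $G$ and a number of colors $k'=\Oh(k)$ such that $G$ is $k'$-colorable if and only if the \Clique instance has a solution. A linear bound $k'=\Oh(k)$ is what gives both conclusions of \cref{thm:2p2hard}: \Wone-hardness, and, via the standard ETH lower bound $f(k)\cdot n^{o(k)}$ for \Clique, the claimed running-time lower bound. So the budget of colors must stay linear in $k$, even though the gadgets themselves may have size polynomial in $n$.

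The basic tool I would use is that $2P_2$ has no pair of false twins. Hence replacing a vertex of a $2P_2$-free template by an independent set of false twins keeps the graph $2P_2$-free. Moreover, an induced $2P_2$ can use at most one vertex from each such independent set in any role that would distinguish twins. I therefore plan to build $G$ as a perturbed blow-up of a small template $T$ on $\Oh(k)$ vertices:
\begin{itemize}
\item a palette clique $C=\{c_1,\dots,c_{k'}\}$, which fixes the names of the colors;
\item for each $i\in[k]$, a selection set $S_i$ of $n$ vertices indexed by $V_i$;
\item for each pair $i<j$, an edge-check set $E_{ij}$.
\end{itemize}
Adjacencies to the palette restrict each $S_i$ to a small set of colors, say a private color $\alpha_i$ and a shared color $\beta$. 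They also restrict the interactions so that, in any proper $k'$-coloring, exactly one vertex of $S_i$ gets $\alpha_i$, which encodes the selected vertex $v_i$. The pattern of adjacencies between $S_i$, $S_j$ and $E_{ij}$ is the only place where the structure of the input graph enters. I would design it in a complemented way: adjacency between $x_u\in S_i$ and $x_w\in S_j$ present exactly when $uw$ is a non-edge. In that case, giving both the special colors is blocked exactly for non-adjacent pairs.

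The key steps, in order, are these. First, fix the template $T$ and verify that $T$ is $2P_2$-free. Equivalently, verify that its complement is $C_4$-free, since a $k'$-coloring of $G$ is a cover of $\overline{G}$ by $k'$ cliques. Second, check the correctness direction from a clique to a coloring. Given a multicolored clique, color the selected vertices with their $\alpha_i$, the rest of each $S_i$ with the shared color(s), and the $E_{ij}$ by an explicit rule. Third, check the converse: the palette adjacencies should force exactly one $\alpha_i$-colored vertex per $S_i$, and the $E_{ij}$ should forbid non-adjacent selected pairs.

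The main obstacle is the first step, carried out for the perturbed graph rather than the pure blow-up. The input-dependent adjacencies between the independent sets $S_i$ and $S_j$ destroy the twin structure, and an induced $2P_2$ could use two edges between $S_i$ and $S_j$ encoding arbitrary input structure. My first attempt to handle this would be to make every such pair $S_i,S_j$ induce a $2K_2$-free bipartite graph, that is, a chain graph. For this I would replace each $S_i$ by $n$ nested threshold copies, using the vertex order to encode the choice via intervals. The edge relation of the input would then be checked only through the $E_{ij}$ gadgets, each of which is adjacent to everything else except a clique-like part. Finally, I would do a case analysis on which of the $\Oh(k^2)$ template parts host the four vertices of a hypothetical induced $2P_2$. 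I expect this to be the longest part of the proof.
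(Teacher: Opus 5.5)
There is a genuine gap. The proposal never fixes an actual construction, and the two mechanisms it does commit to do not work.
\begin{itemize}
\item With private colors $\alpha_i\neq\alpha_j$, an edge $x_ux_w$ between $S_i$ and $S_j$ never prevents both endpoints from taking their special colors, so the ``complemented'' adjacency checks nothing. Worse, all unselected vertices get the shared color $\beta$, so any such edge between two unselected vertices is monochromatic, and yes-instances are destroyed.
\item Adjacencies to the palette only impose lists, i.e., unary constraints, on the independent set $S_i$. No unary constraints can force ``exactly one vertex of $S_i$ gets $\alpha_i$''.
\end{itemize}
Encoding a one-out-of-$\ell$ choice with $\Oh(1)$ colors per gadget, in a $2P_2$-free way, is the heart of the proof; ``nested threshold copies'' only names it. The paper's selector works as follows.
\begin{itemize}
\item It has four independent layers $X_0,\dots,X_3$, each of size $\ell+1$.
\item Along each layer the lists are $\{c_{a+1}\}$, then $\{c_a,c_{a+1}\}$, then $\{c_a\}$.
\item Cyclically consecutive layers are joined by chain graphs, with $x_{a,p}x_{a+1,q}$ an edge iff $p\geq q$. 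Opposite layers, whose lists are disjoint, are joined completely.
\end{itemize}
Let $t_a$ be the largest index colored $c_{a+1}$ in $X_a$. The chain graphs give $t_a\leq t_{a+1}$, and the cycle $t_0\leq t_1\leq t_2\leq t_3\leq t_0$ forces a common threshold, which is the selected vertex. The same cyclic structure, together with the complete opposite-layer joins, gives $2P_2$-freeness: two opposite-layer edges forming an induced $2P_2$ would give $p<q<p'<q'<p$.

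Two further ingredients you would need are also absent.

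First, the input check. The paper introduces one vertex $y^{i,j}_{p,q}$ per \emph{non-edge}, with list $\{c^i_0,c^i_1,c^j_0,c^j_1\}$ and threshold-shaped neighborhood $X^i_2\cup\{x^i_{1,p'} : p'\leq p+1\}\cup\{x^i_{3,p'} : p'\geq p\}$ (and likewise in $S^j$). This vertex sees both $c^i_0$ and $c^i_1$ exactly when $S^i$ selects $p$. All pairs of selector gadgets, and each $Y^{i,j}$ together with every $S^h$ for $h\notin\{i,j\}$, are then made complete. This is harmless because the lists involved are disjoint, and it keeps the $2P_2$ case analysis confined to $S^i\cup Y^i$.

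Second, the palette clique. It is the standard list-elimination step, but it preserves $2P_2$-freeness only under conditions you do not identify:
\begin{itemize}
\item a palette edge $c_ac_b$ together with an edge $xy$ is an induced $2P_2$ exactly when $a,b\in L(x)\cap L(y)$;
\item an edge $c_aw$ together with $xy$ is an induced $2P_2$ when $a\in L(x)\cap L(y)$, $a\notin L(w)$, and $w$ is anticomplete to $\{x,y\}$.
\end{itemize}
The paper therefore builds the list instance so that every edge $xy$ satisfies $|L(x)\cap L(y)|\leq 1$, and so that the common color lies in the list of every vertex nonadjacent to both $x$ and $y$; it verifies these properties separately. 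Without a gadget meeting these conditions, your steps one to three cannot be carried out.
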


We remark that this lower bound matches the running time of the algorithm of Chudnovsky, King, Pilipczuk, Rzążewski, and Spirkl~\cite{DBLP:journals/siamdm/ChudnovskyKPRS21} for a superclass of $2P_2$-free graphs.
Indeed, they showed that in $n$-vertex $P_5$-free graphs, \lcol{$k$} (and even some generalizations of the problem) can be solved in time $n^{\Oh(k)}$.

\cref{thm:p4sp1algo,thm:2p2hard}, together with polynomial-time algorithms~\cite{DBLP:journals/corr/abs-2505-00412,DBLP:journals/combinatorica/ChudnovskyHS24,DBLP:journals/siamdm/HajebiLS22,DBLP:journals/algorithmica/0001GKP15,DBLP:journals/algorithmica/HoangKLSS10} and \NP-hardness results~\cite{DBLP:journals/iandc/GolovachPS14,DBLP:journals/ejc/Huang16,DBLP:journals/algorithmica/0001GKP15} listed above, yield a complete picture of the parameterized complexity of \lcol{$k$} in $H$-free graphs when parameterized by $k$.

Actually, combining all these results \emph{almost} gives the same dichotomy for \col{$k$}.
The only missing ingredient concerns $(P_4+P_2)$-free graphs: while \lcol{$5$} is known to be \NP-hard in this class~\cite{DBLP:journals/algorithmica/0001GKP15},
no hardness result was known for \col{$k$}, for any constant $k$.
We fill this gap by proving the following result.

\begin{restatable}{theorem}{thmsixcoloring}
    \label{thm:p4p2hard}
    For any $k \geq 6$, \col{$k$} is \NP-hard in $(P_4+P_2)$-free graphs.
\end{restatable}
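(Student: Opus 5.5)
The plan is to reduce from \lcol{$5$} in $(P_4+P_2)$-free graphs, which is \NP-hard by~\cite{DBLP:journals/algorithmica/0001GKP15}. We get rid of the lists by attaching a clique of ``palette'' vertices, and we must check that this attachment creates no induced $P_4+P_2$. It is enough to treat $k=6$. For $k>6$ we add a universal clique on $k-6$ new vertices. A universal vertex cannot lie in an induced $P_4+P_2$, because that graph is disconnected and has no dominating vertex. So freeness is preserved, and $G$ is $6$-colorable exactly when the augmented graph is $k$-colorable.

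\textbf{Construction.} Let $(G,L)$ be an instance of \lcol{$5$} with $G$ being $(P_4+P_2)$-free. The naive step is to add a clique $K=\{c_1,\dots,c_5\}$ and join each $v$ to every $c_i$ with $i\notin L(v)$. Then any proper coloring of $G+K$ can be renamed so that $c_i$ gets color $i$, and the $5$-colorings of $G+K$ correspond exactly to list colorings of $(G,L)$.

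This is likely to create an induced $P_4+P_2$. Take for instance an edge $c_iv$ together with a $P_4$ in $G$ that is anticomplete to both. This is why the reduction should use $6$ colors rather than $5$. The intended use of the sixth color is to add a vertex $u$ complete to $K$ and to all of $V(G)$. Then $u$ is universal and harmless, but it only shifts the counting by one. So the real work must come from exploiting structure in the hard instances themselves. My concrete plan is therefore not to treat the cited reduction as a black box. Instead I would open it up: take the \NP-hardness reduction for \lcol{$5$} in $(P_4+P_2)$-free graphs (or rebuild one from \textsc{3-Sat} or \col{$3$}) and arrange that every list has a specific form. The target is that each vertex misses at most one or two colors, and that vertices with the same list form a module-like set. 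Then the palette attachment replaces each list by adjacency to a few palette vertices. With the sixth color available, the palette can be split into two triangles, each vertex is joined to a single triangle-vertex or to nothing, and both the variable gadgets and the clause gadgets are complete multipartite-like pieces.

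\textbf{Key steps in order.} (1) Design gadgets: variable vertices, clause cliques, and the palette. Make every part of the graph either a clique or complete to most other parts, so that the complement is sparse. Note that $P_4+P_2$-freeness is the same as the complement being free of the complement of $P_4+P_2$, which is a $P_4$ completely joined to a non-edge. (2) Prove correctness: the graph is $6$-colorable iff the formula is satisfiable. Use the palette to fix color names and read truth values from which palette color a variable vertex avoids. (3) Prove $(P_4+P_2)$-freeness by case analysis. An induced $P_4+P_2$ contains an induced $P_4$. Classify where an induced $P_4$ can live: within a gadget, or across the palette and a gadget. Then show that any such $P_4$ dominates all but a few vertices, so that no edge remains anticomplete to it.

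The main obstacle is step (3), together with choosing gadgets so that step (3) goes through. The palette edges are exactly what tends to produce the $P_2$ component far from a $P_4$. I would first try to make every palette vertex adjacent to all but a small, structured set of vertices, which corresponds to lists that are large (missing few colors). Then a $P_2$ using a palette vertex cannot be anticomplete to a $P_4$ elsewhere. I would work in the complement graph to verify this.
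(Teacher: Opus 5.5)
Your reduction from general $k\geq 6$ to $k=6$ by adding universal vertices is correct and matches the paper. Beyond that, the proposal is a plan rather than a proof. Steps (1)--(3) are stated only as goals: no gadget, graph, or list structure is ever defined, so neither the equivalence of instances nor the $(P_4+P_2)$-freeness can be checked. You correctly see that the black-box palette reduction from \lcol{$5$} fails, since an edge $c_iv$ or $c_ic_j$ can be anticomplete to an induced $P_4$ of $G$. You also see that a universal sixth vertex buys nothing. But the missing idea is exactly a concrete \NP-hard source whose constraints can be encoded by adjacency to a palette clique without creating an induced $P_4+P_2$, and the proposal leaves this open. Your guiding heuristic (lists missing only one or two colors, palette vertices adjacent to almost everything) is also not what makes the paper's construction work: there, every non-palette vertex misses three or four of the six colors.

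The paper reduces from $3$-\textsc{Precoloring Extension} on a bipartite graph $G'$ with sides $X',Y'$. The instance is normalized so that each side contains exactly one precolored vertex of each color, $x_i\in X'$ and $y_i\in Y'$. These six vertices form a clique $Q$, and $X'$ is made complete to $Y'$. This is where the sixth color is genuinely used: after renaming, $x_i$ gets $i$ and $y_i$ gets $i'$, so $X=X'\setminus Q$ is forced onto $\{1,2,3\}$ and $Y=Y'\setminus Q$ onto $\{1',2',3'\}$. Each edge $e=xy$ of $G'$ gets vertices $z_{1,e},z_{2,e},z_{3,e}$, where $z_{i,e}$ is adjacent to $x$, $y$, and $Q\setminus\{x_i,y_i\}$, giving it effective list $\{i,i'\}$. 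The sets $Z_i$ and $Z_j$ are complete to each other for $i\neq j$. Then $z_{i,e}$ is uncolorable exactly when $\{c(x),c(y)\}=\{i,i'\}$, i.e., when the projected $3$-coloring is improper on $e$. Freeness follows from a short case analysis:
\begin{itemize}
\item $G[X\cup Y]$ and $G[Z]$ are complete multipartite;
\item $Q$ is dominated by any pair $x\in X$, $y\in Y$, and by any pair $z\in Z_i$, $z'\in Z_j$ with $i\neq j$;
\item $G[X\cup\{x_i\}\cup Z_i]$ is $P_4$-free, since each $z\in Z_i$ has at most one neighbor in the independent set $X\cup\{x_i\}$.
\end{itemize}
Some explicit construction of this kind, together with such a verification, is what your proposal is missing.
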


While the complexity of \col{$5$} in $(P_4+P_2)$-free graphs remains open, \cref{thm:p4p2hard} is sufficient to obtain the following dichotomy.

\begin{restatable}{theorem}{thmdicho}
    \label{thm:parameterized-dichotomy}
    Let $H$ be a fixed graph on at least two vertices.
    The \col{$k$} and \lcol{$k$} problems in $H$-free graphs, when parameterized by $k$, are
    \begin{itemize}
        \item \FPT if $H$ is an induced subgraph of $P_4 + sP_1$ for some $s \geq 0$;
        \item \XP and \Wone-hard if $H$ contains $2P_2$ as an induced subgraph and is an induced subgraph of $P_5 + sP_1$ or $sP_3$ for some $s \geq 0$;
        \item \textsf{paraNP}-hard otherwise.
    \end{itemize}
\end{restatable}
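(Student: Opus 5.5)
The dichotomy follows by combining \cref{thm:p4sp1algo,thm:2p2hard,thm:p4p2hard} with the known results from the introduction. The only real work is a case analysis on $H$ showing that the three bullets cover all graphs $H$ on at least two vertices, and that the right result applies in each case. Throughout, I use that if $H'$ is an induced subgraph of $H$, then $H'$-free graphs are $H$-free. Hence algorithms transfer downward from $H$ to $H'$, and hardness transfers upward from $H'$ to $H$. I also use that \col{$k$} is the special case of \lcol{$k$} with all lists equal to $[k]$. So it suffices to prove algorithms for the list version and hardness for the non-list version.

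\textbf{First bullet.} If $H$ is an induced subgraph of $P_4+sP_1$, then \cref{thm:p4sp1algo} gives the \FPT algorithm directly.

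\textbf{Second bullet.} Suppose $H$ contains $2P_2$ and is an induced subgraph of $P_5+sP_1$ or of $sP_3$. Membership in \XP comes from the polynomial-time algorithms for fixed $k$: for $P_5+sP_1$ this is the $n^{\Oh(k)}$-type algorithm, and for $sP_3$ the cited results. I should check that the cited algorithms run in time $n^{f(k)}$ uniformly, not merely in polynomial time for each fixed $k$ with unrelated algorithms. I expect the cited results are indeed of this form. \Wone-hardness follows from \cref{thm:2p2hard}, since $2P_2$-free graphs are $H$-free.

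\textbf{Third bullet.} Here $H$ is neither an induced subgraph of $P_4+sP_1$ nor one of the second-bullet graphs.
\begin{itemize}
\item If $H$ is not a linear forest, classical \NP-hardness of \col{$3$} gives \textsf{paraNP}-hardness.
\item If $H$ is a linear forest containing $P_6$, then \col{$4$} is \NP-hard on $P_7$-free graphs~\cite{DBLP:journals/ejc/Huang16}. Since $P_7$ contains $P_6$, this alone does not give hardness for $P_6$-free graphs. Instead I use the list $5$-coloring hardness results~\cite{DBLP:journals/iandc/GolovachPS14,DBLP:journals/ejc/Huang16,DBLP:journals/algorithmica/0001GKP15}, together with the known \NP-hardness of $k$-\textsc{Coloring} for fixed $k$ on $P_6$-free graphs (Huang shows $5$-\textsc{Coloring} is \NP-hard on $P_6$-free graphs).
\end{itemize}
The main obstacle is the combinatorial lemma describing the remaining linear forests. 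Let $H$ be a $P_6$-free linear forest that is not an induced subgraph of $P_4+sP_1$, $P_5+sP_1$, or $sP_3$. I would argue by cases on the longest component:
\begin{itemize}
\item If the longest component is $P_5$, then $H$ has another component with an edge, so $H \supseteq P_5+P_2 \supseteq P_4+P_2$.
\item If the longest component is $P_4$, then $H$ has another component with an edge, so $H \supseteq P_4+P_2$.
\item If every component has at most three vertices, then $H$ is an induced subgraph of $sP_3$ for large $s$.
\end{itemize}
So $H$ contains $P_4+P_2$, and \cref{thm:p4p2hard} gives \NP-hardness of \col{$6$}, hence \textsf{paraNP}-hardness.

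\textbf{Remaining check.} The first two bullets must be exhaustive over their side. Every induced subgraph of $P_5+sP_1$ or $sP_3$ either contains $2P_2$ or is $2P_2$-free. A $2P_2$-free linear forest has at most one component with an edge, and that component has at most four vertices (since $P_5 \supseteq 2P_2$). Hence it is an induced subgraph of $P_4+sP_1$, which gives exhaustiveness.
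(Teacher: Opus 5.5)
Your proposal is correct and is essentially the paper's argument: the dichotomy comes from combining \cref{thm:p4sp1algo,thm:2p2hard,thm:p4p2hard} with the known \XP algorithms and \NP-hardness results, and your observation that every $P_6$-free linear forest not contained in $P_5+sP_1$ or $sP_3$ contains $P_4+P_2$ is exactly why the paper calls $(P_4+P_2)$-free graphs the only missing case. Two small remarks: in the $P_6$ case, Huang's \NP-hardness of \col{$5$} on $P_6$-free graphs already suffices for both problems, so the list-coloring citations are superfluous; and your uniformity worry for $P_5+sP_1$ is settled by the $n^{\Oh(k^3)}$ bound of \cref{thm:lcol-p5sp1}.
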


Next, we continue studying the parameterized complexity of \col{$k$} in $P_t$-free graphs, but we introduce a change in the setting:
we fix the number of colors to three and instead parameterize by the length $t$ of the forbidden induced path.
Recall that it is plausible that for every fixed $t$, \col{$3$} can be solved in polynomial time in $P_t$-free graphs. However, known algorithms for small cases become increasingly complex as $t$ grows~\cite{DBLP:journals/dm/RanderathST02,DBLP:journals/dam/RanderathS04,DBLP:journals/combinatorica/BonomoCMSSZ18}.
Thus, it is natural to ask whether \col{$3$} in $P_t$-free graphs is \Wone-hard when parameterized by $t$.
This question was stated by Golovach, Johnson, Paulusma, and Song~\cite{DBLP:journals/jgt/GolovachJPS17}; see also~\cite{DBLP:journals/dagstuhl-reports/ChudnovskyPS19}.
As our third result, we answer this question in the affirmative.

\begin{restatable}{theorem}{thmpt}
    \label{thm:pthard}
    \col{$3$} in $P_t$-free graphs is \Wone-hard when parameterized by $t$.
    Furthermore, unless the ETH fails, the problem cannot be solved in time $f(t) \cdot n^{o(t / \log t)}$ on $n$-vertex instances for any computable function $f$.
\end{restatable}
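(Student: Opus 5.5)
We will reduce from \Clique with $\ell$ color classes; under the ETH this problem admits no $f(\ell)\cdot n^{o(\ell)}$ algorithm. Ideally we would reduce from \MSI with a pattern graph of bounded degree, which gives the $n^{o(t/\log t)}$ bound via Marx's lower bound. The plan has three steps. First, we build an instance of \lcol{$3$}. Second, we remove the lists using standard gadgets. Third, and hardest, we keep the length of the longest induced path bounded by $\Oh(\ell)$, or by $\Oh(|E(H)|)$ for a pattern $H$.

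\textbf{Construction.} We start from a pattern graph $H$ with $m=|E(H)|$ edges and maximum degree at most $3$. For each pattern vertex $x$ we create a selection gadget $S_x$. This is a long clique-like structure on the candidate vertices $V_x$ of $G$, in which every 3-coloring encodes the choice of one vertex $v\in V_x$. The encoding is a one-hot pattern: exactly one block receives color $1$, and all other blocks receive color $2$. For each pattern edge $xy$ we add a verification gadget. For every pair $(u,v)\in V_x\times V_y$ that is a non-edge of $G$, this gadget forbids the combination ``$u$ chosen'' and ``$v$ chosen''.

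To control induced paths, we make the whole construction dense. We use a global palette triangle $\{a_1,a_2,a_3\}$. All ``don't care'' vertices become adjacent to large parts of the graph through complete or anticomplete joins. The intended consequence is that any induced path can spend only $\Oh(1)$ vertices inside a single gadget before it is forced to leave. This is the same mechanism that makes cographs and threshold-like pieces $P_4$-free.

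The lists are simulated by adjacency to the triangle: a vertex with list $L$ is made adjacent to $a_i$ for every $i\notin L$. Adjacency to the triangle creates no long induced paths, because the triangle has only three vertices.

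\textbf{Bounding induced paths.} The central claim is that every induced path in the constructed graph visits $\Oh(1)$ vertices per gadget. It therefore has at most $\Oh(\ell^2)$ vertices, or $\Oh(m)$ for a bounded-degree pattern, which gives $t=\Oh(m)$. Choosing the pattern to be a bounded-degree expander then yields the $n^{o(t/\log t)}$ ETH bound. \Wone-hardness follows because $t$ depends only on the parameter.

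To prove the claim, we would make each gadget a disjoint union of $P_4$-free pieces (cographs), whose vertices have complete or anticomplete adjacency to the outside. The non-edge checks must be encoded by complete bipartite connections between vertices of different one-hot blocks, not by matchings.

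\textbf{Main obstacle.} The key difficulty is the one-hot encoding inside $S_x$. It must use only 3 colors and have bounded induced path length independent of $|V_x|$. Naive encodings, such as long chains of equality or inequality gadgets, create paths of length $\Theta(n)$. We would first try to exploit that in a complete multipartite-type structure with the palette, ``exactly one vertex gets color $1$'' can be forced by making the candidate vertices a clique minus a matching, then attaching partners. If this fails, we would fall back to encoding each choice in binary with $\Oh(\log n)$ bits per pattern vertex. That costs path length $\Oh(\log n)$, which is unacceptable. In that case we would instead need a direct $\Oh(1)$-depth gadget for which we check the equivalence of colorings and cliques carefully.
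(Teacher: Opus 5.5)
Your outline has the right global shape. You reduce from \MSI with a subcubic pattern so that $t=\Oh(|V(H)|)$, encode each choice one-hot, and remove lists with a triangle, which roughly doubles the longest induced path. But the component that actually proves the theorem is missing, as you note yourself. You have no selector gadget whose list $3$-colorings are exactly the one-hot patterns and which is $P_c$-free for a constant $c$ independent of the number of candidates.

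The ideas you offer for this gadget do not work:
\begin{itemize}
\item A clique minus a matching on the candidates has clique number at least half its order. So it is not $3$-colorable once there are $7$ candidates.
\item More generally, ``making the construction dense'' through complete joins conflicts with $3$-colorability: completely joined sets must use disjoint color sets.
\end{itemize}
The paper's gadget is sparse. It uses triangles $\{a_i,b_i,c_i\}$ whose lists allow exactly two colorings. Edges $a_ib_j$ for $i<j$ force the type of coloring to switch at a single threshold $t$. Small pendant pieces then turn this threshold into the one-hot pattern on the ports. The gadget is $P_{13}$-free for two reasons. First, $\{a_i,b_i\}_i$ induces a chain graph, which is $2P_2$-free. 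Second, every other component is $P_5$-free and has a clique (a vertex or an edge) as its neighborhood, so an induced path meets the chain graph in a single interval.

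Two further ingredients are also missing.

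\emph{Verification.} Suppose you join ports of two vertex selectors that both use ``$1$ = chosen, $2$ = not chosen''. Then an edge forbids two unchosen ports as well. Using different encodings on the two sides cannot be done consistently along a non-bipartite pattern. Moreover, edges following non-edges of $G'$ can form large induced matchings between two selectors, which immediately gives long induced paths. The paper instead adds, for each pattern edge $ij$, a $(1,3)$-selector whose ports are the host edges in $E^{ij}$. It joins $\port(ab)$ to $\port(a')$ for all $a'\neq a$, and symmetrically on the $j$ side. Each such bipartite graph is the bipartite complement of a disjoint union of stars, so it has no induced matching with three edges.

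\emph{Global path bound.} ``$\Oh(1)$ vertices per visit to a gadget'' does not bound the path length, because an induced path may leave and re-enter gadgets many times. You must bound the number of crossing edges. \cref{lem:combinept} does this. The edges of an induced path split into three induced matchings, so between two parts with induced matching number $b$ the path has at most $3b$ edges. This gives $t'=216r+13$ for the list instance, and then $t=2t'+1$ after the triangle.
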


Thus, even if it turns out that for every fixed $t$, \col{$3$} can be solved in polynomial time in $P_t$-free graphs,
\cref{thm:pthard} rules out a uniform algorithm with running time $f(t) \cdot n^{\Oh(1)}$ for any computable function $f$ (unless \FPT = \Wone).

\bigskip
Finally, we somewhat diverge from the main topic of the paper and observe that our approach from the proof of \cref{thm:p4sp1algo} has an interesting graph-theoretic consequence.
For $k \geq 3$, we say that a graph is \emph{vertex-$(k+1)$-critical} if it is not $k$-colorable, but every proper induced subgraph of it is $k$-colorable.
An active line of research in graph theory is to study the structure of vertex-$k$-critical graphs in various graph classes~\cite{MaffrayMorel2012,GoedgebeurSchaudt2018,DBLP:journals/siamdm/ChudnovskyGSZ20,CameronEtAl2021,CaiGoedgebeurHuang2023,BeatonCameron2025,BelavadiKarthick2026,BeatonCameron2026Bipartite,ChudnovskyEtAl2020P6}.
In particular, if there are only finitely many vertex-$k$-critical graphs in a graph class, then the \col{$k$} problem can be solved in polynomial time in this class. Furthermore, such an algorithm is \emph{certifying}, i.e., in case of a no-instance, it can output a small certificate of non-$k$-colorability.

Full classification of pairs $(k,H)$ for which there are only finitely many vertex-$k$-critical $H$-free graphs is known only for $k \leq 4$.
Furthermore, it is known for $k \geq 5$ that there are infinitely many vertex-$k$-critical $H$-free graphs whenever $H$ is not an induced subgraph of $P_4 + sP_1$ for some $s \geq 0$~\cite{Erdos1959,HoangEtAl2015,DBLP:journals/dam/CameronHS22}.
It was conjectured by Cameron, Hoàng, and Sawada~\cite{DBLP:journals/dam/CameronHS22} that for every fixed $s$ and $k$, there are only finitely many vertex-$k$-critical $(P_4+sP_1)$-free graphs.
Confirming this conjecture would complete the dichotomy for finiteness of the family of vertex-$k$-critical $H$-free graphs.
This problem was later studied by several authors, but only partial results were obtained~\cite{DBLP:journals/dam/CameronHS22,AbuadasEtAl2024,BeatonCameron2025,BeatonCameron2026Subfamilies,BelavadiKarthick2026,BeatonCameron2026Bipartite}.
Using our approach, we can confirm the conjecture in its full generality.

\begin{restatable}{theorem}{thmobstructions}
    \label{thm:obstructions}
    For every fixed $s$ and $k$, there are finitely many vertex-$k$-critical $(P_4+sP_1)$-free graphs.
\end{restatable}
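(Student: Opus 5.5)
We derive \cref{thm:obstructions} from the structure behind the algorithm of \cref{thm:p4sp1algo}, instead of from the algorithm as a black box. The target is a \emph{bounded certificate} statement: there is a function $g(k,s)$ such that every non-$k$-colorable $(P_4+sP_1)$-free graph $G$ contains an induced subgraph on at most $g(k,s)$ vertices that is itself not $k$-colorable. A vertex-$(k+1)$-critical graph has no proper induced subgraph that is non-$k$-colorable, so it then has at most $g(k,s)$ vertices. Up to isomorphism there are only finitely many such graphs, which gives the theorem (with $k$ shifted by one).

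To get $g$, we argue by induction on $s$, with $s=0$ the base case. A $P_4$-free graph is a cograph. A non-$k$-colorable cograph has a clique of size $k+1$, because cographs are perfect, so $g(k,0)=k+1$. For the inductive step, let $G$ be $(P_4+sP_1)$-free with $s\ge 1$. If $G$ is also $(P_4+(s-1)P_1)$-free, we use induction. Otherwise $G$ has an induced $Q=P_4+(s-1)P_1$. Every vertex outside $N[Q]$ would extend $Q$ to $P_4+sP_1$, so $G=N[Q]$, and $|Q|=s+3$.

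Now we follow the branching of the algorithm. We fix a $k$-coloring $c$ of the bounded set $Q$; there are at most $k^{s+3}$ of them. Propagating $c$ gives every other vertex a list of size at most $k-1$. We then iterate, measuring progress by the total list size. Each branching step guesses a bounded amount of structure, such as the colors on a bounded set $X_c$, and reduces to instances in which a potential function depending only on $k$ and $s$ decreases. Each leaf either fails or is solved by a finite rule, for instance the list-coloring of a cograph-like structure, or a bounded-size obstruction. Whenever a branch leads to a contradiction, the contradiction is witnessed by a bounded set $W_c$ of vertices. The certificate is the union of $Q$ with all the sets $X_c$ and $W_c$ over the whole branching tree. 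The tree has depth and branching degree bounded in terms of $k$ and $s$, so the union has bounded size. Because every branching decision depends only on vertices in the union, the induced subgraph on the union reproduces every failure, and it is therefore not $k$-colorable.

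The main obstacle is the list version. Partial colorings create lists, and critical graphs concern plain colorings. We therefore need the stronger hereditary statement: every non-colorable list instance $(G,L)$ with $G$ being $(P_4+sP_1)$-free has a non-colorable induced subinstance of size bounded by $g(k,s)$. The reductions in the algorithm of \cref{thm:p4sp1algo} must then be checked to be \emph{local}, meaning that each deduction "$v$ cannot take color $i$" is justified by a bounded set of vertices. This is where we expect the real work. The delicate steps are those that use global arguments, such as dominating-set or reduction rules on large homogeneous sets. For such a step we would first try a Ramsey/pigeonhole argument: keep only boundedly many representatives of each list type and adjacency pattern to the bounded guessed set, and argue that these representatives already force the same deductions.
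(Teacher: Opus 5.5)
\textbf{There is a genuine gap.} Your reduction of the theorem to a bounded-certificate statement is correct; it is equivalent to a size bound on minimal obstructions, which is also how the paper proceeds in \cref{thm:min-list-obstructions}. The observation that $G=N[Q]$ whenever $G$ contains an induced $Q\cong P_4+(s-1)P_1$ is also correct. From that point on, however, the proposal describes a hoped-for argument rather than giving one:
\begin{itemize}
\item You never say what is branched on, what the potential is, or why it decreases.
\item The induction on $s$ is never used in the main case, since $G-Q$ is still only $(P_4+sP_1)$-free.
\item Even your first step fails in the list setting you say you need. If $v\in N(Q)$ already lacks the color of its neighbor in $Q$, propagation does not shrink $L(v)$, so total list size need not drop.
\end{itemize}
The central sentence, ``every branching decision depends only on vertices in the union, so the induced subgraph on the union reproduces every failure'', is the whole content of the theorem, and it is not a sound general principle. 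Forced colors can propagate along arbitrarily long chains (lists of size two behave like 2-SAT implications), so even pure propagation is not local. Concretely, for $k\geq 5$ there are infinitely many vertex-$k$-critical $P_5$-free, even $2P_2$-free, graphs, although \col{$k$} is polynomial-time solvable in these classes for every fixed $k$. The locality therefore has to come from a property specific to $P_4+sP_1$, and your sketch identifies none beyond $G=N[Q]$. The deferred ``Ramsey/pigeonhole on representatives'' step is exactly where the proof would have to happen. Note also that the algorithm of \cref{thm:p4sp1algo} does not branch at all.

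\textbf{The missing idea} is to stop certifying deductions and instead show that a large instance always contains a nonempty set whose deletion preserves non-colorability. This contradicts minimality directly. Take a minimal obstruction $(G,L)$ with $|V(G)|>kM$.
\begin{itemize}
\item If $\alpha(G)<M$, then colorability of $G-v$ gives $|V(G)|\leq k(M-1)+1$.
\item A $K_{k+1}$ would be a smaller obstruction.
\item Otherwise, \cref{lem:extract-is} turns a large independent set into a set $S$ such that every vertex sees at most $s$, or all but at most $s$, vertices of $S$.
\end{itemize}
Then \cref{lem:components}, which is where $(P_4+sP_1)$-freeness is really used, shows that $W=\{x : |N(x)\cap S|\leq s\}$ induces a cograph with at least $q$ components meeting $S$. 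It also shows that every vertex of $D=V(G)\setminus W$ is complete to all but at most $s$ of these components.

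Finally, apply \cref{lem:reduce}. For each $K\subseteq[k]$, keep $sk+1$ distinguished components that are not colorable from $L|K$. Let $Z$ be the union of the remaining ones; it is nonempty because $q>(sk+1)2^k$. Given a coloring $\phi$ of $G-Z$, let $A=\phi(D)$. At most $sk$ surviving components use a color of $A$. Hence every deleted component is colorable from $L|([k]\setminus A)$, and these colors clash with nothing on $D$. So $(G-Z,L)$ is still a no-instance, contradicting minimality. Every minimal obstruction therefore has at most $kM=2^{\Oh(k^2)}$ vertices.
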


Combining this with known results~\cite{Erdos1959,HoangEtAl2015,DBLP:journals/siamdm/ChudnovskyGSZ20,DBLP:journals/dam/CameronHS22,ChudnovskyEtAl2020P6}, we obtain the following dichotomy.

\begin{theorem}\label{thm:obstructions-dichotomy}
    Let $H$ be a fixed graph on at least two vertices.
    For $k\geq 3$, the family of vertex-$k$-critical $H$-free graphs is finite if and only if
    \begin{enumerate}
        \item $k=3$ and $H$ is a linear forest, or
        \item $k=4$ and $H$ is an induced subgraph of a graph in $\{P_6, 2P_3\} \cup \bigcup_{s \geq 0} \{P_4 + sP_1\}$, or
        \item $k \geq 5$ and $H$ is an induced subgraph of a graph in $\bigcup_{s \geq 0} \{P_4 + sP_1\}$.
    \end{enumerate}
\end{theorem}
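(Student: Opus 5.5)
This theorem is a combination of \cref{thm:obstructions} with results from the literature, so I would prove it by case analysis on $k$ and $H$, establishing finiteness and infiniteness separately. Throughout I use that if $H'$ is an induced subgraph of $H$, then every $H'$-free graph is $H$-free. Consequently finiteness for $H$ passes down to $H'$, and infiniteness for $H'$ passes up to $H$.

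\textbf{The case $k\ge 5$.} For the ``if'' direction, let $H$ be an induced subgraph of $P_4+sP_1$. Then every $H$-free graph is $(P_4+sP_1)$-free. By \cref{thm:obstructions} (with $k-1\ge 4$ colors in the notion of vertex-$k$-criticality), there are only finitely many vertex-$k$-critical $(P_4+sP_1)$-free graphs, hence finitely many $H$-free ones. For the ``only if'' direction, I split by the structure of $H$.
\begin{itemize}
\item If $H$ contains a cycle, I would use Erd\H{o}s' theorem~\cite{Erdos1959}: there are $k$-chromatic graphs of arbitrarily large girth. Each contains a vertex-$k$-critical induced subgraph with at least as large girth, which is therefore $H$-free. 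Since these subgraphs have unbounded girth, they are pairwise distinct, giving infinitely many.
\item If $H$ contains a claw, I would invoke the known infinite families of critical claw-free graphs, e.g.\ built from complements of graphs, as in~\cite{HoangEtAl2015,DBLP:journals/dam/CameronHS22}.
\item Otherwise $H$ is a linear forest not contained in any $P_4+sP_1$. Then $H$ contains $2P_2$, $P_5$, or $P_3+P_2$-type substructures; more precisely, it contains $2P_2$ or $P_5$ as an induced subgraph. For these I would cite the infinite families of $2P_2$-free (hence $P_5$-free) vertex-$k$-critical graphs for $k\ge5$ from~\cite{HoangEtAl2015}.
\end{itemize}

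\textbf{The cases $k=3$ and $k=4$.} I would import the known classifications. For $k=3$, every linear forest $H$ is an induced subgraph of some $P_t$. Finiteness for $P_t$-free graphs is due to Chudnovsky et al.~\cite{DBLP:journals/siamdm/ChudnovskyGSZ20}, and non-linear-forest $H$ is handled by the girth and claw arguments above. For $k=4$, finiteness for $P_6$-free graphs is~\cite{ChudnovskyEtAl2020P6}, and for $2P_3$ and $P_4+sP_1$ it is from~\cite{DBLP:journals/dam/CameronHS22} (or follows from \cref{thm:obstructions} with $k=4$). Infiniteness for all remaining $H$ comes from the constructions in~\cite{HoangEtAl2015,DBLP:journals/dam/CameronHS22,ChudnovskyEtAl2020P6}.

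\textbf{Main obstacle.} The only new ingredient is \cref{thm:obstructions}, which is assumed here. The real work is bookkeeping: checking that every minimal $H$ outside the listed families, such as $P_7$, $P_5+P_1$, or $3P_2$ for $k=4$, is covered by a cited infinite family. I would verify this by enumerating the minimal linear forests that are not contained in any listed graph.
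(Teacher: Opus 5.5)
Your plan matches the paper, which obtains this dichotomy simply by combining \cref{thm:obstructions} with the cited literature. Your treatment of $k\geq 5$ is sound: Erd\H{o}s' high-girth graphs when $H$ has a cycle, a claw-free family when $H$ has a claw, and~\cite{HoangEtAl2015} otherwise, since every linear forest that is not an induced subgraph of some $P_4+sP_1$ contains $2P_2$. For the claw, a concrete family is the graphs on $\mathbb{Z}_n$ joining vertices at circular distance at most $k-2$, with $n\equiv 1 \pmod{k-1}$; complements are not what is needed there.

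Some attributions for small $k$ should be fixed. The case $k=3$ needs no citation, since the vertex-$3$-critical graphs are exactly the odd cycles. The reference~\cite{DBLP:journals/siamdm/ChudnovskyGSZ20} is the complete $k=4$ classification, including $2P_3$ and all disconnected $H$, with~\cite{ChudnovskyEtAl2020P6} supplying the $P_6$ case; you attribute parts of it to \cite{DBLP:journals/dam/CameronHS22} and \cite{HoangEtAl2015}. For $k=4$ the unique minimal outstanding linear forest is $2P_2+P_1$, which lies inside each of $P_7$, $P_5+P_1$ and $3P_2$, so those three are not minimal.
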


\paragraph{Our techniques.}
The algorithm of \cref{thm:p4sp1algo} is an exhaustive application of a single reduction rule: as long as the graph is large, we find a nonempty set $Z \subseteq V(G)$ such that $(G,L)$ and $(G-Z,L)$ are equivalent instances of \lcol{$k$}, delete $Z$, and repeat. Once the rule no longer applies, the graph has at most $2^{\Oh(k^2)}$ vertices, and we can afford to solve such a small instance, e.g., by an \XP algorithm.
The size threshold comes from the independence number: a $k$-colorable graph in which every independent set has fewer than $M = 2^{\Oh(k^2)}$ vertices has fewer than $kM$ vertices, so an instance on more than $kM$ vertices and with no independent set of size $M$ can be rejected right away. Thus the interesting case is that we are handed a large independent set $X$, and all the work goes into converting $X$ into a deletable set $Z$.
First, a Ramsey-type lemma allows us to either extract a clique of size $k+1$ -- in which case we reject the instance -- or split $V(G)$ into two sets $W$ and $D$, where $G[W]$ is a \emph{cograph}, i.e., is $P_4$-free.
In $G[W]$ we distinguish $q = 2^{\Oh(k)}$ components and argue that every vertex of $D$ is complete to all but at most $s$ of the distinguished components.

The heart of the proof is the \emph{reduction lemma} (\cref{lem:reduce}): we show that most of the distinguished components are actually irrelevant and can be deleted.
For every set $K \subseteq [k]$ of colors we keep $sk+1$ distinguished components that are not list colorable using only colors from $K$; as $q > (sk+1)2^k$, the union $Z$ of the remaining ones is nonempty, and we claim that it can be safely deleted. Indeed, consider a coloring $\phi$ of $G-Z$ and let $K$ be the set of colors unused on $D$. As every vertex of $D$ is complete to all but $s$ distinguished components, at most $sk$ surviving components use a color outside $K$, so fewer than $sk+1$ of them were kept for $K$. Thus every deleted component can be colored using only the colors from $K$, and such colorings extend $\phi$ to $G$.

Both proofs of \cref{thm:2p2hard,thm:pthard} are based on reductions from the well-known \MSI problem~\cite{DBLP:journals/toc/Marx10}, but with different pattern graphs.
We design two different \emph{selector gadgets} whose possible colorings encode a choice of a vertex in the instance of \MSI.

For \cref{thm:2p2hard}, the pattern graph is a clique (so, in other words, we reduce from \Clique).
We first construct a hard instance of \lcol{$k$}. The construction is carefully designed to remain $2P_2$-free and to satisfy some additional properties that allow us to eliminate the lists using the standard reduction to \col{$k$}.

For \cref{thm:pthard}, the pattern graph is assumed to be subcubic.
We design a selector gadget that excludes a fixed path as an induced subgraph,
and combine many copies of it to encode an instance of \MSI.
To control the length of induced paths in the resulting graph,
we use an auxiliary observation bounding it in terms of the induced-matching sizes between the gadgets.

The proof of \cref{thm:p4p2hard} is much simpler: we reduce from $3$-\textsc{Precoloring Extension} in bipartite graphs.
After some easy preprocessing we may assume that each side of the bipartition contains exactly three precolored vertices, one of each color.
These six vertices are turned into a clique, which splits the six colors into pairs $\{i,i'\}$, one for each color $i$ of the original instance,
and every edge $e$ is encoded by three vertices $z_{1,e},z_{2,e},z_{3,e}$, where $z_{i,e}$ can use only the colors $i$ and $i'$ and is adjacent to both endpoints of $e$.

Finally, \cref{thm:obstructions} is an immediate consequence of the reduction rule: applied to a graph that is not $k$-colorable, it either returns a proper induced subgraph that is still not $k$-colorable, i.e., a smaller obstruction, or reports that the graph is already small. Hence every minimal obstruction has at most $2^{\Oh(k^2)}$ vertices.

\newpage

\section{Preliminaries}
For a positive integer $n$, we write $[n]=\{1,\ldots,n\}$.

\paragraph{Graph theory.}
Let $G$ be a graph. By $V(G)$ and $E(G)$ we denote, respectively, the vertex set and the edge set of $G$.
For a vertex $v$, by $N(v)$ we denote the set of neighbors of $v$, and we write $N[v] = N(v) \cup \{v\}$.
For a set $X \subseteq V(G)$, we write $N[X] = \bigcup_{v \in X} N[v]$ and $N(X) = N[X] \setminus X$.
A graph is \emph{subcubic} if every vertex has degree at most three.

For a set $X \subseteq V(G)$, by $G[X]$ we denote the subgraph of $G$ induced by $X$,
and by $G-X$ we denote the graph $G[V(G) \setminus X]$.

Let $A,B$ be disjoint sets of vertices of a graph $G$, and let $v$ be a vertex.
We say that they are \emph{complete} (resp., \emph{anticomplete}) to each other if all edges (resp., no edges) between $A$ and $B$ exist.
In particular, a vertex $v$ is complete (resp., anticomplete) to a set $A$ if $\{v\}$ and $A$ are complete (resp., anticomplete) to each other.
A vertex $v$ is \emph{mixed} on a set $A$ if it is neither complete nor anticomplete to $A$.

By $E(A,B)$ we denote the set of edges with one endpoint in $A$ and the other in $B$. By $G[A,B]$ we denote the bipartite graph with vertex set $A \cup B$ and edge set $E(A,B)$.

A \emph{chain graph} is a bipartite graph with parts $X=\{x_1,\ldots,x_n\}$ and $Y=\{y_1,\ldots,y_n\}$ such that $x_i$ is adjacent to $y_j$ if and only if $i \leq j$.
It is straightforward to verify that a chain graph is $2P_2$-free.

\paragraph{Parameterized complexity, \Wone-hardness, and ETH.}
For all notions related to parameterized complexity, we refer the reader to the textbook of Cygan et al.~\cite{DBLP:books/sp/CyganFKLMPPS15}. We only recall the two source problems used in our reductions.

An instance of \MSI is a pair $(G,H)$ of graphs, where $G$ is called \emph{host} and $H$ is called \emph{pattern}.
Let $r$ be the number of vertices of $H$ and let $V(H)=[r]$.
The vertex set of $G$ is partitioned into $r$ sets $V^1,\ldots,V^r$.
The goal is to determine whether there is $v_i \in V^i$ for every $i \in [r]$,
so that, for every $ij \in E(H)$ we have $v_iv_j \in E(G)$.
Without loss of generality we may assume that each set $V^i$ is independent, all these sets are of equal size, and edges between $V^i$ and $V^j$ exist if and only if $ij \in E(H)$.
The \Clique problem is a special case of \MSI where $H$ is a clique.
For brevity, we will denote an instance of \Clique by $(G,r)$ instead of $(G,K_r)$.

The following theorem summarizes known hardness results for \MSI and \Clique.

\begin{theorem}[\cite{DBLP:journals/toc/Marx10,DBLP:conf/stacs/MarxP14,DBLP:conf/iwpec/EppsteinL18,DBLP:books/sp/CyganFKLMPPS15}]\label{thm:msi}
    Let $f$ be any computable function.
    \MSI on instances $(G,H)$ with $n$-vertex host graph $G$ and $r$-vertex pattern graph $H$:
    \begin{enumerate}
        \item is \Wone-hard and cannot be solved in time $f(r) \cdot n^{o(r)}$ if $H$ is a clique, unless the ETH fails.
        \item is \Wone-hard and cannot be solved in time $f(r) \cdot n^{o(r/\log r)}$ if $H$ is subcubic, unless the ETH fails.
    \end{enumerate}
\end{theorem}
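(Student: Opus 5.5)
The statement to prove is \cref{thm:msi}, which collects known hardness results. The plan is therefore to assemble it from the cited literature rather than prove it from scratch, checking only that the normalizations made in the preliminaries do not affect the bounds.

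For item~1 we use the standard result~\cite{DBLP:books/sp/CyganFKLMPPS15}: \textsc{Clique} parameterized by solution size $r$ is \Wone-hard, and under ETH it has no $f(r)\cdot n^{o(r)}$ algorithm. We then reduce to \Clique. Given $(G,r)$, take $r$ copies $V^1,\ldots,V^r$ of $V(G)$, each an independent set. Make $(u,i)$ adjacent to $(v,j)$ for $i\neq j$ exactly when $uv\in E(G)$. A multicolored clique in this graph uses distinct vertices of $G$, because copies of the same vertex are nonadjacent. Conversely, any $r$-clique in $G$ yields a multicolored clique by assigning its vertices to the classes in any order. The reduction multiplies the host size by $r$ and keeps the pattern size at $r$. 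Since $n^{o(r)}$ applied to $rn$ vertices is still $f'(r)\,n^{o(r)}$, the lower bound transfers.

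The properties assumed in the preliminaries hold automatically or cost nothing:
\begin{itemize}
\item The classes are independent by construction.
\item The classes have equal size by construction.
\item Edges occur only between classes $V^i,V^j$ with $ij\in E(H)$; for a clique pattern this holds trivially.
\end{itemize}

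For item~2 we invoke Marx~\cite{DBLP:journals/toc/Marx10}. His result states that \textsc{Partitioned Subgraph Isomorphism} with a pattern $H$ of $r$ edges cannot be solved in time $f(H)\cdot n^{o(r/\log r)}$ under ETH, even when $H$ ranges over cubic graphs (via expander embeddings). For bounded-degree patterns, the numbers of vertices and edges are within constant factors of each other, so the bound also holds in terms of $|V(H)|$. \Wone-hardness for this family follows from the same reduction, or from~\cite{DBLP:conf/stacs/MarxP14,DBLP:conf/iwpec/EppsteinL18}. To obtain the normalized form for a general subcubic $H$, we apply the same normalizations as in item~1:
\begin{itemize}
\item Delete all edges between classes $V^i,V^j$ with $ij\notin E(H)$ and all edges inside a class; no solution uses them, so this is safe.
\item Pad the classes to equal size with isolated dummy vertices. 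These can never be chosen, because they lack the edges that $H$ requires for every class incident to an edge; isolated vertices of $H$ may be discarded beforehand.
\end{itemize}

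The main obstacle is bookkeeping in item~2. The published bound is stated for the partitioned (colorful) variant with the parameter counted in edges of $H$. We must confirm that the variant matches our formulation and that the function $f$ may depend only on $r$. This holds because, for each fixed $r$, there are finitely many subcubic $H$ on $r$ vertices.
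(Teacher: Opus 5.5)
Your proposal is correct and matches the paper, which gives no proof of its own and simply cites these sources; obtaining item~1 from the ETH-tight bound for \textsc{Clique} via the copy-the-vertex-set reduction, and item~2 from Marx's bound for cubic patterns, plus the routine normalizations, is exactly what those citations amount to. One minor remark: your closing finiteness argument is unnecessary, because Marx's lower bound already excludes running times $f(H)\cdot n^{o(|E(H)|/\log |E(H)|)}$ for arbitrary $f$, and for cubic $H$ with $|E(H)|=3r/2$ this directly excludes $f(r)\cdot n^{o(r/\log r)}$.
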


\newpage



\section{\lcol{$k$} in $(P_4+sP_1)$-free graphs}
In this section we prove \cref{thm:p4sp1algo}.
Before we proceed to the proof, we introduce some basic tools.

\paragraph{Basic tools.} The following property is well-known and easy to observe~\cite{DBLP:journals/siamdm/ChudnovskyKPRS21}.

\begin{lemma}\label{lem:mixed}
    Let $G$ be a graph, let $C$ be a connected set and let $v \notin C$ be mixed on $C$.
    Then there are two adjacent vertices $x,y \in C$ such that $xv \in E(G)$ and $yv \notin E(G)$.
\end{lemma}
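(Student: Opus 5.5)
Since $v$ is mixed on $C$, the set $C$ splits into two nonempty parts: $A = C \cap N(v)$ and $B = C \setminus N(v)$. The plan is to use the connectivity of $G[C]$ to find an edge between $A$ and $B$. Its endpoint in $A$ will serve as $x$ and its endpoint in $B$ as $y$.

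Concretely, pick any $a \in A$ and $b \in B$. Because $C$ is connected, there is a path $a = u_1, u_2, \ldots, u_m = b$ in $G[C]$. Walk along this path and let $j$ be the smallest index with $u_j \notin N(v)$. Such an index exists because $u_m = b \notin N(v)$, and $j \ge 2$ because $u_1 = a \in N(v)$. By the minimality of $j$, the vertex $u_{j-1}$ is adjacent to $v$. Setting $x = u_{j-1}$ and $y = u_j$ gives two adjacent vertices of $C$ with $xv \in E(G)$ and $yv \notin E(G)$, as required.

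No step is a real obstacle here. The only point needing care is to use the path inside $G[C]$, so that $x$ and $y$ both lie in $C$. (Equivalently, if $E(A,B)$ were empty, $A$ and $B$ would disconnect $G[C]$.)
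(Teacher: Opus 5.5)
Your proof is correct and complete: taking the first vertex on a path in $G[C]$ from a neighbor of $v$ to a non-neighbor gives exactly the required edge inside $C$. The paper gives no proof of this lemma, calling it well-known and easy to observe and citing earlier work; your argument is the standard one.
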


A $P_4$-free graph is also known as a \emph{cograph}. Equivalently, cographs are exactly the graphs that can be constructed from single-vertex graphs by repeatedly applying the operations of disjoint union and join. In particular, every cograph with at least two vertices is either a disjoint union or a join of two cographs, and in the former case it is disconnected. From this recursive definition, we immediately obtain the following property of cographs.

\begin{lemma}\label{lem:dominator-cographs}
    Let $G$ be a connected cograph and let $I$ be an independent set in $G$ with $|I| \geq 2$.
    Then there is a vertex $v \in V(G) \setminus I$ that is complete to $I$.
\end{lemma}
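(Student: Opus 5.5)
We argue via the cotree decomposition described just before the statement. Since $G$ is a connected cograph and $I$ contains at least two vertices, $G$ has at least two vertices, so it is either a disjoint union or a join of two cographs. The first option is impossible, because a disjoint union of two nonempty graphs is disconnected. Hence $V(G)$ splits into two nonempty parts $A$ and $B$ that are complete to each other.

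First we locate $I$ inside one side. Every vertex of $A$ is adjacent to every vertex of $B$, and $I$ is independent, so $I$ cannot meet both parts. Without loss of generality $I \subseteq A$. We then have two cases.

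\textbf{Case $B \neq \emptyset$.} This always holds, because the join decomposition has nonempty parts. Any vertex $v \in B$ is complete to $A \supseteq I$. Moreover $v \notin I$, since $I \subseteq A$ and $A \cap B = \emptyset$. So $v$ is the required vertex, and the proof is finished.

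No induction is needed, because one split already produces the dominating vertex. The only step that needs any care is justifying the decomposition itself. We use that a connected cograph on at least two vertices is a join of two nonempty cographs. This follows from the recursive definition: every cograph on at least two vertices arises as a union or a join of two smaller cographs, and connectivity excludes the union. Note also that the independence of $I$ and $|I|\ge 2$ are used only to guarantee at least two vertices and to place $I$ entirely on one side; with $|I|=1$ the argument would still produce some vertex outside $I$ complete to it.
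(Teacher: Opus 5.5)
Your proof is correct and is essentially the paper's argument: connectivity rules out a disjoint union, so $G$ is the join of two nonempty cographs, the independent set $I$ must lie on one side, and any vertex of the other side is complete to $I$ and lies outside it. One small quibble concerns your closing remark about $|I|=1$: it also needs $|V(G)|\ge 2$, since it fails for $G=K_1$, but this does not affect the lemma as stated.
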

\begin{proof}
    As $G$ is connected and has at least two vertices, it is the join of two nonempty cographs $G_1$ and $G_2$.
    As $I$ is independent and $V(G_1)$ is complete to $V(G_2)$, the set $I$ cannot intersect both $V(G_1)$ and $V(G_2)$; by symmetry assume that $I \subseteq V(G_1)$.
    Then every $v \in V(G_2)$ is complete to $I$ and satisfies $v \notin I$.
\end{proof}

Cographs are equivalently graphs of clique-width at most 2, and thus many problems can be solved efficiently on cographs using dynamic programming on a decomposition of the graph. In particular, we have the following result.

\begin{theorem}[Jansen and Scheffler~\cite{DBLP:journals/dam/JansenS97}]\label{thm:lcol-cographs}
    For every $k \geq 1$, \lcol{$k$} in $n$-vertex $P_4$-free graphs can be solved in time $2^{\Oh(k)} \cdot n^{\Oh(1)}$.
\end{theorem}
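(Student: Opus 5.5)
The plan is a bottom-up dynamic program over the recursive (disjoint union / join) structure of the cograph, where the state records only which \emph{set} of colors is used. For a cograph $G$ with lists $L$ and an induced subgraph $G'$ arising in the decomposition, define
\[
\mathcal{F}(G') = \{\, S \subseteq [k] \;:\; G' \text{ has a proper coloring respecting } L \text{ whose set of used colors is exactly } S \,\}.
\]
Then $(G,L)$ is a yes-instance if and only if $\mathcal{F}(G) \neq \emptyset$. Each family is a subset of $2^{[k]}$, so it has at most $2^k$ members.

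First, I obtain the decomposition. As recalled before \cref{lem:dominator-cographs}, every cograph with at least two vertices is either disconnected or the join of two cographs, in which case its complement is disconnected. So recursively I split $G$ into its connected components if it is disconnected, and otherwise into the vertex sets of the components of its complement; these are pairwise complete to each other in $G$. Induced subgraphs of cographs are cographs, so the recursion continues. Grouping the parts two at a time yields a binary decomposition tree with $\Oh(n)$ nodes, computed in polynomial time (or in linear time via known cotree recognition, but polynomial suffices).

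Next come the recurrences.
\begin{itemize}
\item For a single vertex $v$: $\mathcal{F} = \{\{c\} : c \in L(v)\}$.
\item For a disjoint union $G_1 + G_2$: there are no edges between the two parts, so colorings combine freely, giving
\[
\mathcal{F} = \{S_1 \cup S_2 : S_i \in \mathcal{F}(G_i)\}.
\]
\item For a join of $G_1$ and $G_2$: every vertex of $G_1$ is adjacent to every vertex of $G_2$. Hence two colorings combine into a proper coloring if and only if their color sets are disjoint, giving
\[
\mathcal{F} = \{S_1 \cup S_2 : S_i \in \mathcal{F}(G_i),\ S_1 \cap S_2 = \emptyset\}.
\]
\end{itemize}
Correctness of both recurrences is immediate: the restriction of any coloring of the whole to each side is a valid coloring of that side, and conversely valid side colorings satisfying the stated condition glue together.

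Each combination step takes $\Oh(4^k \cdot k)$ time by iterating over all pairs $(S_1, S_2)$. With $\Oh(n)$ nodes, the total running time is $2^{\Oh(k)} \cdot n^{\Oh(1)}$. To actually output a coloring, I would store for each $S$ a witnessing pair $(S_1, S_2)$ and backtrack. The only step I expect to need care is computing the decomposition. This is routine given the cograph characterization quoted above, and the rest is bookkeeping.
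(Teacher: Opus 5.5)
Your proposal is correct: the cotree dynamic program that stores, for each node, the family of color sets realizable by some list coloring gives the claimed $2^{\Oh(k)} \cdot n^{\Oh(1)}$ bound. It is essentially the standard argument behind the cited result of Jansen and Scheffler. The paper itself proves nothing here beyond the citation and its remark that cographs, having clique-width at most~2, admit dynamic programming over their decomposition.
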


The next tool is the following result on finding a largest independent set in $(P_4+sP_1)$-free graphs.

\begin{theorem}\label{thm:mis-p4sp1}
    For every fixed $s$, a largest independent set in an $n$-vertex $(P_4+sP_1)$-free graph can be found in polynomial time.
\end{theorem}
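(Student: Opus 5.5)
The plan is to reduce to finding a maximum independent set in $P_4$-free graphs, which is polynomial-time solvable: cographs have bounded clique-width, and in any case the cotree recursion gives $\alpha$ of a disjoint union as a sum and $\alpha$ of a join as a maximum. The reduction goes by induction on $s$. For $s=0$ the graph is a cograph and we are done. Now let $s \ge 1$ and let $G$ be $(P_4+sP_1)$-free. Let $I$ be a maximum independent set of $G$, and suppose we do not know it.

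\textbf{Case 1: $I$ is small.} If $|I| \le s$ (a constant), we find it by brute force over all subsets of size at most $s$, in time $n^{\Oh(s)}$. So we may guess $s$ vertices $v_1,\dots,v_s$ of $I$, again in $n^{s}$ ways.

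\textbf{Case 2: the guess is correct.} The remaining vertices of $I$ lie in $R = V(G) \setminus N[\{v_1,\dots,v_s\}]$. The key claim is that $G[R]$ is $P_4$-free. Indeed, an induced $P_4$ inside $R$ is anticomplete to $\{v_1,\dots,v_s\}$, and the $v_i$ are pairwise nonadjacent. Together they would form an induced $P_4+sP_1$, a contradiction.

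It follows that $I \setminus \{v_1,\dots,v_s\}$ is an independent set of the cograph $G[R]$. Conversely, any independent set of $G[R]$ together with $\{v_1,\dots,v_s\}$ is independent in $G$. So for each guess of an independent $s$-tuple we compute a maximum independent set $J$ of $G[R]$ via the cotree DP and output the best $\{v_1,\dots,v_s\} \cup J$. This yields a largest independent set in time $n^{\Oh(s)}$. The induction on $s$ is in fact unnecessary: the direct guessing already works, and it only requires that $I$ have at least $s$ elements, which Case 1 handles.

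\textbf{Main obstacle.} The only nontrivial ingredient is the polynomial maximum independent set algorithm on cographs. I would invoke the standard recursive decomposition described before \cref{lem:dominator-cographs}: compute the cotree in polynomial time (or simply recurse on components of $G$ or of its complement). Then $\alpha(G_1 \cup G_2) = \alpha(G_1)+\alpha(G_2)$ for a disjoint union and $\alpha(G_1 \vee G_2)=\max(\alpha(G_1),\alpha(G_2))$ for a join, tracking witness sets along the way. The recursion uses the fact that a cograph with at least two vertices is either disconnected or has a disconnected complement, and each level is computable in polynomial time.
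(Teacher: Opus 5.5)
Your proposal is correct and follows essentially the same route as the paper. Both enumerate independent sets of size at most $s$, delete the closed neighborhood of each independent $s$-set to obtain a cograph, and combine that $s$-set with a maximum independent set of the cograph. The only difference is that you solve the cograph subproblem by the explicit cotree recursion (sum over disjoint unions, maximum over joins), whereas the paper cites bounded clique-width; as you note, the final output must be the maximum over both the brute-force sets and all the $s$-set candidates.
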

\begin{proof}
    Let $G$ be the input graph.
    First, enumerate all independent sets of size at most $s$; this can be done in polynomial time, as $s$ is a constant.
    If no independent set of size $s$ is found, then we return a largest set found in this phase.

    Otherwise, for every independent set $I$ of size $s$, delete $I$ and all its neighbors from $G$, and let $G'$ be the remaining graph.
    Note that any induced $P_4$ in $G'$, together with $I$, induces a $P_4 + sP_1$ in $G$. Thus, $G'$ is $P_4$-free, i.e., a cograph.
    Now, a largest independent set in $G'$ can be found in polynomial time, e.g., by exploiting the boundedness of clique-width~\cite{DBLP:journals/mst/CourcelleMR00}.
    Among all choices of $I$, we return a largest set of the form $I \cup J$, where $J$ is a largest independent set of the corresponding graph $G'$; note that every such set is independent in $G$.
    This is optimal: if $I^\star$ is a largest independent set of $G$, then $|I^\star| \geq s$, so we may pick $I \subseteq I^\star$ with $|I|=s$, and then $I^\star \setminus I$ is an independent set of the corresponding graph $G'$.
\end{proof}

Finally, we will use the following result.
It can be obtained using the approach of Couturier, Golovach, Kratsch, and Paulusma~\cite{DBLP:journals/algorithmica/0001GKP15} for $(P_5+sP_1)$-free graphs
combined with the $n^{\Oh(k)}$-algorithm of Chudnovsky, King, Pilipczuk, Rzążewski, and Spirkl~\cite{DBLP:journals/siamdm/ChudnovskyKPRS21} for $P_5$-free graphs.

\begin{theorem}\label{thm:lcol-p5sp1}
    For every fixed $s$, \lcol{$k$} in $n$-vertex $(P_5+sP_1)$-free graphs can be solved in time $n^{\Oh(k^3)}$.
\end{theorem}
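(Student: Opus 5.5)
The plan is to combine the approach of Couturier, Golovach, Kratsch, and Paulusma for $(P_5+sP_1)$-free graphs with the $n^{\Oh(k)}$ algorithm of Chudnovsky, King, Pilipczuk, Rzążewski, and Spirkl for \lcol{$k$} in $P_5$-free graphs. The idea is to branch on a bounded number of guesses that reduce the instance to a collection of $P_5$-free list instances. Let $(G,L)$ be the input, with $G$ being $(P_5+sP_1)$-free.

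First I would handle a large clique. If $G$ contains a clique on $k+1$ vertices, we reject; this check takes time $n^{k+1}$. Otherwise, for each color class $c \in [k]$, I would guess a small "witness" set inside that class. More precisely, in a hypothetical coloring $\phi$, each color class is an independent set. For each color $c$, I would guess a maximal family of at most $s$ vertices of color $c$ chosen so that they are pairwise nonadjacent, together with the witnesses needed below. This gives at most $n^{\Oh(ks)}$ branches, or $n^{\Oh(k^2)}$ once $s$ is fixed.

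The key structural step is the following. Suppose $I$ is an independent set of size $s$. Any induced $P_5$ in $G - N[I]$ together with $I$ would form an induced $P_5+sP_1$. Hence $G-N[I]$ is $P_5$-free. Conversely, if $G$ has no independent set of size $s$, then $G$ has at most $R(k+1,s)$ vertices, since we have already excluded $K_{k+1}$, and we can solve it by brute force.

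So I would fix an independent $s$-set $I$ and guess its colors ($k^s$ options). I would propagate these colors to neighbors by removing the color of each $u \in I$ from the lists of its neighbors. This leaves the $P_5$-free part $G - N[I]$, together with the vertices in $N(I)$. The vertices in $N(I)$ are the difficulty: they are not necessarily $P_5$-free among themselves. To deal with them, I would recurse. For every vertex $v \in N(I)$ adjacent to some $u \in I$, the vertex $v$ has lost at least one color, namely $\phi(u)$. Repeating the guess of an independent $s$-set inside the current set of vertices whose lists have not yet been reduced, at most $k$ rounds suffice. Each round either strictly shrinks every remaining list on the newly dominated vertices or certifies that the remaining undominated part is $P_5$-free. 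Concretely, I would define levels: a vertex at level $j$ has list size at most $k-j$. After $k$ rounds all remaining vertices are either colored (list size $\le 1$, handled by propagation) or lie in $P_5$-free pieces. This yields $n^{\Oh(sk\cdot k)}=n^{\Oh(k^2)}$ branches.

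The main obstacle is to glue the pieces into a single $P_5$-free instance, because vertices from different rounds interact. I expect to handle this as in Couturier et al. Vertices whose lists have size at most $2$ can be reduced via 2-SAT-style propagation. Otherwise, I would argue that the union of the undominated parts at the final round is one set $U$ with $G[U]$ being $P_5$-free, while all other vertices have had lists forced down. Then a single call to the $n^{\Oh(k)}$ algorithm for $P_5$-free graphs, generalized to handle the forced vertices by further $n^{\Oh(k)}$ guessing per round, gives total time $n^{\Oh(k^2)}\cdot n^{\Oh(k)}$ per level. This multiplies out to the claimed $n^{\Oh(k^3)}$.
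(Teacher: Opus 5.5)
You have a genuine gap. The proposal never produces an instance to which the $P_5$-free algorithm of Chudnovsky et al.\ can legitimately be applied. Two of your observations are correct: $G-N[I]$ is $P_5$-free for every independent $s$-set $I$, and a graph with no $K_{k+1}$ and no independent $s$-set has fewer than $R(k+1,s)$ vertices. But after guessing $I$ and its colors, the vertices of $N(I)$ remain in the instance. Their lists may still have two or more colors, and they have arbitrary edges into $G-N[I]$. The $P_5$-free algorithm needs the \emph{whole} remaining graph to be $P_5$-free, and a vertex can only be discarded once its list has at most one color. The ``levels'' argument does not get there, for three reasons.
First, removing $\phi(u)$ from $L(v)$ shrinks nothing when $\phi(u)\notin L(v)$, so ``$v$ has lost at least one color'' is false for lists; it holds only for the palette $[k]\setminus\phi(N(v)\cap I)$.
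Second, even with genuine shrinking you only get $|L(v)|\le k-j$ after $j$ rounds, not singletons.
Third, the recursion as stated re-guesses ``inside the vertices whose lists have not yet been reduced'', i.e.\ inside $V(G)\setminus N[I]$. That set is already $P_5$-free, so the problematic set $N(I)$ is never treated. If you instead recurse inside $N(I)$, the undominated $P_5$-free pieces produced in different rounds have arbitrary edges between them, so their union need not be $P_5$-free.

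The proposed patches do not close this gap. Edwards' 2-SAT method requires \emph{every} list to have size at most $2$. So you cannot dispose of the size-$2$ vertices by 2-SAT while the rest is handled by a different algorithm, since the two parts are coupled by edges. The sentence ``the union of the undominated parts at the final round is one set $U$ with $G[U]$ being $P_5$-free, while all other vertices have had lists forced down'' is precisely the statement that needs proof. Moreover, ``further guessing'' the colors of the forced-down vertices costs time exponential in their number, not $n^{\Oh(k)}$. Consequently, the $n^{\Oh(k^3)}$ accounting is not derived from any actual procedure.

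The paper does not give a self-contained proof either. It only states that the theorem follows by combining the approach of Couturier, Golovach, Kratsch, and Paulusma for $(P_5+sP_1)$-free graphs with the $n^{\Oh(k)}$ algorithm of Chudnovsky et al.\ for $P_5$-free graphs. So the step you defer with ``I expect to handle this as in Couturier et al.'' is the entire content. To turn your sketch into a proof, you would have to state that reduction precisely: what instances it outputs, how many, and why the $P_5$-free list algorithm applies to them. Only then can you carry out the time bound.

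A minor point: brute force over all $k^n$ assignments when $n<R(k+1,s)$ need not fit in $n^{\Oh(k^3)}$ for large $s$. Enumerating the color classes instead, each of size less than $s$, gives $n^{\Oh(sk)}$.
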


Let us remark that, as we only need to apply \cref{thm:lcol-p5sp1} to $(P_4+sP_1)$-free graphs, we are quite confident that the running time can be improved.
This would immediately reduce the overall running time in \cref{thm:p4sp1algo}. However, we do not pursue this direction here, as it is not the main focus of this paper.

\paragraph{Constants and conventions.}
Throughout this section we fix two integers $s,k \geq 1$.
We treat $s$ as a constant: all constants hidden in the $\Oh(\cdot)$-notation, as well as the exponents of the polynomial factors in the running times, are allowed to depend on $s$.
On the other hand, $k$ is the parameter, so we keep track of the dependence on $k$ explicitly.

For future reference, we gather in one place all the constants and functions used in this section:
\begin{equation}\label{eq:constants}
    \begin{aligned}
        h_s(r,p) &:= p\left(1+rp^s\right)^r && \text{for integers } r \geq 0 \text{ and } p > s,\\
        q &:= (sk+1)2^k+1,\\
        \sigma &:= s \cdot q = s\left((sk+1)2^k+1\right),\\
        M &:= h_s(2k,\sigma).
    \end{aligned}
\end{equation}

We will use the following three properties of these constants; each of them is immediate from \eqref{eq:constants} and $s,k \geq 1$:
\begin{equation}\label{eq:constants-prop}
    \sigma > s, \qquad q \geq 2(s+1)+1 = 2s+3, \qquad\text{and}\qquad q > (sk+1)2^k.
\end{equation}
Let us also estimate $M$.
As $s$ is a constant, we have $\sigma = 2^{k+\Oh(\log k)}$ and thus $\sigma^s = 2^{sk+\Oh(\log k)}$, which yields
\begin{equation}\label{eq:M-bound}
    M = \sigma\left(1+2k \cdot \sigma^s\right)^{2k} = 2^{2sk^2+\Oh(k \log k)} = 2^{\Oh(k^2)},
\end{equation}
where, as everywhere in this section, the constants hidden in the $\Oh(\cdot)$-notation depend on $s$.

\paragraph{Extracting an independent set with restricted neighborhood.}
Now, let us show that in a $(P_4+sP_1)$-free graph, if we are given a sufficiently large independent set, then we can extract a smaller independent set $S$ such that every vertex of $G$ is adjacent to almost all or to almost no vertices from $S$.

We start with the following combinatorial lemma concerning set systems.

\begin{lemma}\label{lem:extraction}
    Let $r \geq 0$ and $p > s$ be integers, and recall the function $h_s(\cdot,\cdot)$ defined in \eqref{eq:constants}.
    Let $X$ be a ground set and let $\cF \subseteq 2^X$ be a family of pairwise distinct subsets of $X$ such that every $s$-element subset of $X$ is contained in at most $r$ sets in $\cF$.
    If $|X| \geq h_s(r,p)$, then there exists a set $S \subseteq X$ of size $p$ such that, for every $F \in \cF$,
    \begin{equation}
        S \subseteq F \quad\text{or}\quad |S \cap F| \leq s. \label{eq:extract}
    \end{equation}
    Moreover, such a set $S$ can be found in time polynomial in $|X| + |\cF|$.
\end{lemma}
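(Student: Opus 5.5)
We prove the statement by induction on $r$, in a slightly stronger form where $\cF$ may be a multifamily (repetitions allowed). The counting hypothesis still makes sense, and distinctness is never used. For brevity write $m := h_s(r-1,p) = p(1+(r-1)p^s)^{r-1}$ when $r \geq 1$, so that $h_s(r,p) \geq (1+rp^s)\, m$. This holds because $(1+rp^s)^{r} \geq (1+rp^s)(1+(r-1)p^s)^{r-1}$.

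\textbf{Base case $r=0$.} If $|X| \geq s$, then no $s$-element subset of $X$ lies in any $F \in \cF$. Hence every $F$ has $|F| < s$, and any $S \subseteq X$ of size $p \leq h_s(0,p) = p$ satisfies $|S\cap F| \leq s$. (The case $|X|<s$ is impossible since $|X| \geq p > s$.)

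\textbf{Inductive step, case 1: some $F_0 \in \cF$ has $|F_0| \geq m$.} We recurse inside $F_0$, using ground set $F_0$ and the multifamily $\cF' = \{F \cap F_0 : F \in \cF,\ F \neq F_0 \text{ as an element of the multifamily}\}$. Every $s$-subset $A$ of $F_0$ is contained in $F_0$ itself, so it lies in at most $r-1$ members of $\cF'$. By induction we obtain $S \subseteq F_0$ with $|S| = p$ satisfying \eqref{eq:extract} for all members of $\cF'$. Since $S \cap F = S \cap (F\cap F_0)$, this gives \eqref{eq:extract} for every $F \neq F_0$, and $S \subseteq F_0$ handles $F_0$ itself.

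\textbf{Inductive step, case 2: every $F \in \cF$ has $|F| < m$.} We build $S$ greedily so that $|S\cap F| \leq s$ for all $F$; this is the main step, but it reduces to a counting argument. Suppose we already have $S'$ with $|S'| < p$ satisfying the invariant $|S'\cap F| \leq s$ for all $F$. Call $x \in X\setminus S'$ \emph{forbidden} if there exist an $s$-subset $A \subseteq S'$ and $F \in \cF$ with $A \cup \{x\} \subseteq F$. There are at most $p^s$ choices of $A$, each lies in at most $r$ sets $F$, and each such $F$ has fewer than $m$ elements. Hence at most $|S'| + rp^s m < p + rp^s m \leq m(1+rp^s) \leq |X|$ elements are in $S'$ or forbidden; here we used $m \geq p$. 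So some non-forbidden $x$ exists, and we add it to $S'$.

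We check that the invariant is preserved. Suppose some $F$ has $|F \cap (S'\cup\{x\})| \geq s+1$. By the invariant for $S'$, this $F$ must contain $x$ together with at least $s$ elements of $S'$. This would make $x$ forbidden, a contradiction. After $p$ steps we obtain the required $S$.

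\textbf{Running time.} Each level of the recursion checks the sizes of the sets in $\cF$ and either recurses once, with $r$ decreasing, or runs the greedy procedure. The greedy procedure needs, for each candidate $x$, to test whether some $F \ni x$ meets $S'$ in at least $s$ elements; this takes polynomial time in $|X|+|\cF|$. Since there are at most $r \leq \binom{|X|}{s}|\cF|$ levels (and in fact $r$ can be capped by $|\cF|$), the whole procedure runs in time polynomial in $|X| + |\cF|$.
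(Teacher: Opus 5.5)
Your proof is correct and follows the paper's route: induction on $r$, recursing into a large member of $\cF$, and otherwise a greedy selection justified by the count $p + rp^s m \le h_s(r,p)$. Allowing multifamilies and using the threshold $m = h_s(r-1,p)$ instead of the paper's $u = p(1+rp^s)^{r-1}$ are harmless variations, and the multifamily version even spares you the paper's deduplication argument. The one slip is the bound $r \le \binom{|X|}{s}|\cF|$ in the running-time paragraph, which can fail (e.g.\ for $\cF=\emptyset$); it is not needed, because each recursive call removes one member of $\cF$, so the recursion depth is at most $|\cF|$.
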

\begin{proof}
    We proceed by induction on $r$; for a fixed $r$, the statement is claimed for all ground sets $X$ and all families $\cF$ as above.

    If $r=0$, then no $s$-element subset of $X$ is contained in any set in $\cF$, and thus every set $F \in \cF$ has fewer than $s$ elements:
        indeed, if we had $|F| \geq s$ for some $F \in \cF$, then any $s$-element subset of $F$ would be an $s$-element subset of $X$ contained in $F$.
    As $|X| \geq h_s(0,p) = p$, we can pick a set $S \subseteq X$ of size $p$, and it satisfies \eqref{eq:extract} since $|S \cap F| \leq |F| < s$ for every $F \in \cF$.

    Now, suppose that $r \geq 1$ and the statement holds for $r-1$. Define
    \[
        u := p(1+rp^s)^{r-1}.
    \]
    Note that $u \geq p$ and $h_s(r,p) = u(1+rp^s)$.
    Furthermore, $h_s(r-1,p) = p(1+(r-1)p^s)^{r-1} \leq p(1+rp^s)^{r-1} = u$.

    Suppose first that some $Y \in \cF$ is of size at least $u$.
    We restrict the ground set to $Y$ and define
    \[
        \cF' = \{ F \cap Y \mid F \in \cF \setminus \{Y\} \}.
    \]
    Note that $\cF' \subseteq 2^{Y}$; we again treat $\cF'$ as a set, so if two distinct sets $F,F'' \in \cF \setminus \{Y\}$ satisfy $F \cap Y = F'' \cap Y$, then this common set is kept in $\cF'$ only once.
    We claim that every $s$-element subset $T$ of $Y$ is contained in at most $r-1$ sets in $\cF'$.
    Indeed, for every $F' \in \cF'$ containing $T$, let us fix some $F \in \cF \setminus \{Y\}$ with $F' = F \cap Y$; as $T \subseteq Y$, we have $T \subseteq F$.
    Distinct sets in $\cF'$ yield distinct sets $F$, so the number of sets in $\cF'$ containing $T$ is at most the number of sets in $\cF \setminus \{Y\}$ containing $T$.
    The latter is at most $r-1$, because $T$ is contained in at most $r$ sets in $\cF$, and one of them is $Y$.

    Since $|Y| \geq u \geq h_s(r-1,p)$, we may apply the induction hypothesis to the ground set $Y$ and the family $\cF'$.
    We obtain a set $S \subseteq Y \subseteq X$ of size $p$ such that, for every $F' \in \cF'$, either $S \subseteq F'$ or $|S \cap F'| \leq s$.
    Take any $F \in \cF$.
    If $F = Y$, then $S \subseteq F$.
    Otherwise, let $F' = F \cap Y \in \cF'$.
    If $S \subseteq F'$, then again $S \subseteq F$.
    Finally, if $|S \cap F'| \leq s$, then $|S \cap F| = |S \cap F'| \leq s$ as $S \subseteq Y$.
    Thus, $S$ satisfies the conditions in the statement of the lemma.

    Now, suppose that every set in $\cF$ has size less than $u$.
    We construct a set $S$ as follows:
        we greedily pick elements from $X$, maintaining that we select at most $s$ elements from any $F \in \cF$.
    We finish when we have selected $p$ elements or when we cannot select any more elements; let $S$ be the set of selected elements.

    Suppose that $|S| < p$.
    Let $\cF_S$ be the family of sets in $\cF$ that contain exactly $s$ elements from $S$.
    Every $F \in \cF_S$ contains the $s$-element set $F \cap S \subseteq S$, and every $s$-element subset of $S$ is contained in at most $r$ sets in $\cF$, so, as $|S| < p$, we obtain
    \[
        |\cF_S| \leq r\binom{|S|}{s} \leq r\binom{p-1}{s} < rp^s.
    \]
    We observe that $S$ cannot be enlarged, so every element in $X$ is either already in $S$, or it belongs to some set in $\cF_S$.
    Indeed, consider $x \in X \setminus S$; as $x$ was not selected, there is $F \in \cF$ with $|F \cap (S \cup \{x\})| \geq s+1$.
    Since no set in $\cF$ contains $s+1$ elements of $S$, this means that $x \in F$ and $|F \cap S| = s$, i.e., $F \in \cF_S$.
    As each set in $\cF$, and hence each set in $\cF_S$, has fewer than $u$ elements, we obtain
    \[
       |X| \leq |S| + |\cF_S|u < p + urp^s \leq u(1+rp^s) = h_s(r,p) \leq |X|,
    \]
    where the last but one inequality follows from $u \geq p$; a contradiction.
    Consequently, $|S| = p$ and, as no set in $\cF$ contains $s+1$ elements of $S$, we have $|S \cap F| \leq s$ for every $F \in \cF$, so $S$ satisfies \eqref{eq:extract}.

    Finally, let us discuss the running time; note that the argument above is constructive.
    The recursion is linear, i.e., in each application of the inductive step we make at most one recursive call, and it is made only in the first case.
    In the inductive step, deciding which of the two cases applies and, in the first case, computing $\cF'$ (in particular, removing the duplicates) takes time polynomial in $|X| + |\cF|$.
    As $\cF'$ is obtained from $\cF \setminus \{Y\}$, we have $|\cF'| < |\cF|$, so the depth of the recursion is at most $\min(r,|\cF|) \leq |\cF|$.
    In the second case, which terminates the recursion, we perform at most $p \leq |X|$ selections, and for each of them we consider at most $|X|$ candidate elements, verifying in polynomial time whether a candidate can be selected.
    Thus the total running time is polynomial in $|X| + |\cF|$.
\end{proof}

Now, we apply \cref{lem:extraction} to the sufficiently large independent set in a $(P_4+sP_1)$-free graph to obtain the following lemma.

\begin{lemma}\label{lem:extract-is}
    Let $p > s$ be an integer.
    Given a $(P_4+sP_1)$-free graph $G$ and an independent set $X$ in $G$ of size at least $h_s(2k,p)$, in polynomial time we can find a clique of size $k+1$ in $G$,
    or a set $S \subseteq X$ of size $p$ such that, for every vertex $v \in V(G) \setminus S$,
    \begin{equation}
        |S \cap N(v)| \leq s \quad\text{or}\quad |S \setminus N(v)| \leq s. \label{eq:extract-is}
    \end{equation}
\end{lemma}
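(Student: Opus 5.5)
We apply \cref{lem:extraction} with ground set $X$ and a family of \emph{non-neighborhood traces}. Let
\[
  \cF = \{\, X \setminus N(v) \mid v \in V(G) \setminus X \,\},
\]
with duplicates removed. If $S \subseteq X$ satisfies \eqref{eq:extract} for every $F \in \cF$, then for $v \notin X$ either $S \subseteq X\setminus N(v)$, so $|S \cap N(v)| = 0 \le s$, or $|S \setminus N(v)| = |S \cap (X \setminus N(v))| \le s$. This is exactly \eqref{eq:extract-is}. Vertices $v \in X \setminus S$ are anticomplete to $S$ because $X$ is independent, so they satisfy \eqref{eq:extract-is} trivially. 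The family $\cF$ has at most $n$ members and is computable in polynomial time. Since $|X| \ge h_s(2k,p)$, once we check the hypothesis with $r = 2k$, \cref{lem:extraction} yields $S$ in polynomial time.

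The work is to show the following. Let $T \subseteq X$ with $|T| = s$, and suppose $T$ is contained in at least $2k+1$ distinct sets of $\cF$. Then we can find a clique of size $k+1$. The search is polynomial: we try all $s$-subsets $T$ and all small vertex sets.

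Take witnesses $v_1,\dots,v_{2k+1} \notin X$ with pairwise distinct traces $X\setminus N(v_i)$, all anticomplete to $T$. Since their traces agree on $T$, they differ on $X' := X \setminus T$. Every vertex of $X'$ is nonadjacent to $T$, and so is every $v_i$. Hence any induced $P_4$ in $Q := G[X' \cup \{v_1,\dots,v_{2k+1}\}]$ together with $T$ would be an induced $P_4+sP_1$. So $Q$ is a cograph in which $X'$ is independent. Write $A_i = N(v_i) \cap X'$; these are pairwise distinct.

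First local observation: if $v_i v_j \notin E(G)$, then $A_i \cap A_j = \emptyset$. Indeed, suppose $x \in A_i\cap A_j$. Since $A_i \neq A_j$, there is, up to symmetry, some $y \in A_i\setminus A_j$. Then $y\,v_i\,x\,v_j$ is an induced $P_4$, because $xy \notin E$ and $y v_j, v_i v_j \notin E$. This contradicts $Q$ being a cograph.

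So nonadjacent witnesses have disjoint neighborhoods in $X'$, and at most one witness has $A_i = \emptyset$. The main obstacle is turning this into a clique of size $k+1$ among $2k$ witnesses with nonempty, distinct $A_i$. My plan is to use the cograph structure of $Q' := Q[\{v_i\}]$ through its complement components. Suppose $v_i v_j \notin E$ with $x \in A_i$, $y \in A_j$, and some $v_l$ is adjacent to both $v_i$ and $v_j$. Cograph-freeness forces $v_l$ to see $x$, since otherwise $x\,v_i\,v_l\,v_j$ is an induced $P_4$. Symmetrically, $v_l$ sees $y$. Then I will argue that $A_l \supseteq A_i \cup A_j$ and that such a $v_l$ is adjacent to both $v_i$ and $v_j$.

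Now take a co-component of $Q'$ with at least two vertices. Its members split into anticomplete parts, and I expect their $A$-sets to be pairwise disjoint. Choosing from each co-component one vertex whose $A$-set is maximal, these vertices are pairwise adjacent. Combined with a counting argument that each co-component contributes few witnesses relative to its clique contribution, this should give $k+1$ pairwise adjacent witnesses, matching the factor $2$ in $r = 2k$. If this bookkeeping fails, my fallback is to enlarge $r$ to a Ramsey-type bound. That still gives the required $M = 2^{\Oh(k^2)}$ size up to the constant in the exponent. Every step above is constructive and runs in polynomial time.
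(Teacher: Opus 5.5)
Your reduction from \eqref{eq:extract} to \eqref{eq:extract-is} is fine. Your local observation, that nonadjacent witnesses have disjoint traces on $X'$, is also correct. The gap is the central claim: that an $s$-set $T$ lying in $2k+1$ distinct sets of $\cF=\{X\setminus N(v)\}$ forces a clique of size $k+1$. This claim is false, and no choice of $r$ depending only on $k$ rescues it.

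Here is a counterexample for any $k\ge 2$. Take $X=\{x_1,\dots,x_m\}$ and add vertices $v_1,\dots,v_m$, where $v_i$ is adjacent only to $x_i$. This graph is a matching, hence $P_4$-free and triangle-free. Yet every $s$-subset of $X$ lies in $m-s$ of the distinct sets $X\setminus\{x_i\}$, and this number is unbounded. The witnesses have pairwise disjoint nonempty $A_i$ and are pairwise nonadjacent, which is exactly what your local observation permits. So the co-component bookkeeping has no clique to find. The Ramsey fallback fails too, because the multiplicity grows with $m$ while the clique number stays $2$. In this graph the desired $S$ obviously exists, but \cref{lem:extraction} simply cannot be invoked with your family.

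The missing idea is to orient each trace towards its smaller side. Let $A_v=N(v)\cap X$, and set $F_v=A_v$ if $|A_v|\le|X|/2$ and $F_v=X\setminus A_v$ otherwise. Condition \eqref{eq:extract} for $F_v$ still yields \eqref{eq:extract-is} in both cases. Among $2k+1$ distinct sets containing $T$, some $k+1$ have the same orientation, and these witnesses are pairwise adjacent. To see this, take two of them, $u$ and $v$, assume they are nonadjacent, and pick $x\in A_u\cap A_v$ and (up to symmetry) $y\in A_u\setminus A_v$, so that $y\,u\,x\,v$ is an induced $P_4$.
\begin{itemize}
\item In the large orientation, the two neighborhoods have size more than $|X|/2$, so they intersect and such an $x$ exists. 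Moreover $T\subseteq X\setminus(A_u\cup A_v)$ supplies the $sP_1$.
\item In the small orientation, $T\subseteq A_u\cap A_v$ supplies $x$. Counting gives $|X\setminus(A_u\cup A_v)|\ge |X|-|X|/2-|X|/2+s=s$, and these $s$ vertices supply the $sP_1$.
\end{itemize}
In both cases you get an induced $P_4+sP_1$, a contradiction. This pigeonhole over the two orientations is where the factor $2$ in $r=2k$ comes from, and it is how the paper proves the lemma.
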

\begin{proof}
    For a vertex $v \notin X$, define $A_v = N(v) \cap X$, and
    \[
        F_v = \begin{cases}
            A_v & \text{ if } |A_v| \leq |X|/2, \\
            X \setminus A_v & \text{ if } |A_v| > |X|/2.
        \end{cases}
    \]
    Let $\cF = \bigcup_{v \in V(G) \setminus X} \{F_v\}$.
    We emphasize that $\cF$ is a set, i.e., it has no duplicates.

    \begin{claim}\label{clm:xminusauav}
    Let $u,v \in V(G) \setminus X$ be distinct vertices, such that $uv \notin E(G)$,
    $A_u \neq A_v$, and $A_u \cap A_v \neq \emptyset$.
    Then $|X \setminus (A_u \cup A_v)| < s$.
    \end{claim}
    \begin{claimproof}
    Suppose for contradiction that $|X \setminus (A_u \cup A_v)| \geq s$ and let $R$ be a set of $s$ vertices from $X \setminus (A_u \cup A_v)$.

    Since $A_u \neq A_v$, we have either $A_u \setminus A_v \neq \emptyset$ or $A_v \setminus A_u \neq \emptyset$. By symmetry, we may assume that there is $x \in A_u \setminus A_v$.
    Since $A_u \cap A_v \neq \emptyset$, there is $y \in A_u \cap A_v$.
    Note that $x,y \in X$ are distinct and nonadjacent, as $X$ is independent.
    Now, $xu,uy,yv \in E(G)$, while $xy,xv,uv \notin E(G)$, so $x,u,y,v$ induce a $P_4$ in $G$.
    Furthermore, every vertex of $R$ is nonadjacent to $x$ and to $y$, as $R \subseteq X$ and $X$ is independent, and it is nonadjacent to $u$ and to $v$, as $R \cap (A_u \cup A_v) = \emptyset$.
    Consequently, $x,u,y,v$ together with $R$ induce a $P_4+sP_1$ in $G$, a contradiction.
    \end{claimproof}

    Now, suppose there is an $s$-element subset $T \subseteq X$ contained in at least $2k+1$ sets in $\cF$.
    Picking one vertex per such set, we obtain a set $Y \subseteq V(G) \setminus X$ of size at least $2k+1$ such that $T \subseteq F_y$ for every $y \in Y$, and the sets $F_y$ for $y \in Y$ are pairwise distinct.
    By the pigeonhole principle, there is $Y' \subseteq Y$ of size at least $k+1$ such that either $F_y = A_y$ for every $y \in Y'$, or $F_y = X \setminus A_y$ for every $y \in Y'$.

    \begin{claim}\label{clm:yprime-clique}
    $Y'$ is a clique in $G$.
    \end{claim}
    \begin{claimproof}
    For contradiction, let $u,v \in Y'$ be two distinct nonadjacent vertices.
    Recall that the sets $F_y$ for $y \in Y'$ are pairwise distinct, so in particular $F_u \neq F_v$.

    Suppose first that $F_u = A_u$ and $F_v = A_v$, i.e., $|A_u| \leq |X|/2$ and $|A_v|\leq |X|/2$.
    Then $A_u \neq A_v$, as $F_u \neq F_v$.
    Furthermore, $T \subseteq A_u \cap A_v$, so in particular $|A_u \cap A_v| \geq s \geq 1$ and thus $A_u \cap A_v \neq \emptyset$.
    Consequently, we obtain
    \[
        |X \setminus (A_u \cup A_v) | = |X| - |A_u| - |A_v| + |A_u \cap A_v| \geq |X| - |X|/2 - |X|/2 + s = s.
    \]
    This contradicts \cref{clm:xminusauav}.

    Now, suppose that $F_u = X \setminus A_u$ and $F_v = X \setminus A_v$, i.e., $|A_u| > |X|/2$ and $|A_v| > |X|/2$.
    Again $A_u \neq A_v$, as $F_u \neq F_v$.
    Note that $A_u \cap A_v \neq \emptyset$ as each of these sets has size more than $|X|/2$.
    Then $T \subseteq X \setminus (A_u \cup A_v)$, so in particular $|X \setminus (A_u \cup A_v)| \geq s$. Again, this contradicts \cref{clm:xminusauav}.
    \end{claimproof}

    As $|Y'| \geq k+1$, \cref{clm:yprime-clique} provides us with a clique of size $k+1$ in $G$, which is one of the desired outputs.
    Thus, from now on let us assume that every $s$-element subset of $X$ is contained in at most $2k$ sets in $\cF$.
    Since $|X| \geq h_s(2k,p)$, we may apply \cref{lem:extraction} to the ground set $X$, family $\cF$, and $r=2k$.
    This gives us a set $S \subseteq X$ of size $p$ such that,
    for every $F \in \cF$, either $S \subseteq F$ or $|S \cap F| \leq s$.

    Let us show that $S$ satisfies \eqref{eq:extract-is} for every $v \in V(G) \setminus S$.
    For $v \in X \setminus S$ the condition holds trivially: as $X$ is independent and $S \subseteq X$, we have $|S \cap N(v)| = 0 \leq s$.
    So consider any $v \in V(G) \setminus X$.
    If $F_v = A_v$, then $S \subseteq F_v$ means that $S \subseteq N(v)$, so in particular $|S \setminus N(v)|=0\leq s$,
    and $|S \cap F_v| \leq s$ means that $|S \cap N(v)| \leq s$.
    If $F_v = X \setminus A_v$, then $S \subseteq F_v$ means that $|S \cap N(v)| = 0 \leq s$,  and $|S \cap F_v| \leq s$ means that $|S \setminus N(v)| \leq s$.
    Thus, in both cases, we have that $S$ satisfies \eqref{eq:extract-is}.

    Finally, we observe that the running time is polynomial in $n := |V(G)|$.
    First, we compute the sets $A_v$ and $F_v$ for all $v \in V(G) \setminus X$ and remove the duplicates, which takes time polynomial in $n$; note that $|\cF| \leq n$.
    Then, we exhaustively enumerate all sets $T$ of size $s$ in $X$ and, for each of them, count the sets of $\cF$ containing $T$; this takes time $\Oh(|X|^s) \cdot n^{\Oh(1)}$, which is polynomial in $n$ as $s$ is a constant.
    If some $T$ is contained in at least $2k+1$ sets of $\cF$, we find the set $Y'$ and return the clique, which again takes polynomial time.
    Otherwise, we call the algorithm from \cref{lem:extraction}, which runs in time polynomial in $|X| + |\cF| \leq 2n$.
\end{proof}

\paragraph{The structure around the extracted set.}
Now, let us analyze the structure of a $(P_4+sP_1)$-free graph $G$ around a set $S$ obtained from \cref{lem:extract-is}.

\begin{lemma}\label{lem:components}
    Let $q \geq 2s+3$ be an integer and let $\sigma := sq$.
    Let $G$ be a $(P_4+sP_1)$-free graph and let $S$ be an independent set of size at least $\sigma$ that, for every vertex $v \in V(G) \setminus S$, satisfies \eqref{eq:extract-is}.
    Define
    \[
        W = \{ x \in V(G) \mid |N(x) \cap S| \leq s \} \quad\text{and}\quad
        D = V(G) \setminus W.
    \]
    Then:
    \begin{enumerate}
        \item $G[W]$ is $P_4$-free,
        \item there are at least $q$ components of $G[W]$ that intersect $S$; let $C_1,\ldots,C_q$ be any $q$ of them, and
        \item every vertex of $D$ is complete to all but at most $s$ of the components $C_1,\ldots,C_q$.
    \end{enumerate}
    Moreover, the sets $W$ and $D$ and the components $C_1,\ldots,C_q$ can be computed in time polynomial in $|V(G)|$.
\end{lemma}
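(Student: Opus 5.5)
Note first that $S \subseteq W$, because $S$ is independent, so each $x\in S$ has $|N(x)\cap S| = 0$. For $x \in D$ we have $|N(x)\cap S| > s$, so \eqref{eq:extract-is} forces $|S\setminus N(x)| \le s$: every vertex of $D$ is adjacent to all but at most $s$ vertices of $S$.

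\emph{Item 1.} Suppose $P$ is an induced $P_4$ in $G[W]$. Each of its four vertices has at most $s$ neighbours in $S$, so $|S \cap N[V(P)]| \le 4s+4$; here the vertices of $P$ themselves may lie in $S$, costing at most $4$ more. Since $|S| \ge \sigma = sq \ge s(2s+3) \ge 5s$ for $s\ge1$, I would want $|S| - 4s - 4 \ge s$. If this inequality is tight in some small case, the margin from $q\geq 2s+3$ should still suffice, because $sq \ge 2s^2+3s$ and $2s^2+3s \ge 5s+4$ once $s\ge 2$. The case $s=1$ needs care: I would bound the count more precisely, using that the vertices of $P$ in $S$ form an independent set, so at most two of them lie in $S$, and that a vertex of $S$ has no neighbours in $S$. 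Picking $s$ vertices of $S$ outside $N[V(P)]$ then gives an induced $P_4+sP_1$, a contradiction.

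\emph{Item 2.} Each vertex of $S$ lies in $W$, so it suffices to show that no component of $G[W]$ contains more than $s$ vertices of $S$. Then at least $|S|/s \ge q$ components meet $S$. Let $C$ be a component with $|C \cap S| \ge s+1 \ge 2$. Since $G[C]$ is a connected cograph by item 1, \cref{lem:dominator-cographs} gives a vertex $v\in C\setminus S$ complete to $C \cap S$. Then $v \in W$ has more than $s$ neighbours in $S$, a contradiction.

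\emph{Item 3.} Fix $d\in D$. By item 2, each $C_i$ contains some vertex $s_i \in S$, and $d$ misses at most $s$ vertices of $S$, so $d$ is adjacent to $s_i$ for all but at most $s$ indices $i$. I need $d$ to be complete to such a $C_i$. Suppose $d$ is mixed on $C_i$. By \cref{lem:mixed} there are adjacent $x,y\in C_i$ with $dx\in E$ and $dy\notin E$. Take some other index $j$ with $d$ adjacent to $s_j$, and try the path $y - x - d - s_j$. This is induced, since $C_i$ and $C_j$ are distinct components of $G[W]$, so $s_j$ is nonadjacent to $x$ and $y$. Then add $s$ vertices of $S$ nonadjacent to $x,y,d,s_j$. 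These are available because $x,y$ each have at most $s$ neighbours in $S$, $s_j$ has none, and $d$ misses at most $s$ vertices of $S$. I would choose these vertices among the non-neighbours of $d$ in $S$, but there may be fewer than $s$ of them.

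This is the main obstacle. I would instead choose the $s$ isolated vertices among $S$-vertices in components $C_\ell$ whose $S$-vertices are all adjacent to $d$. This fails, since such vertices are adjacent to $d$.

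Alternative: replace $d$ by using the anticomplete structure between components. Take $s_j$ and $s_\ell$ in two further components, both adjacent to $d$. Then $y - x - d - s_j$ is an induced $P_4$, and I need $s$ vertices anticomplete to it. These can be found among the vertices of $W$ in components other than $C_i$ and $C_j$ that are nonadjacent to $d$. If the isolated vertices must come from non-neighbours of $d$ and none exist, then $d$ is complete to $S$. In that case I would instead use a $P_4$ inside $W\cup\{d\}$ disjoint from $d$'s neighbourhood.

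I expect the counting to close using $q\ge 2s+3$, with item 3 the hardest step. Computability of $W$, $D$ and the components in polynomial time is immediate.
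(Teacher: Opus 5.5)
Your items 1 and 2 are fine. In item 1, the observation you reserve for $s=1$ works for every $s$: a vertex of $S$ has no neighbours in $S$, so $|N[x]\cap S|\le s$ for every $x\in W$. This gives $|N[V(P)]\cap S|\le 4s$ directly, and $|S|\ge 5s$ leaves $s$ free vertices.

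Item 3, however, has a genuine gap. The per-component target you set is false: a vertex $d\in D$ adjacent to some $s_i\in C_i\cap S$ need not be complete to $C_i$. Take $s=1$, $S=\{s_1,\dots,s_5\}$, a vertex $y$ adjacent only to $s_1$, and $d$ adjacent to all of $S$. This tree is $(P_4+P_1)$-free: its only induced $P_4$'s are $y\,s_1\,d\,s_\ell$, and every remaining vertex is adjacent to $d$. Also $S$ satisfies \eqref{eq:extract-is}, $W=S\cup\{y\}$ and $D=\{d\}$, yet $d$ is adjacent to $s_1$ and mixed on $C_1=\{s_1,y\}$. This is exactly why you cannot find the $s$ isolated vertices: here they do not exist. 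The lemma holds only because $d$ fails to be complete to just $s$ components. Your fallback for the case where $d$ is complete to $S$ is not an argument.

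The missing idea is a global contradiction on the number of non-complete components. Suppose $d$ is not complete to $s+1$ of them, say $C_1,\dots,C_{s+1}$, and let $u_i\in C_i\cap S$.
\begin{enumerate}
\item Since $d$ misses at most $s$ vertices of $S$, it is adjacent to some $u_i$ with $i\le s+1$, say $u_{s+1}$. So $d$ is mixed on $C_{s+1}$, and \cref{lem:mixed} gives an edge $uv$ in $C_{s+1}$ with $du\in E(G)$ and $dv\notin E(G)$.
\item The other $s$ non-complete components supply the isolated vertices: pick $v_i\in C_i\setminus N(d)$ for $i\in[s]$.
\item Since $q-(s+1)>s$, some $j\ge s+2$ has $du_j\in E(G)$. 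This is where $q\ge 2s+3$ is used.
\end{enumerate}
Then $v,u,d,u_j$ induce a $P_4$. The vertices $v_1,\dots,v_s$ are nonadjacent to $d$ by choice, and nonadjacent to $v,u,u_j$ and to each other because they lie in distinct components of $G[W]$. This gives an induced $P_4+sP_1$, a contradiction.

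Your ``alternative'' paragraph points at the right source of isolated vertices, namely non-neighbours of $d$ in other components of $G[W]$. But without the hypothesis that $s+1$ components are non-complete, nothing guarantees that $s$ of them exist.
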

\begin{proof}
    First, observe that $S \subseteq W$: indeed, as $S$ is independent, for every $x \in S$ we have $N(x) \cap S = \emptyset$.
    In particular, every vertex of $D$ is outside $S$, so \eqref{eq:extract-is} applies to it.
    Next, note that for every $x \in W$ we have $|N[x] \cap S| \leq s$: if $x \notin S$, this is by the definition of $W$, and for $x \in S$ we have $N[x] \cap S = \{x\}$.
    For contradiction, suppose that $W$ contains a set $P$ of four vertices inducing a $P_4$. Consequently, $|N[P] \cap S| \leq 4s$. As $|S| \geq \sigma = sq \geq 5s$, where the last inequality follows from $q \geq 2s+3 \geq 5$, there are $s$ vertices in $S$ that are not adjacent to any vertex of $P$. Together with $P$, these vertices induce a $P_4+sP_1$ in $G$, a contradiction.
    This proves the first item.

      \begin{claim}\label{clm:componentswithS}
        Every component of $G[W]$ contains at most $s$ vertices from $S$.
    \end{claim}
    \begin{claimproof}
        Let $C$ be the vertex set of a component of $G[W]$ and suppose for contradiction that $|C \cap S| > s$.
        As $s \geq 1$, we have $|C \cap S| \geq 2$, and $C \cap S$ is an independent set in the connected cograph $G[C]$.
        Hence, by \cref{lem:dominator-cographs}, there is a vertex $v \in C \setminus S$ complete to $C \cap S$.
        But then $v$ is adjacent to more than $s$ vertices of $S$, so $v \notin W$, a contradiction.
    \end{claimproof}

    Combining \cref{clm:componentswithS} with the fact that $|S| \geq \sigma = sq$, we obtain that there are at least $q$ distinct components $C_1,\ldots,C_q$ of $G[W]$ that intersect $S$, i.e., the second item holds.

    For each $i \in [q]$, let $u_i$ be an arbitrary vertex from $C_i \cap S$.
    Pick any $x \in D$. By the definition of $D$, we have $|N(x) \cap S| > s$, which, by \eqref{eq:extract-is}, implies that $|S \setminus N(x)| \leq s$.
    For contradiction, suppose that $x$ is not complete to at least $s+1$ among components $C_1,\ldots,C_q$. By symmetry, suppose that $x$ is not complete to $C_1,\ldots,C_{s+1}$.
    As $x$ has at most $s$ non-neighbors in $S$, there is $i \in [s+1]$ such that $x$ is adjacent to $u_i \in C_i \cap S$. By symmetry, suppose that $xu_{s+1} \in E(G)$.
    Consequently, $x$ is mixed on $C_{s+1}$ and thus, by \cref{lem:mixed}, there is an edge $uv$ contained in $C_{s+1}$ such that $xv \notin E(G)$ and $xu \in E(G)$.

    For $i \in [s]$, pick any $v_i \in C_i \setminus N(x)$; it exists as $x$ is not complete to any of $C_1,\ldots,C_s$.
    Finally, recall that $x$ has at most $s$ non-neighbors in $S$.
    As $q \geq 2s+3$, we have $q-(s+1) > s$, and hence there is some $j \in \{s+2,\ldots,q\}$ such that $x$ is adjacent to $u_j \in C_j \cap S$.

    Summing up, we claim that the vertices $v,u,x,u_j$ together with $v_1,\ldots,v_s$ induce a $P_4+sP_1$ in $G$.
    Indeed, $vu,ux,xu_j \in E(G)$, while $vx \notin E(G)$ and $vu_j,uu_j \notin E(G)$ as $v,u \in C_{s+1}$ and $u_j \in C_j$ are in distinct components of $G[W]$; hence $v,u,x,u_j$ induce a $P_4$.
    Furthermore, for every $i \in [s]$, the vertex $v_i$ is nonadjacent to $x$ by its choice, and it is nonadjacent to $v,u,u_j$ and to every $v_{i'}$ with $i' \neq i$, as all these vertices lie in components of $G[W]$ distinct from $C_i$.
    This is a contradiction, which proves the third item.

    Finally, the sets $W$ and $D$ are computed directly from their definition, and the connected components of $G[W]$, along with the information which of them intersect $S$, are found by a standard graph search; all this takes polynomial time.
\end{proof}

\paragraph{Reduction to subinstance.}
The next lemma is the key step of our approach.
We show that given a $(P_4+sP_1)$-free graph $G$ with the structure provided by \cref{lem:components}, we can reduce solving \lcol{$k$} on $G$ to solving \lcol{$k$} on certain induced subgraphs of $G$.
For simplicity of notation, we use a convention that the list function might be defined on a superset of the vertex set of the graph. In particular, we do not explicitly restrict the lists when discussing instances obtained by deleting some vertices.
For a list function $L$ and a set of colors $K$, we denote by $L|K$ the restriction of $L$ to the colors in $K$, i.e., $L|K(v) = L(v) \cap K$ for every vertex $v$.

\begin{lemma}\label{lem:reduce}
    Let $q$ be as in \eqref{eq:constants} and let $(G,L)$ be an instance of \lcol{$k$}.
    Let $W,D$ be a partition of $V(G)$ and let $C_1,\ldots,C_q$ be distinct connected components of $G[W]$, called \emph{distinguished}, such that every vertex of $D$ is complete to all but at most $s$ of the components $C_1,\ldots,C_q$.

    For every $K \subseteq [k]$, let $\cC_K$ consist of any $sk+1$ distinguished components $C_i$ for which $(G[C_i],L|K)$ is a no-instance of \lcol{$k$}, or all such components if there are fewer than $sk+1$ of them.
    Let $Z$ be the union of the distinguished components that do not appear in $\cC_K$ for any $K \subseteq [k]$.
    Then $Z \neq \emptyset$ and $(G,L)$ is a yes-instance of \lcol{$k$} if and only if $(G-Z,L)$ is a yes-instance of \lcol{$k$}.

\end{lemma}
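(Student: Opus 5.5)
The proof splits into two parts: first show that $Z$ is nonempty by counting, then show that deleting $Z$ preserves the answer, following the sketch in the introduction.

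\emph{Nonemptiness.} For each $K \subseteq [k]$ the family $\cC_K$ contains at most $sk+1$ distinguished components, and there are $2^k$ choices of $K$. So at most $(sk+1)2^k$ distinguished components are kept in total. Since $q > (sk+1)2^k$ by \eqref{eq:constants-prop}, at least one of $C_1,\ldots,C_q$ is kept by no $\cC_K$, and hence $Z \neq \emptyset$.

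\emph{Equivalence.} One direction is trivial: a list coloring of $(G,L)$ restricts to a list coloring of $(G-Z,L)$. For the converse, fix a list coloring $\phi$ of $G-Z$ respecting $L$. Let $K := [k] \setminus \phi(D)$ be the set of colors not used on $D$; note that $D \cap Z = \emptyset$, because $Z \subseteq W$. The argument then proceeds as follows.
\begin{enumerate}
\item Each color $c \notin K$ is used by some vertex $x_c \in D$.
\item By hypothesis, $x_c$ is complete to all but at most $s$ distinguished components. Every component to which $x_c$ is complete avoids color $c$ under $\phi$.
\item Hence at most $s(k-|K|) \le sk$ distinguished components present in $G-Z$ use some color outside $K$. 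Every other component $C$ of $G-Z$ is colored by $\phi$ entirely within $K$, so $(G[C],L|K)$ is a yes-instance.
\item Consider the components $C$ in $\cC_K$. Each of them is a no-instance for $L|K$ and survives in $G-Z$, so each must use a color outside $K$. By the previous step there are at most $sk$ of them, so $|\cC_K| \le sk < sk+1$.
\item By the definition of $\cC_K$, it therefore contains \emph{all} distinguished components that are no-instances for $L|K$. Every component in $Z$ lies outside $\cC_K$, so it is a yes-instance for $L|K$.
\end{enumerate}
Choose a list coloring $\psi_C$ of $(G[C],L|K)$ for each such component $C \subseteq Z$, and extend $\phi$ by these colorings. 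The result respects $L$. To see that it is proper, note that each $C$ is a component of $G[W]$, so its only neighbors outside $C$ lie in $D$. The neighbors of $C$ in $D$ use only colors outside $K$, while $C$ now uses only colors in $K$. Distinct components of $Z$ are pairwise nonadjacent, and each extension is proper internally. So we obtain a proper list coloring of $G$.

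The only delicate step is the counting in step 3. The count of components using a non-$K$ color must be charged through the colors $c \notin K$, and these components must be distinguished components that survive in $G-Z$. Non-distinguished components of $G[W]$ are unaffected by the argument, since they are never deleted.
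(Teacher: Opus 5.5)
Your proof is correct and follows the paper's argument: the same count for $Z \neq \emptyset$, the same choice $K = [k] \setminus \phi(D)$, the same charging of surviving distinguished components to the colors used on $D$ (at most $s$ per color), and the same extension of $\phi$ to the deleted components. The only difference is cosmetic. You argue directly that $|\cC_K| \le sk$, which forces $\cC_K$ to contain every no-instance component. The paper phrases this same step as a contradiction.
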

\begin{proof}
    First, note that as $q > (sk+1)2^k$ by \eqref{eq:constants-prop} and there are at most $2^k$ distinct sets $K \subseteq [k]$, we have $Z \neq \emptyset$.
    Furthermore, if $(G,L)$ is a yes-instance of \lcol{$k$}, then so is $(G-Z,L)$, as $G-Z$ is an induced subgraph of $G$ and the lists are not changed.

    It remains to show the converse. Let $\phi$ be a proper coloring of $G-Z$ respecting $L$.
    Note that $Z$ is a union of some components of $G[W]$, so $Z \subseteq W$ and thus $D$ is contained in $G-Z$.
    Let $A$ be the set of colors used on $D$, i.e., $A = \phi(D)$, and let $K = [k] \setminus A$.

    We say that a distinguished component $C_i$ is \emph{surviving} if it was not included in $Z$. Otherwise, a distinguished component is \emph{deleted}.

    \begin{claim}\label{clm:surviving-components}
        At most $sk$ distinguished surviving components contain a vertex colored by $\phi$ with a color from $A$.
    \end{claim}
    \begin{claimproof}
        Fix some color $a \in A$ and let $x_a \in D$ be a vertex with $\phi(x_a) = a$.
        Consider a distinguished surviving component $C$ such that $a \in \phi(C)$.
        Clearly, this means that $x_a$ is not complete to $C$.
        As $x_a$ is complete to all but at most $s$ of the distinguished components, at most $s$ distinguished surviving components may use color $a$.
        Summing over all $a \in A$ and recalling that $A \subseteq [k]$, and thus $|A| \leq k$, we obtain the claim.
    \end{claimproof}

    Now, let us discuss deleted components.

    \begin{claim}\label{clm:deleted-components}
        Let $C$ be a deleted component.
        Then $(G[C],L|K)$ is a yes-instance of \lcol{$k$}.
    \end{claim}
    \begin{claimproof}
        Suppose for contradiction that $(G[C],L|K)$ is a no-instance of \lcol{$k$}.
        Since $C$ is deleted, it was not included in $\cC_K$, which means that $\cC_K$ already contained $sk+1$ other components.
        In particular, there are $sk+1$ surviving distinguished components $C_i$ such that $(G[C_i],L|K)$ is a no-instance of \lcol{$k$}.

        Consider such a component $C_i$; as it is surviving, it is contained in $G-Z$ and thus $\phi$ colors all its vertices.
        The restriction of $\phi$ to $C_i$ is a proper coloring of $G[C_i]$ respecting $L$.
        If it used only colors from $K$, it would also respect $L|K$, and thus $(G[C_i],L|K)$ would be a yes-instance.
        Consequently, on each such $C_i$, the coloring $\phi$ must use some color that is not in $K$, i.e., some color from $A$.
        This contradicts \cref{clm:surviving-components}, as there are at most $sk$ surviving distinguished components that use colors from $A$.
    \end{claimproof}

    For each deleted component $C$, let $\phi_C$ be a proper coloring of $G[C]$ respecting the lists $L|K$; it exists by \cref{clm:deleted-components}.
    We extend $\phi$ to a coloring of $G$ by coloring each deleted component $C$ with $\phi_C$.

    Clearly, the resulting coloring respects the lists.
    Furthermore, it is proper.
    Indeed, consider an edge with an endpoint in a deleted component $C$.
    Its other endpoint is not in $W \setminus C$, as $C$ is a connected component of $G[W]$, so it is either in $C$ or in $D$.
    In the first case, the two endpoints receive distinct colors as $\phi_C$ is proper.
    In the second case, the endpoint in $C$ receives a color from $K$, while no vertex from $D$ receives a color from $K$ in the coloring $\phi$.
\end{proof}

The following algorithmic corollary summarizes the results obtained so far.

\begin{corollary}\label{cor:reduce}
    Let $(G,L)$ be an instance of \lcol{$k$} such that $G$ is $(P_4+sP_1)$-free and has $n$ vertices, and let $M$ be as in \eqref{eq:constants}.
    In time $2^{\Oh(k)} \cdot n^{\Oh(1)}$ we can return one of the following outputs:
    \begin{enumerate}
        \item a conclusion that a largest independent set in $G$ has size smaller than $M$, or
        \item a clique of size $k+1$ in $G$, or
        \item a nonempty set $Z \subseteq V(G)$ such that, if $G-Z$ admits a proper coloring respecting $L$, then so does $G$.
    \end{enumerate}
\end{corollary}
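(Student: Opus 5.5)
The algorithm chains \cref{thm:mis-p4sp1}, \cref{lem:extract-is}, \cref{lem:components} and \cref{lem:reduce}, and it stops at the first step that produces one of the three outputs.

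\emph{Step 1.} Using \cref{thm:mis-p4sp1}, we compute a largest independent set $X$ of $G$ in polynomial time. If $|X| < M$, we return the first output. Otherwise $|X| \geq M = h_s(2k,\sigma)$, and we keep only some $M$ vertices of $X$ if convenient.

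\emph{Step 2.} Since $\sigma > s$ by \eqref{eq:constants-prop}, we can apply \cref{lem:extract-is} with $p = \sigma$. In polynomial time this gives either a clique of size $k+1$, which is the second output, or a set $S \subseteq X$ of size $\sigma$ satisfying \eqref{eq:extract-is} for every vertex outside $S$. The set $S$ is independent because it is contained in $X$.

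\emph{Step 3.} By \eqref{eq:constants-prop} we have $q \geq 2s+3$ and $\sigma = sq$, so \cref{lem:components} applies to $G$ and $S$. In polynomial time we obtain $W$, $D$ and $q$ components $C_1,\ldots,C_q$ of $G[W]$, and every vertex of $D$ is complete to all but at most $s$ of these components. This is exactly the hypothesis of \cref{lem:reduce}, with $q$ as in \eqref{eq:constants}.

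\emph{Step 4.} To build the families $\cC_K$ we must decide, for each of the $2^k$ sets $K \subseteq [k]$ and each $i \in [q]$, whether $(G[C_i],L|K)$ is a yes-instance. Each $G[C_i]$ is an induced subgraph of the cograph $G[W]$ and is therefore $P_4$-free, so \cref{thm:lcol-cographs} decides it in time $2^{\Oh(k)} \cdot n^{\Oh(1)}$. This amounts to $2^k \cdot q = 2^{\Oh(k)}$ calls, for a total of $2^{\Oh(k)} \cdot n^{\Oh(1)}$. We then select $\cC_K$ greedily, take $Z$ as the union of the distinguished components that lie in no $\cC_K$, and return $Z$. \cref{lem:reduce} guarantees that $Z \neq \emptyset$ and that any coloring of $G-Z$ respecting $L$ extends to $G$, which gives the third output.

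The only step needing care is the running-time accounting. Every step except the list-coloring tests is polynomial in $n$, with $s$ fixed and no dependence on $k$ beyond polynomial terms; note that $|X|$ is at most $n$, so the extraction in \cref{lem:extraction} stays polynomial. The exponential factor comes solely from the $2^{\Oh(k)}$ cograph list-coloring calls. So there is no real obstacle: correctness of each output is inherited directly from the cited lemmas.
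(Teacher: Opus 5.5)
Your proposal is correct and follows the paper's proof essentially step by step. You compute a maximum independent set via \cref{thm:mis-p4sp1}, apply \cref{lem:extract-is} with $p=\sigma$, then \cref{lem:components}, and finally build the families $\cC_K$ using $2^k q = 2^{\Oh(k)}$ calls to \cref{thm:lcol-cographs}, obtaining $Z$ from \cref{lem:reduce}. The optional trimming of $X$ to $M$ vertices is harmless but unnecessary.
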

\begin{proof}
    We start by finding a largest independent set $X$ in $G$ using \cref{thm:mis-p4sp1}.
    If $|X| < M$, then we return the first output.
    Otherwise, we apply \cref{lem:extract-is} to $G$ and $X$, with $p := \sigma$;
    its assumptions are satisfied, as $|X| \geq M = h_s(2k,\sigma)$ and $\sigma > s$ by \eqref{eq:constants-prop}.
    If it returns a clique of size $k+1$, we return the second output.
    Otherwise, we obtain an independent set $S \subseteq X$ of size $\sigma$ that satisfies \eqref{eq:extract-is} for every $v \in V(G) \setminus S$.

    Now we apply \cref{lem:components} to $G$ and $S$; this is legitimate, as $q \geq 2s+3$ by \eqref{eq:constants-prop} and $|S| = \sigma = sq$.
    We obtain a partition $W,D$ of $V(G)$ and distinguished components $C_1,\ldots,C_q$ of $G[W]$.

    For every $K \subseteq [k]$ and every $C_i$ for $i \in [q]$,
    we call the algorithm from \cref{thm:lcol-cographs} on $(G[C_i],L|K)$; recall that $G[C_i]$ is a cograph, as $C_i \subseteq W$ and $G[W]$ is $P_4$-free.
    This way we build families $\cC_K$ for every $K \subseteq [k]$ as in \cref{lem:reduce}.
    Finally, we return the set $Z$ as in \cref{lem:reduce}; it is nonempty and has the desired property.

    Let us discuss the running time.
    Finding $X$ takes time $n^{\Oh(1)}$ by \cref{thm:mis-p4sp1}, and applying \cref{lem:extract-is} takes time $n^{\Oh(1)}$;
    in both cases the exponent in the polynomial depends on $s$.
    Computing $W$, $D$, and the distinguished components takes time $n^{\Oh(1)}$ with an absolute constant in the exponent.
    Finally, we perform $2^k \cdot q$ calls to the algorithm from \cref{thm:lcol-cographs}, each running in time $2^{\Oh(k)} \cdot n^{\Oh(1)}$.
    As $q = (sk+1)2^k+1 = 2^{\Oh(k)}$ by \eqref{eq:constants}, the number of calls is $2^{\Oh(k)}$, so this is $2^{\Oh(k)} \cdot n^{\Oh(1)}$ in total.
    Summing up, the total running time is $2^{\Oh(k)} \cdot n^{\Oh(1)}$.
    This completes the proof of the corollary.
\end{proof}

\paragraph{Wrapping up the proof.}
Finally, we are ready to prove \cref{thm:p4sp1algo}.

\thmpfour*

\begin{proof}
    If $s=0$, then the statement follows from \cref{thm:lcol-cographs}, so we may assume that $s \geq 1$.
    Let $(G,L)$ be an instance of \lcol{$k$} such that $G$ is $(P_4+sP_1)$-free and has $n$ vertices.
    Recall that $M = 2^{\Oh(k^2)}$ by \eqref{eq:M-bound}, and thus also $kM = 2^{\Oh(k^2)}$.
    If $n \leq kM$, then we solve the instance by calling the \XP algorithm from \cref{thm:lcol-p5sp1}. This results in the running time $(2^{\Oh(k^2)})^{k^3} =  2^{\Oh(k^5)}$.
    Thus, we may assume that $n > kM$.

    We apply \cref{cor:reduce} to $(G,L)$.
    If the first or the second output is returned, then we reject the instance.
    In case of the first output this is justified, as any $k$-colorable graph with a largest independent set of size smaller than $M$ has fewer than $kM$ vertices. Thus, $(G,L)$ is a no-instance.
    Similarly, the second output, i.e., a clique of size $k+1$, is a certificate that $(G,L)$ is a no-instance.

    Finally, if the third output is returned, we obtain a nonempty set $Z \subseteq V(G)$.
    We delete it from the graph and start over with the reduced instance $(G-Z,L)$; note that $G-Z$ is still $(P_4+sP_1)$-free, as the class of $(P_4+sP_1)$-free graphs is hereditary.
    This is justified, as $(G,L)$ and $(G-Z,L)$ are equivalent:
    if $(G-Z,L)$ is a yes-instance, then so is $(G,L)$ by \cref{cor:reduce}, and conversely, the restriction of any proper coloring of $G$ respecting $L$ to $V(G) \setminus Z$ is a proper coloring of $G-Z$ respecting $L$.
    As at each step we delete at least one vertex, this procedure is repeated at most $n$ times.

    Consequently, the algorithm consists of at most $n$ iterations, each taking time $2^{\Oh(k)} \cdot n^{\Oh(1)}$, and the application of \cref{thm:lcol-p5sp1} on the remaining graph of size at most $kM$.
    Summing up, the total running time is $2^{\Oh(k^5)} \cdot n^{\Oh(1)}$.
    This completes the proof.
\end{proof}

\newpage
\section{\col{$k$} in $2P_2$-free graphs}
In this section we prove \cref{thm:2p2hard}.
Let us first prove an analogous hardness result for \lcol{$k$}. The constructed instances will satisfy certain additional properties, which will be useful in the proof of \cref{thm:2p2hard}.

\begin{theorem}\label{thm:list2p2hard}
    \lcol{$k$} on $2P_2$-free graphs is \Wone-hard when parameterized by $k$, even when every list has size at most $4$.
    Furthermore, the problem on $n$-vertex instances cannot be solved in time $f(k) \cdot n^{o(k)}$, for any computable function $f$, unless the ETH fails.

    \noindent These lower bounds hold even if every edge $xy$ of the instance $(G,L)$ satisfies the following two properties:
    \begin{itemize}
        \item $|L(x) \cap L(y)|\leq 1$,
        \item if $c \in L(x) \cap L(y)$, and $w$ is a vertex non-adjacent to both $x$ and $y$, then $c \in L(w)$.
    \end{itemize}
\end{theorem}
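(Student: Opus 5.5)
The plan is a reduction from \Clique, using \cref{thm:msi}(1). Given an instance $(G,r)$ with $V(G)=V^1\cup\dots\cup V^r$, $V^i=\{v^i_1,\ldots,v^i_N\}$, we build an instance $(G',L)$ of \lcol{$k$} with $k=\Oh(r)$ colors and lists of size at most $4$. Then an $f(k)\cdot n^{o(k)}$ algorithm would contradict the ETH, and \Wone-hardness follows as well. We use a constant number of private colors per class $i$, say $\alpha_i,\beta_i$ together with a few copies $\alpha'_i,\beta'_i,\ldots$.

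\textbf{Selector gadget.} For class $i$ we place vertices $x^i_0,\ldots,x^i_N$ with lists $\{\alpha_i,\beta_i\}$. We add auxiliary vertices with lists such as $\{\beta_i,\alpha'_i\}$, so that every proper coloring is \emph{monotone}: there is a threshold $t_i$ with $x^i_j$ colored $\alpha_i$ for $j<t_i$ and $\beta_i$ for $j\ge t_i$. This threshold encodes the choice of $v^i_{t_i}$. Monotonicity is enforced only through implication gadgets ``$x^i_j=\beta_i \Rightarrow x^i_{j+1}=\beta_i$''. These are realized by paths through auxiliary vertices whose lists share at most one color with each neighbour, which respects the first required property.

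\textbf{Consistency gadget.} For every non-edge $v^i_uv^j_w$ with $i\neq j$, the event ``$t_i=u$ and $t_j=w$'' is a conjunction of four literals: $x^i_{u-1}=\alpha_i$, $x^i_u=\beta_i$, $x^j_{w-1}=\alpha_j$, $x^j_w=\beta_j$. We forbid it with one vertex $z_{uw}$ with a list of size $4$, one color per literal. The vertex $z_{uw}$ is adjacent to four ``copy'' vertices, each forced to carry the corresponding literal's color exactly when the literal holds. This explains the bound $4$ on the list sizes. The correspondence between $r$-cliques and list colorings is then routine to check.

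\textbf{Obtaining $2P_2$-freeness.} The key trick is that adding an edge between two vertices with disjoint lists never changes the set of proper list colorings. We therefore add \emph{all} such edges. After that, an induced $2P_2$ $ab,cd$ requires $L(a)\cap L(c)$, $L(a)\cap L(d)$, $L(b)\cap L(c)$ and $L(b)\cap L(d)$ all to be nonempty. Since lists are confined to the few colors of at most two classes, this restricts the possible $2P_2$'s to a bounded number of list types. For these remaining pairs of vertex types, the edges must be arranged as chain graphs (nested neighbourhoods along the index $j$), which are $2P_2$-free. This is why threshold (monotone) encodings are chosen rather than one-hot ones. The same ``complete unless lists intersect'' structure is what yields the second property: a vertex $w$ nonadjacent to both endpoints of an edge $xy$ must share colors with both of them, and we design the lists so that it then contains the common color $c$.

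I expect this last step to be the main obstacle. The hard part is choosing the auxiliary color copies so that the remaining non-forced adjacencies between gadgets of the same type (especially among the $z_{uw}$ vertices and the copy vertices, whose neighbourhoods come from an arbitrary graph $G$) form chain-like, nested patterns. I would first try giving each edge gadget colors indexed by the pair $\{i,j\}$ and using the order on $[N]$ to make the neighbourhoods nested. If that fails, I would replace each literal by a monotone ``prefix'' literal ($t_i\ge u$ or $t_i\le u$), since prefix conditions give nested neighbourhoods automatically.
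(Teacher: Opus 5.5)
Your outer shell agrees with the paper: a reduction from \Clique with $\Oh(1)$ colors per class, a threshold encoding, one vertex with a size-$4$ list per non-edge, and finally joining all pairs of vertices with disjoint lists. The gap is that the step you defer is the whole content of the theorem, and the gadgets you do sketch fail it.

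Joining disjoint-list pairs only destroys $2P_2$'s whose four lists are not pairwise intersecting. Your gadgets, however, consist of $\Theta(N)$ same-type copies whose lists pairwise intersect. Concretely, suppose ``$x^i_j=\beta_i\Rightarrow x^i_{j+1}=\beta_i$'' is realised by a path $x^i_{j+1}w_jw'_jx^i_j$ with $L(w_j)=\{\alpha_i,\alpha'_i\}$ and $L(w'_j)=\{\alpha'_i,\beta_i\}$. Then for $j\neq j'$ the edges $x^i_{j+1}w_j$ and $x^i_{j'+1}w_{j'}$ induce a $2P_2$:
all four lists contain $\alpha_i$, so the completion adds nothing;
$x^i_{j+1},x^i_{j'+1}$ (and likewise $w_j,w_{j'}$) receive equal colors in some intended coloring, so they cannot be joined;
a cross edge such as $x^i_{j'+1}w_j$ would be a new constraint that has to be built into the gadget and re-verified.
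The same issue arises for $z_{uw}$ with its copy vertices versus $z_{u'w'}$ with its copies, for the same pair $\{i,j\}$.

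The second property does not come for free either. After the completion, $w$ being non-adjacent to $x$ and $y$ only yields $L(w)\cap L(x)\neq\emptyset\neq L(w)\cap L(y)$, not $c\in L(w)$. For example, $w'_{j'}$ with $j'\notin\{j,j+1\}$ is non-adjacent to both ends of $x^i_{j+1}w_j$, yet $\alpha_i\notin L(w'_{j'})$.

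The missing idea is a selector whose constraints are \emph{themselves} realised by nested (chain-graph) adjacencies. Then the edges that enforce the semantics are also the ones that prevent $2P_2$'s. The paper does this as follows.
\begin{itemize}
\item It uses four layers $X_0,\ldots,X_3$ of size $\ell+1$, with $L(x_{a,p})\subseteq\{c_a,c_{a+1}\}$ and the ends fixed.
\item Consecutive layers are joined by $x_{a,p}x_{a+1,q}\in E$ iff $p\ge q$, while $X_0,X_2$ and $X_1,X_3$ are complete.
\item The thresholds then satisfy $t_0\le t_1\le t_2\le t_3\le t_0$, hence coincide. The same cyclic inequality rules out a $2P_2$ that uses both non-consecutive layer pairs.
\item The non-edge vertex $y^{i,j}_{p,q}$ has list $\{c^i_0,c^i_1,c^j_0,c^j_1\}$ and is attached directly, with no copy vertices. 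Its neighbours in $S^i$ are $X^i_2$, the prefix $\{x^i_{1,p'}:p'\le p+1\}$ and the suffix $\{x^i_{3,p'}:p'\ge p\}$, and symmetrically in $S^j$.
\item Thus the neighbourhoods of two non-edge vertices in $S^i$ differ only inside $X^i_1$ and $X^i_3$, which are complete to each other.
\end{itemize}

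Your ``prefix literal'' fallback points in this direction. Without a concrete gadget and the accompanying case analysis, though, the proof is not there. That analysis is $2P_2$-freeness of $S^i$ and of $G[V(S^i)\cup Y^i]$, plus the two list properties.
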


\begin{proof}
We reduce from \Clique.
Let $(G',r)$ be an instance, and let $V^1,\ldots,V^r$ be the partition of $V(G')$, where every part has size $\ell$.
For every $i \in [r]$, fix an ordering of $V^i$ and write its vertices as $v^i_1,\ldots,v^i_\ell$.

We will create an instance $(G,L)$ of \lcol{$k$}, where $k = 4r$,
such that $G$ is $2P_2$-free and admits a proper coloring respecting $L$ if and only if $(G',r)$ is a yes-instance of \Clique.
The colors used in the lists are $\bigcup_{i \in [r]} \{c_0^i,c_1^i,c_2^i,c_3^i\}$.

\paragraph{Selector gadget.}
For each $i \in [r]$, we introduce a \emph{selector gadget} $S^i$.
Fix $i \in [r]$.
The vertex set of the gadget is partitioned into four independent sets $X^i_0,X^i_1,X^i_2,X^i_3$ called \emph{layers}.
For $a \in \{0,1,2,3\}$, we have $X^i_a = \{x^i_{a,p} \mid p \in [\ell+1]\}$.
Arithmetic operations on the index $a$ are performed modulo $4$.
We also drop the upper index if it is clear from the context.

The edges between the sets are as follows. First, we add all edges between $X_0$ and $X_2$, and between $X_1$ and $X_3$. Second, for each $a \in \{0,1,2,3\}$ and each $p,q \in [\ell+1]$, we add an edge $x_{a,p}x_{a+1,q}$ if and only if $p \geq q$.

Finally, the lists of vertices of the gadget are set as follows:
\[
L(x_{a,p}) = \begin{cases}
\{c_{a+1}^i\} & \text{ if } p=1,\\
\{c_{a}^i,c_{a+1}^i\} & \text{ if } p \in \{2,\ldots,\ell\},\\
\{c_a^i\} & \text{ if } p=\ell+1.\\
\end{cases}
\]

We say that a coloring $\phi$ of the selector gadget \emph{selects} a vertex $v^i_t \in V^i$
if, for all $a \in \{0,1,2,3\}$ and $p \in [\ell+1]$, we have
\[
        \phi(x_{a,p}) =
        \begin{cases}
            c_{a+1}^i & \text{ if } p \leq t,\\
            c_a^i & \text{ if } p > t.
        \end{cases}
\]
Note that such a coloring is unique.

\begin{claim}\label{clm:from-selection-to-coloring}
    For every $t \in [\ell]$, the coloring of $S^i$ that selects $v^i_t$ is proper and respects the lists.
\end{claim}
\begin{claimproof}
    The prescribed colors respect the lists. Edges between non-consecutive layers join vertices with disjoint lists.
    Two vertices $x_{a,p}$ and $x_{a+1,q}$ from consecutive layers receive the same color only when
    $p \leq t < q$. In this case $p<q$, and hence these vertices are not adjacent.
\end{claimproof}

\begin{claim}\label{clm:from-coloring-to-selection}
    Every proper coloring $\phi$ of $S^i$ respecting the lists selects a vertex $v^i_t \in V^i$.
\end{claim}
\begin{claimproof}
    For each $a \in \{0,1,2,3\}$, let $t_a$ be the largest index $p$
    such that $\phi(x_{a,p})=c^i_{a+1}$.
    This is well-defined and belongs to $[\ell]$,
    since $\phi(x_{a,1})=c^i_{a+1}$ and $\phi(x_{a,\ell+1})=c^i_a$.

    We claim that $t_a \leq t_{a+1}$ for every $a$ (where indices are computed modulo 4).
    Indeed, for every $q \leq t_a$, the vertices $x_{a,t_a}$ and $x_{a+1,q}$ are adjacent.
    Since $\phi(x_{a,t_a})=c^i_{a+1}$,
    properness of $\phi$ and the list of $x_{a+1,q}$ imply that $\phi(x_{a+1,q})=c^i_{a+2}$.
    In particular, $t_{a+1} \geq t_a$.

    As the indices $a$ are taken modulo $4$, these inequalities give
    \[
        t_0 \leq t_1 \leq t_2 \leq t_3 \leq t_0.
    \]
    Thus, all four indices are equal; denote their common value by $t$.
    Applying the preceding argument with $a-1$ in place of $a$ shows that
    $\phi(x_{a,p})=c^i_{a+1}$ for every $p \leq t$. For $p>t$, the
    maximality of $t_a=t$ and the list of $x_{a,p}$ imply that
    $\phi(x_{a,p})=c^i_a$.
    Thus, $\phi$ indeed selects $v^i_t$.
\end{claimproof}

By \cref{clm:from-selection-to-coloring,clm:from-coloring-to-selection}, proper list colorings of $S^i$ are in one-to-one correspondence with vertices of $V^i$.

\paragraph{Nonedge gadget.}
    Let $1 \leq i<j \leq r$ and let $v^i_pv^j_q \notin E(G')$.
    The \emph{nonedge gadget} introduced for this nonedge is a single vertex $y^{i,j}_{p,q}$ with list $\{c_0^i,c_1^i,c_0^j,c_1^j\}$.
    We make $y^{i,j}_{p,q}$ adjacent to the following vertices from $S^i$ and $S^j$ (see also \cref{fig:nonedgegadget}):
    \begin{itemize}
        \item the vertices from $X^i_2$ and $X^j_2$,
        \item the vertices $x^i_{1,p'}$ for $p' \leq p+1$ and $x^i_{3,p'}$ for $p' \geq p$,
        \item the vertices $x^j_{1,q'}$ for $q' \leq q+1$ and $x^j_{3,q'}$ for $q' \geq q$.
    \end{itemize}
    By $Y^{i,j}$ we denote the set of all nonedge-gadget vertices introduced for nonedges between $V^i$ and $V^j$,
    and by $Y^i$ the set of all nonedge-gadget vertices introduced for nonedges with one endpoint in $V^i$.

\begin{claim}\label{claim:nonedge-gadget}
    Let $i,j \in [r]$, where $i<j$, and let $v^i_pv^j_q \notin E(G')$.
    Let $y^{i,j}_{p,q}$ be the corresponding nonedge-gadget vertex.
    Let $\phi$ be a proper coloring of $S^i$ and $S^j$ that respects the lists,
    and suppose that it selects $v^i_s \in V^i$ and $v^j_t \in V^j$.
    Then, $\phi$ can be extended to $y^{i,j}_{p,q}$ respecting its list if and only if $(s,t) \neq (p,q)$.
\end{claim}
\begin{claimproof}
    See \cref{fig:nonedgegadget} for an illustration of the argument.
    Since $\phi$ selects $v^i_s$, the color $c_1^i$ appears on a neighbor of
    $y^{i,j}_{p,q}$ in $X^i_1$ if and only if $s \leq p$: indeed, such a neighbor
    can be chosen as $x^i_{1,s+1}$ when $s \leq p$, whereas all neighbors of
    $y^{i,j}_{p,q}$ in $X^i_1$ have index at most $p+1 \leq s$ when $s>p$.
    Similarly, the color $c_0^i$ appears on a neighbor of $y^{i,j}_{p,q}$ in
    $X^i_3$ if and only if $s \geq p$.
    Hence, both colors $c_0^i$ and $c_1^i$ are unavailable for $y^{i,j}_{p,q}$
    if and only if $s=p$.

    The symmetric argument for the selector gadget for $j$ shows that both
    colors $c_0^j$ and $c_1^j$ are unavailable if and only if $t=q$.
    The neighbors in $X^i_2 \cup X^j_2$ receive only colors $c_2^i,c_3^i,c_2^j,c_3^j$,
    none of which belongs to $L(y^{i,j}_{p,q})$.
    Therefore, no color in $L(y^{i,j}_{p,q})$ is available precisely when
    $s=p$ and $t=q$.
\end{claimproof}

\begin{figure}
    \centering
    \includegraphics{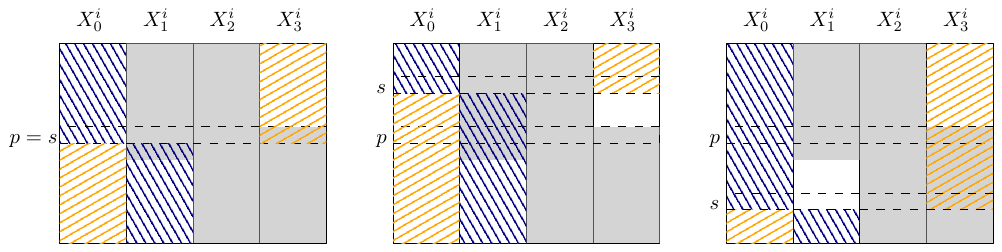}
    \caption{Illustration for \cref{claim:nonedge-gadget}.
        Columns indicate the four layers of the selector gadget for $i$.
        The shaded area depicts the neighborhood of $y^{i,j}_{p,q}$ in this selector gadget.
        Suppose that $\phi$ selects $v^i_s$ --- we distinguish three cases: $s=p$ (left), $s < p$ (middle), or $s > p$ (right).
        Blue (falling pattern) indicates the vertices colored $c^i_1$, while orange (rising pattern) indicates the vertices colored $c^i_0$.
        Only when $s=p$ are both colors $c^i_0$ and $c^i_1$ unavailable for $y^{i,j}_{p,q}$.}
    \label{fig:nonedgegadget}
\end{figure}

\paragraph{Completing the construction.}
    Finally, we add all edges between vertices of $S^i$ and $S^j$ for $i \neq j$,
    and, for all $1 \leq i<j \leq r$ and $h \in [r] \setminus \{i,j\}$, all edges between $Y^{i,j}$ and $S^h$.
    Note that in this step we add edges between vertices with disjoint lists, so these edges do not influence the coloring of $G$.
    This completes the definition of $(G,L)$.
    Clearly, the instance can be constructed in polynomial time.

    \paragraph{Equivalence of instances.}
    Now, let us argue that $(G,L)$ is a yes-instance of \lcol{$k$} if and only if $(G',r)$ is a yes-instance of \Clique.

\begin{claim}
    Let $\phi$ be a proper coloring of $G$ that respects the lists $L$,
    and, for $i \in [r]$,
    let $v^i \in V^i$ be the vertex of $V^i$ selected by $\phi$ restricted to $S^i$.
    Then, the set $\{v^1,\ldots,v^r\}$ is a clique in $G'$.
\end{claim}
\begin{claimproof}
    For contradiction, suppose that $v^iv^j \notin E(G')$, where $i<j$.
    Write $v^i=v^i_p$ and $v^j=v^j_q$. The nonedge gadget $y^{i,j}_{p,q}$ exists,
    and \cref{claim:nonedge-gadget} shows that the restriction of $\phi$ to its
    neighbors cannot be extended to $y^{i,j}_{p,q}$, a contradiction.
\end{claimproof}

\begin{claim}
    For every clique $C$ of $G'$ that intersects each set $V^i$ in exactly one vertex,
    there exists a proper coloring $\phi$ of $G$ that respects the lists $L$.
\end{claim}
\begin{claimproof}
    Color every selector gadget $S^i$ according to the vertex of $C \cap V^i$ that it selects.
    Consider a nonedge gadget $y^{i,j}_{p,q}$.
    Since $C$ is a clique, the vertices of $C$ in $V^i$ and $V^j$ are not $v^i_p$ and $v^j_q$, respectively.
    Thus, \cref{claim:nonedge-gadget} gives an available color for $y^{i,j}_{p,q}$.
    As nonedge gadgets form an independent set, we may choose these colors independently.
    As all edges added in the finishing step join vertices with disjoint lists,
    the obtained coloring is proper and respects the lists.
\end{claimproof}

\paragraph{$2P_2$-freeness.}
Now, let us verify that $G$ is $2P_2$-free.
We will do it in several steps.

\begin{claim}\label{clm:si-2p2}
    For each $i \in [r]$, the selector gadget $S^i$ is $2P_2$-free.
\end{claim}
\begin{claimproof}
    For contradiction, suppose that there is $i \in [r]$ for which $S^i$ contains an induced copy of $2P_2$ with edges $vv'$ and $uu'$.
    To simplify the notation, we drop the upper index $i$.

    Suppose first that one of the two edges joins two consecutive layers $X_a$ and $X_{a+1}$.
    By symmetry, assume that $a=0$ and $v \in X_0$ and $v' \in X_1$.
    Note that this means that $u,u' \notin X_2 \cup X_3$ as $X_0$ is complete to $X_2$ and $X_1$ is complete to $X_3$.
    Consequently, both edges $vv'$ and $uu'$ join layers $X_0$ and $X_1$.
    However, the graph between these two layers is a chain graph, and hence it is $2P_2$-free.

    Thus, both edges join non-consecutive layers.
    Note that they cannot join the same pair of layers, as the graph between $X_0$ and $X_2$,
    as well as between $X_1$ and $X_3$, is complete bipartite and hence $2P_2$-free.

    So, by symmetry, we may assume that $v \in X_0$, $v' \in X_2$, $u \in X_1$, and $u' \in X_3$.
    Let $p,p',q,q'$ be such that $v=x_{0,p}$, $v'=x_{2,p'}$, $u=x_{1,q}$, and $u'=x_{3,q'}$.
    As $v$ is not adjacent to $u$, we have $p < q$.
    Repeating the same argument yields the following sequence of inequalities:
    \[
        p < q < p' < q' < p,
    \]
    a contradiction.
\end{claimproof}

Define $H^i=G[V(S^i) \cup Y^i]$.

\begin{claim}\label{clm:hi-2p2}
    For each $i \in [r]$, the graph $H^i$ is $2P_2$-free.
\end{claim}
\begin{claimproof}
    Fix some $i \in [r]$. When referring to vertices of $S^i$, we drop the upper index $i$ for simplicity of notation.

    Every vertex $y \in Y^i$ has an associated index $t=t(y) \in [\ell]$,
    namely the index of its endpoint in $V^i$, such that
    \begin{equation}
        N^i(y) := N(y) \cap V(S^i) = X_2 \cup \{x_{1,p} \mid p \leq t+1\}
        \cup \{x_{3,p} \mid p \geq t\}.\label{eq:defNi}
    \end{equation}

    For contradiction, suppose that $H^i$ contains an induced copy of $2P_2$.
    Note that at least one of the four vertices of this copy must belong to $Y^i$, as $S^i$ is $2P_2$-free by \cref{clm:si-2p2}.

    Consider two vertices $y,y'$ in $Y^i$, with associated indices $t(y) \leq t(y')$.
    By \eqref{eq:defNi}, every vertex in $N^i(y) \setminus N^i(y')$ belongs to $X_3$,
    whereas every vertex in $N^i(y') \setminus N^i(y)$ belongs to $X_1$.
    Since $X_1$ is complete to $X_3$, no two edges incident to, respectively, $y$ and $y'$,
    may form an induced $2P_2$.

    Since $Y^i$ is independent we are left with the case that the considered copy of $2P_2$
    is formed by an edge $yu$ where $y \in Y^i$ and $u \in V(S^i)$,
    together with an edge of $S^i$.
    Let $t=t(y)$. By \eqref{eq:defNi}, every vertex outside $N^i(y)$ belongs to $X_0$, to $\{x_{1,p} \mid p\geq t+2\}$,
    or to $\{x_{3,p} \mid p\leq t-1\}$. Consequently, an edge with both endpoints
    outside $N^i(y)$ has one of the following forms:
    \begin{description}
        \item[Case 1.] $x_{0,a}x_{1,b}$ for $a\geq b\geq t+2$, or
        \item[Case 2.] $x_{0,a}x_{3,b}$ for $b\geq a$ and $b\leq t-1$, or
        \item[Case 3.] $x_{1,a}x_{3,b}$ for $a\geq t+2$ and $b\leq t-1$.
    \end{description}

    Let us discuss possible locations of $u$. Note that
    \[
    u \in \{x_{1,p} \mid p\leq t+1\} \cup X_2 \cup \{x_{3,p} \mid p \geq t\}.
    \]

    \begin{description}
        \item[Case 1.]
        If $u = x_{1,p}$, then $p \leq t+1 < b \leq a$, and thus $u$ is adjacent to $x_{0,a}$.
        If $u \in X_2$, then it is adjacent to $x_{0,a}$.
        If $u \in X_3$, then it is adjacent to $x_{1,b}$.

        \item[Case 2.]
        If $u \in X_1$, then it is adjacent to $x_{3,b}$.
        If $u \in X_2$, then it is adjacent to $x_{0,a}$.
        Finally, if $u = x_{3,p}$, then $p\geq t>b\geq a$ and thus $u$ is adjacent to $x_{0,a}$.

        \item[Case 3.]
        If $u \in X_1$, then it is adjacent to $x_{3,b}$.
        If $u =x_{2,p}$, then either $p \leq a$ and $u$ is adjacent to $x_{1,a}$,
        or $p > a \geq b+2 > b$ and $u$ is adjacent to $x_{3,b}$.
        Finally, if $u \in X_3$, then it is adjacent to $x_{1,a}$.
    \end{description}
    Summing up, in all cases $u$ is adjacent to an endpoint of the edge outside $N^i(y)$.
    Thus, we conclude that $H^i$ is $2P_2$-free.
\end{claimproof}

Finally, we are ready to prove that $G$ is $2P_2$-free.
\begin{claim}
    $G$ is $2P_2$-free.
\end{claim}
\begin{claimproof}
    For contradiction, suppose now that $G$ contains an induced $2P_2$; denote its edges by $vv'$ and $uu'$.
    Since nonedge gadgets form an independent set, each of these two edges has at least one endpoint in a selector gadget. By symmetry, let $u$ and $v$ be such endpoints. Note that they both belong to the selector gadget, say, $S^i$, since otherwise they would be adjacent.
    Similarly, none of $u',v'$ belongs to $S^j$ for $j \neq i$.

    Suppose now that one of $u',v'$, say, $v'$, is a nonedge-gadget vertex that does not belong to $Y^i$.
    Then $v'$ is complete to $S^i$ by the finishing step, and hence it is adjacent to $u$, a contradiction.

    Consequently, all four vertices $u,u',v,v'$ belong to the vertex set of $H^i$.
    This means that $H^i$ contains an induced $2P_2$, a contradiction with \cref{clm:hi-2p2}.
\end{claimproof}

\paragraph{Additional properties of edges.}
Finally, let us show that $(G,L)$ satisfies the additional properties in the theorem statement.

\begin{claim}
    Every edge $xy$ of the instance $(G,L)$ satisfies the following two properties:
    \begin{itemize}
        \item $|L(x) \cap L(y)|\leq 1$,
        \item if $c \in L(x) \cap L(y)$, and $w$ is a vertex non-adjacent to both $x$ and $y$, then $c \in L(w)$.
    \end{itemize}
\end{claim}
\begin{claimproof}
    We first identify the edges whose endvertices have a common color in their
    lists. Edges added in the final step join vertices with disjoint lists.
    The same holds for the edges between $X^i_0$ and $X^i_2$, and between $X^i_1$
    and $X^i_3$, for every $i \in [r]$.
    For a consecutive-layer edge
    \[
        x^i_{a,p}x^i_{a+1,q},
    \]
    the only possible common color is $c^i_{a+1}$.
    Finally, let $y=y^{i,j}_{p,q}$ be a nonedge gadget. The only edges incident with $y$ whose
    endvertices have a common color are, up to symmetry between $i$ and $j$,
    \[
        yx^i_{1,s} \text{ with common color } c^i_1,
        \qquad
        yx^i_{3,s} \text{ with common color } c^i_0.
        \tag{$\star$}
    \]
    Thus, the intersection of the lists at the ends of every edge indeed has size at
    most one.

    It remains to prove the second property.
    Fix an edge $xy$ with a common color $c$.
    We will show that every vertex $w$ with $c \notin L(w)$ is adjacent to $x$ or to $y$.

    First, suppose that $xy=x^i_{a,p}x^i_{a+1,q}$ is a consecutive-layer edge,
    so $c=c^i_{a+1}$. Within $S^i$, every vertex in $X^i_{a+2}$ is adjacent to
    $x$, and every vertex in $X^i_{a+3}$ is adjacent to $y$.
    The only vertex of $X^i_a$ whose list does not contain $c$ is $x^i_{a,\ell+1}$, which is adjacent
    to $y$, and the only vertex of $X^i_{a+1}$ whose list does not contain $c$
    is $x^i_{a+1,1}$, which is adjacent to $x$.
    Hence the assertion holds for every vertex of $S^i$.

    Consider now a vertex $w$ outside $S^i$.
    If $w$ belongs to another selector gadget, or is a nonedge-gadget vertex not in $Y^i$,
    then it is complete to $S^i$ by the final step of the construction.
    So suppose that $w \in Y^i$.
    If $a\in\{0,3\}$, then $c\in\{c^i_0,c^i_1\}\subseteq L(w)$;
    if $a\in\{1,2\}$, then one endpoint of $xy$ belongs to $X^i_2$ and is adjacent to $w$.
    This proves the assertion in the consecutive-layer case.

    It remains to consider an edge as in $(\star)$.
    Let $x$ belong to $S^i$ and suppose first that $w$ also belongs to $S^i$.
    If $x = x^i_{1,s}$, then the common color is $c^i_1$.
    The vertices of $S^i$ whose lists do not contain $c^i_1$ are $x^i_{0,\ell+1}$, $x^i_{1,1}$,
    all vertices of $X^i_2$, and all vertices of $X^i_3$.
    The first is adjacent to $x^i_{1,s}$,
    the second and the vertices of $X^i_2$ are adjacent to the nonedge gadget,
    and every vertex of $X^i_3$ is adjacent to $x^i_{1,s}$.
    If $x= x^i_{3,s}$, then the common color is $c^i_0$.
    In this case, the vertices of $S^i$ whose lists do not contain $c^i_0$ are $x^i_{0,1}$,
    all vertices of $X^i_1$, all vertices of $X^i_2$, and $x^i_{3,\ell+1}$.
    The first and all vertices of $X^i_1$ are adjacent to $x^i_{3,s}$, while the remaining
    vertices are adjacent to the nonedge gadget.

    Finally, suppose that $w$ does not belong to $S^i$.
    If $w$ is in a selector gadget $S^j$ for $j \neq i$, then it is adjacent to $x$.
    If $w \in Y^i$, then it has both $c^i_0$ and $c^i_1$ in its list,
    and if $w$ is a nonedge-gadget vertex not in $Y^i$, then it is complete to $S^i$ and thus adjacent to $x$.

    Thus, every vertex whose list omits the common color is adjacent to one of the endpoints of the edge, as required. This completes the proof of the claim.
\end{claimproof}

Summing up, we showed that $(G,L)$ is a yes-instance of \lcol{$k$} if and only if
$(G',r)$ is a yes-instance of \Clique.
All lists have size at most $4$ by construction, and the final two claims establish the $2P_2$-freeness and the
additional list properties stated in the theorem.

Recall that
\[
    |V(G)|\leq 4r(\ell+1)+\binom{r}{2}\ell^2=\Oh(|V(G')|^2)
\]
and $k=4r$. Thus, an algorithm
running in time $f(k) \cdot n^{o(k)}$ on $n$-vertex instances would solve \Clique in time
$f(4r) \cdot |V(G')|^{o(r)}$, contradicting \cref{thm:msi}. This completes the proof.
\end{proof}

Now we are ready to prove our main result.

\thmtwoptwo*
\begin{proof}
    We reduce from the hard instances of \lcol{$k$} given by \cref{thm:list2p2hard}.
    Let $(G',L)$ be such an instance. By relabeling colors, we may assume that
    every list is a subset of $[k]$. Thus, for every edge $xy$ of $G'$, we have
    \begin{itemize}
        \item $|L(x) \cap L(y)|\leq 1$, and
        \item if $i \in L(x) \cap L(y)$ and $w$ is non-adjacent to both $x$ and $y$, then $i \in L(w)$.
    \end{itemize}

    We apply the standard reduction from \lcol{$k$} to \col{$k$}.
    Let $G$ be obtained from $G'$ by adding a clique
    $K=\{q_1,\ldots,q_k\}$ and, for every $v \in V(G')$ and $i \in [k]$,
    adding the edge $vq_i$ if and only if $i \notin L(v)$.
    It is straightforward to verify that $G$ is $k$-colorable if and only if $(G',L)$ is a yes-instance
    of \lcol{$k$}.

    It remains to show that $G$ is $2P_2$-free. Suppose, for contradiction,
    that $G$ contains an induced $2P_2$. At most two vertices of this induced
    subgraph belong to $K$, since $K$ is a clique. If none belongs to $K$, then
    $G'$ contains an induced $2P_2$, a contradiction.

    Suppose next that two vertices of the induced $2P_2$ belong to $K$, say $q_i$ and $q_j$.
    They form one of the edges of $2P_2$, and the other edge is some edge $xy$ of $G'$.
    Since $q_i$ and $q_j$ are non-adjacent to both $x$ and $y$,
    we have $i,j \in L(x) \cap L(y)$, contradicting $|L(x) \cap L(y)|\leq 1$.

    Finally, suppose that exactly one vertex of the induced $2P_2$ belongs to
    $K$, say $q_i$.
    Its edge has the form $q_iw$ with $w \in V(G')$, and the other edge has the form $xy$ with $x,y \in V(G')$.
    The absence of the edges $q_ix$ and $q_iy$ gives $i \in L(x) \cap L(y)$,
    and $w$ is non-adjacent to both $x$ and $y$. The second property above
    yields $i \in L(w)$, contradicting the edge $q_iw$.

    The construction can be performed in polynomial time and it preserves the parameter.
    Thus, the \Wone-hardness and the ETH lower bound from \cref{thm:list2p2hard}
    transfer to \col{$k$}.
\end{proof}

\newpage
\section{Hardness of 6-Coloring in $(P_4+P_2)$-free graphs}
In this section we prove \cref{thm:p4p2hard}.

\thmsixcoloring*

\setcounter{section}{1}          
\setcounter{theorem}{3}          
\setcounter{claim}{0}  

\begin{proof}
    Note that it is sufficient to prove the result for $k=6$.
    Indeed, adding a universal vertex to a $(P_4+P_2)$-free graph $G$ results in a $(P_4+P_2)$-free graph, as no vertex of $P_4+P_2$ is adjacent to the remaining five ones.
    As the obtained graph is $(k+1)$-colorable if and only if $G$ is $k$-colorable, and iterating this operation would show hardness for any $k \geq 6$.

    We reduce from \lcol{$3$} in bipartite graphs, which is known to be \NP-hard even if each list has size one or three~\cite{Kratochvil1993}. Such a variant is usually called $3$-\textsc{Precoloring Extension}, and we can think of the vertices with lists of size one as already precolored with a given color.

    Let $G'$ be a bipartite graph with some vertices already colored with colors from $[3]$.
    Let $X',Y'$ be the two parts of the bipartition of $G'$.
    We perform some simplifying steps on $G'$.
    First, as long as $X'$ (resp., $Y'$) contains two vertices precolored with the same color, we merge these vertices into one.
    Second, if $X'$ (resp., $Y'$) contains no vertex precolored with color $i \in [3]$, we introduce a new vertex $x_i$ (resp., $y_i$),
    precolored with color $i$, and add it to $X'$ (resp., $Y'$).
    These two steps preserve bipartiteness and do not affect the existence of a proper coloring extending the given precoloring.
    Keep calling the resulting graph $G'$ and its parts $X',Y'$.

    Thus, we can assume that there are vertices $x_1,x_2,x_3 \in X'$ and $y_1,y_2,y_3 \in Y'$ precolored with colors $1,2,3$, respectively.
    Furthermore, no other vertex is precolored.
    Define
    \begin{align*}
        Q & =  \bigcup_{i \in [3]} \{x_i,y_i\}, \\
        X & = X' \setminus \{x_1,x_2,x_3\}, \\
        Y & = Y' \setminus \{y_1,y_2,y_3\}.
    \end{align*}

    We will build a graph $G$ that has a proper 6-coloring if and only if $G'$ has a proper 3-coloring extending the given precoloring.
    We start by introducing to $G$ all vertices of $G'$.
    The vertices in $Q$ form a clique, and we make $X'$ and $Y'$ complete to each other.
    Note that no further edges inside $X'$ or inside $Y'$ are introduced, i.e., both $X$ and $Y$ are independent sets,
    $X$ is anticomplete to $\{x_1,x_2,x_3\}$, and $Y$ is anticomplete to $\{y_1,y_2,y_3\}$.

    Next, for each $i \in [3]$, we introduce an independent set $Z_i = \{ z_{i,e} \mid e \in E(G')\}$, and we write $Z = Z_1 \cup Z_2 \cup Z_3$.
    For distinct $i,j \in [3]$, the sets $Z_i$ and $Z_j$ are complete to each other.
    Finally, for each $i \in [3]$ and each edge $e = xy \in E(G')$ where $x \in X'$ and $y \in Y'$,
    we add edges from $z_{i,e}$ to $\{x,y\} \cup (Q \setminus \{x_i,y_i\})$.
    This completes the construction of $G$; it can clearly be done in polynomial time.

    For convenience, the set of six colors used to color $G$ will be denoted by $\{1,2,3,1',2',3'\}$, and we define $\pi(i)=\pi(i')=i$ for $i \in [3]$.
    The intuition is that, as $Q$ is a clique of size 6, in every 6-coloring of $G$ we may assume that $x_i$ gets color $i$ and $y_i$ gets color $i'$; applying $\pi$ then yields a 3-coloring of $G'$, while $z_{i,e}$, whose only available colors are $i$ and $i'$, forbids the endpoints of $e$ from using $i$ and $i'$ simultaneously.

    \begin{claim}
        If $G'$ has a proper 3-coloring extending the given precoloring, then $G$ has a proper 6-coloring.
    \end{claim}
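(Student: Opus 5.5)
Let $\psi$ be a proper 3-coloring of $G'$ extending the precoloring. We build $\phi$ on $V(G)$ in three steps.

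\textbf{Coloring $G'$.} Set $\phi(v)=\psi(v)$ for every $v \in X'$ and $\phi(v)=\psi(v)'$ for every $v \in Y'$. In particular $\phi(x_i)=i$ and $\phi(y_i)=i'$, so the clique $Q$ receives six distinct colors. The sets $X'$ and $Y'$ use disjoint color sets $\{1,2,3\}$ and $\{1',2',3'\}$, so all edges between $X'$ and $Y'$ are properly colored. There are no edges inside $X'$ or inside $Y'$ except those inside $Q$, which are already handled.

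\textbf{Coloring $Z$.} For $e=xy \in E(G')$ with $x \in X'$ and $y \in Y'$, and for $i \in [3]$, the vertex $z_{i,e}$ is adjacent to $Q\setminus\{x_i,y_i\}$. That set carries exactly the colors outside $\{i,i'\}$, so $z_{i,e}$ must receive $i$ or $i'$. Its remaining neighbors in $G'$ are $x$ and $y$, which have colors $\psi(x)$ and $\psi(y)'$. Since $\psi$ is proper, $\psi(x)\neq\psi(y)$, so $x$ and $y$ cannot block both $i$ and $i'$. Concretely:
\begin{itemize}
\item if $\psi(x)\neq i$, set $\phi(z_{i,e})=i$;
\item otherwise set $\phi(z_{i,e})=i'$, which is valid because then $\psi(y)\neq i$.
\end{itemize}

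\textbf{Checking the remaining edges.} Edges between $Z_i$ and $Z_j$ for $i\neq j$ join a vertex colored in $\{i,i'\}$ to one colored in $\{j,j'\}$, so they are proper. Each $Z_i$ is independent, so there is nothing to check inside it. This accounts for every edge of $G$ in the construction, so $\phi$ is a proper 6-coloring.

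There is no real obstacle here. The only point needing care is that each $z_{i,e}$ always has a free color in $\{i,i'\}$, and this is exactly where properness of $\psi$ on the edge $e$ is used.
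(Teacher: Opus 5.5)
Your proposal is correct and follows essentially the same argument as the paper: the same coloring ($\psi$ on $X'$, primed $\psi$ on $Y'$, and $z_{i,e}$ colored $i$ unless $\psi(x)=i$, in which case $i'$), with properness of $\psi$ on $e$ guaranteeing that $i'$ is free. Your edge check also covers the same cases as the paper's: the clique $Q$, the $X'$--$Y'$ edges, the $Z_i$--$Z_j$ edges, and the neighbors of each $z_{i,e}$.
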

    \begin{claimproof}
        Let $\phi$ be such a coloring of $G'$, so $\phi(x_i)=\phi(y_i)=i$ for $i \in [3]$.
        Color $G$ by setting $c(v) = \phi(v)$ for $v \in X'$ and $c(v)=\phi(v)'$ for $v \in Y'$, and, for $i \in [3]$ and $e = xy \in E(G')$ with $x \in X'$ and $y \in Y'$, setting $c(z_{i,e}) = i$ if $\phi(x) \neq i$ and $c(z_{i,e})=i'$ otherwise; note that in the latter case $\phi(y) \neq i$, as $\phi(x)=i$ and $\phi$ is proper.

        In particular, $c(x_i)=i$ and $c(y_i)=i'$, so the six vertices of $Q$ receive six distinct colors, and the vertices of $X'$ receive unprimed colors while those of $Y'$ receive primed ones.
        Hence all edges within $V(G')$ are properly colored.
        Each $Z_i$ is independent and receives colors from $\{i,i'\}$, so the edges within $Z$ are properly colored as well.
        Finally, consider $z_{i,e}$ for $e=xy$ as above and recall that $c(z_{i,e}) \in \{i,i'\}$, while the vertices of $Q \setminus \{x_i,y_i\}$ receive the remaining four colors.
        If $c(z_{i,e})=i$, then $c(x) = \phi(x) \neq i$ and $c(y)$ is primed; if $c(z_{i,e})=i'$, then $c(x)$ is unprimed and $c(y) = \phi(y)' \neq i'$.
        Thus $c$ is proper.
    \end{claimproof}

    \begin{claim}
        If $G$ has a proper 6-coloring, then $G'$ has a proper 3-coloring extending the given precoloring.
    \end{claim}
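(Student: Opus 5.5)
Let $c$ be a proper 6-coloring of $G$. Since $Q$ is a clique on six vertices, its vertices receive six distinct colors. After renaming colors, we may assume $c(x_i)=i$ and $c(y_i)=i'$ for $i\in[3]$. Define $\phi(v)=\pi(c(v))$ for $v\in V(G')$. We then need to show two things: $\phi$ extends the precoloring, and $\phi$ is proper on $G'$. The first is immediate, since $\pi(c(x_i))=\pi(c(y_i))=i$.

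First we determine which colors vertices of $X'$ and $Y'$ may use. Every vertex of $X$ is adjacent to all of $Y'$, in particular to $y_1,y_2,y_3$, so it gets an unprimed color. Symmetrically, every vertex of $Y$ gets a primed color. Next consider $z_{i,e}$. It is adjacent to $Q\setminus\{x_i,y_i\}$, which uses the four colors outside $\{i,i'\}$, so $c(z_{i,e})\in\{i,i'\}$.

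Now take an edge $e=xy$ of $G'$ with $x\in X'$ and $y\in Y'$, and suppose $\phi(x)=\phi(y)=i$. Since $c(x)$ is unprimed and $\pi(c(x))=i$, we get $c(x)=i$. Likewise $c(y)=i'$. (This also holds when $x$ or $y$ lies in $Q$, since $c(x_i)=i$ and $c(y_i)=i'$.) But $z_{i,e}$ is adjacent to both $x$ and $y$ and must use a color in $\{i,i'\}$, a contradiction. Hence $\phi$ is proper.

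One point needs care: the edges of $G'$ incident to precolored vertices. An edge $e=x_jy$ is still encoded by $z_{i,e}$, which is adjacent to $x_j$ and $y$. For $i\neq j$, the vertex $x_j$ already lies in $Q\setminus\{x_i,y_i\}$, and the argument above is unaffected. I do not expect a real obstacle in this direction: the only subtle step is fixing the color names on $Q$, together with the observation that $X$ and $Y$ are forced onto the unprimed and primed colors because $X'$ is complete to $Y'$.
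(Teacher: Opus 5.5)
Your proof is correct and follows the paper's argument. You fix the colors on the clique $Q$, project via $\pi$, and get a contradiction because $z_{i,e}$ sees all six colors. The only difference is cosmetic: you first show that $X'$ receives unprimed and $Y'$ primed colors, whereas the paper uses directly that $x$ and $y$ are adjacent in $G$ to conclude $\{c(x),c(y)\}=\{i,i'\}$.
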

    \begin{claimproof}
        Let $c$ be a proper 6-coloring of $G$.
        As $Q$ is a clique with six vertices, all six colors appear on $Q$, so, renaming the colors if necessary, we may assume that $c(x_i)=i$ and $c(y_i)=i'$ for $i \in [3]$.
        Define $\phi(v) = \pi(c(v))$ for $v \in V(G')$; clearly $\phi(x_i)=\phi(y_i)=i$, so $\phi$ extends the precoloring.

        Suppose $\phi$ is not proper, i.e., $\phi(x)=\phi(y)=i$ for some $e = xy \in E(G')$ with $x \in X'$ and $y \in Y'$.
        Then $c(x),c(y) \in \{i,i'\}$ and, as $X'$ and $Y'$ are complete to each other in $G$, we have $c(x) \neq c(y)$ and thus $\{c(x),c(y)\}=\{i,i'\}$.
        But the neighborhood of $z_{i,e}$ contains $\{x,y\} \cup (Q \setminus \{x_i,y_i\})$ and thus all six colors, a contradiction.
    \end{claimproof}

    We are left with proving that $G$ is $(P_4+P_2)$-free.
    Let us record four properties of $G$; the first three follow directly from the construction.
    \begin{enumerate}[(P1)]
        \item\label{p:bip} $G[X \cup Y]$ is complete bipartite, $X$ is complete to $\{y_1,y_2,y_3\}$ and anticomplete to $\{x_1,x_2,x_3\}$, and symmetrically for $Y$.
        In particular, $Q \subseteq N(x) \cup N(y)$ for every $x \in X$ and $y \in Y$.
        \item\label{p:multi} $G[Z]$ is complete multipartite with parts $Z_1,Z_2,Z_3$; in particular, if $z \in Z_i$ and $z' \in Z_j$ for $i \neq j$, then every vertex of $Z$ is adjacent to $z$ or to $z'$.
        \item\label{p:domQ} $Q \setminus N(z) \subseteq \{x_i,y_i\}$ for every $z \in Z_i$; in particular, $Q \subseteq N(z) \cup N(z')$ if moreover $z' \in Z_j$ for some $j \neq i$.
        \item\label{p:star} For every $i \in [3]$, the graphs $G[X \cup \{x_i\} \cup Z_i]$ and $G[Y \cup \{y_i\} \cup Z_i]$ are $P_4$-free.
        Indeed, consider the former one: both $X \cup \{x_i\}$ and $Z_i$ are independent, and the only possible neighbor of $z_{i,e}$ in $X \cup \{x_i\}$ is the endpoint of $e$ in $X'$.
        As the two internal vertices of an induced $P_4$ are adjacent and each has two neighbors in the path, one of them would be a vertex of $Z_i$ with two neighbors in $X \cup \{x_i\}$, a contradiction.
    \end{enumerate}
    Let us also note that complete multipartite graphs are $P_4$-free, so by \ref{p:bip} and \ref{p:multi} neither $G[X \cup Y]$ nor $G[Z]$ contains an induced $P_4$.

    \begin{claim}
        $G$ is $(P_4+P_2)$-free.
    \end{claim}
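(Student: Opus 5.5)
We argue by contradiction: suppose $G$ contains an induced $P_4+P_2$, with $P$ the vertex set of the $P_4$ and $uv$ the edge of the $P_2$. Then $P$ lies in the set $R(uv) := V(G) \setminus (N[u] \cup N[v])$ of vertices nonadjacent to both $u$ and $v$. So it suffices to show that $G[R(uv)]$ is $P_4$-free for every edge $uv$ of $G$. The vertex set is partitioned into $Q$, $X$, $Y$, $Z_1$, $Z_2$, $Z_3$. We split into cases according to the types of the endpoints of $uv$. In each case we compute $R(uv)$ and check that it lies inside one of the sets already known to be $P_4$-free: $X\cup Y$, $Z$, or the sets from \ref{p:star}.

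\textbf{Edges with an endpoint in $Q$.} Take $u\in Q$, say $u=x_i$; the case $u=y_i$ is symmetric. The vertex $x_i$ is adjacent to all of $Q$ and to all of $Y'$. It is also adjacent to every vertex of $Z_j$ for $j\neq i$, since $x_i\in Q\setminus\{x_j,y_j\}$. Hence $R(uv)\subseteq X\cup Z_i$. This set induces a bipartite graph in which each $z_{i,e}$ has at most one neighbor in $X$, namely the endpoint of $e$ in $X'$ if it lies in $X$. So $G[X\cup Z_i]$ is a disjoint union of stars and is $P_4$-free; alternatively, this is contained in \ref{p:star}.

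\textbf{Edges inside $X\cup Y$.} Such an edge is $xy$ with $x\in X$ and $y\in Y$. By \ref{p:bip}, $x$ and $y$ together dominate $Q$. Moreover, $x$ dominates $Y$ and $y$ dominates $X$. Hence $R(xy)\subseteq Z$, and $G[Z]$ is $P_4$-free by \ref{p:multi}.

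\textbf{Edges inside $Z$.} Such an edge is $zz'$ with $z\in Z_i$, $z'\in Z_j$ and $i\neq j$. By \ref{p:multi} the pair dominates $Z$, and by \ref{p:domQ} it dominates $Q$. Hence $R(zz')\subseteq X\cup Y$, which is complete bipartite and thus $P_4$-free.

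\textbf{Edges between $X$ and $Z$, and between $Y$ and $Z$.} Take $x\in X$ and $z=z_{i,e}\in Z_i$. The vertex $x$ is adjacent to all of $Y'=Y\cup\{y_1,y_2,y_3\}$. The vertex $z$ is adjacent to $Q\setminus\{x_i,y_i\}$ and to $Z_j$ for $j\ne i$. Therefore $R(xz)\subseteq X\cup\{x_i\}\cup Z_i$, which is $P_4$-free by \ref{p:star}. The case $y\in Y$, $z\in Z_i$ is symmetric and gives $R(yz)\subseteq Y\cup\{y_i\}\cup Z_i$.

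These cases cover all edges: there are no edges inside $X$, inside $Y$, or inside a single $Z_i$, and every other edge either has an endpoint in $Q$ or falls into one of the last three cases. The main thing to verify carefully is exactly this completeness, together with the adjacencies of the vertices $x_i$ and $y_i$. Note that $X$ and $Y'$ are complete to each other, while $X$ is anticomplete to $\{x_1,x_2,x_3\}$. This is why $x_i$ survives in $R(xz)$, and it is precisely why \ref{p:star} includes $x_i$.
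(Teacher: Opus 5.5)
Your proof is correct, but you organize the case analysis differently from the paper.

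\textbf{Your route.} You use the standard reduction for $(P_4+P_2)$-freeness: for every edge $uv$, the common non-neighborhood $V(G)\setminus(N[u]\cup N[v])$ must induce a $P_4$-free graph. You then split purely on the type of the edge $uv$: an endpoint in $Q$, an $X$--$Y$ edge, a $Z_i$--$Z_j$ edge, or an $X$--$Z_i$ or $Y$--$Z_i$ edge. In each case you exhibit an explicit superset of the non-neighborhood that is $P_4$-free by (P1)--(P4). These supersets are $X\cup Z_i$, $Z$, $X\cup Y$, $X\cup\{x_i\}\cup Z_i$, and their mirror images.

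\textbf{The paper's route.} The paper instead splits on the $P_4$. It first observes that the clique $Q$ meets at most one of the two components.
\begin{itemize}
\item If the $P_4$ meets $Q$, the $P_2$ avoids $Q$. Two of its three possible types are excluded because they dominate $Q$, and the third forces the $P_4$ into $X\cup\{x_i\}\cup Z_i$.
\item If the $P_4$ avoids $Q$, the paper reasons about which parts it can meet. If it meets two different $Z$-classes, the $P_2$ must be an $X$--$Y$ edge, which pushes the $P_4$ into $Z$. If it meets both $X$ and $Y$, the $P_2$ must be a $Z_k$--$Z_\ell$ edge, which pushes the $P_4$ into $X\cup Y$. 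Otherwise the $P_4$ already lies in $X\cup Z_i$ or $Y\cup Z_i$.
\end{itemize}

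\textbf{Comparison.} Your version never has to reason about how the $P_4$ is distributed over the parts. It is more mechanical to check, needing one domination computation per edge type. The cost is that you must treat edges incident to $Q$ explicitly. The paper never computes non-neighborhoods of vertices of $Q$, because a $P_2$ meeting $Q$ is absorbed implicitly into its second case. Both arguments rest on exactly the same properties (P1)--(P4).
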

    \begin{claimproof}
        Suppose that $A,B \subseteq V(G)$ are disjoint and anticomplete to each other, and such that $G[A]$ is a $P_4$ and $G[B]$ is a $P_2$, say $B = \{u,v\}$.
        As $Q$ is a clique, it intersects at most one of $A,B$.
        
        \noindent\textbf{Case 1.} $A \cap Q \neq \emptyset$.
            Then $u,v \notin Q$ and thus, as $X$, $Y$, and each $Z_i$ are independent sets, up to swapping $u$ and $v$ we have $u \in X$ and $v \in Y$, or $u \in Z_i$ and $v \in Z_j$ for some $i \neq j$, or $u \in Z_i$ and $v \in X \cup Y$.
            In the first two cases $Q \subseteq N(u) \cup N(v)$ by \ref{p:bip} and \ref{p:domQ}, respectively, contradicting $A \cap Q \neq \emptyset$.
            So the last case holds and, by the symmetry between $X'$ and $Y'$, we may assume that $v \in X$.
            By \ref{p:domQ} and \ref{p:bip} we have $Q \setminus (N(u) \cup N(v)) \subseteq \{x_i,y_i\} \setminus \{y_i\}$, so $A \cap Q = \{x_i\}$.
            Furthermore, $A \cap Y = \emptyset$ as $v$ is complete to $Y$, and $A \cap Z \subseteq Z_i$ by \ref{p:multi}.
            Thus $A \subseteq X \cup \{x_i\} \cup Z_i$, contradicting \ref{p:star}.

        \noindent\textbf{Case 2.} $A \cap Q = \emptyset$.
            Suppose first that $A$ contains some $z \in Z_i$ and $z' \in Z_j$ with $i \neq j$.
            Then $B$ is disjoint from $Q$ by \ref{p:domQ} and from $Z$ by \ref{p:multi}, so, up to swapping $u$ and $v$, we have $u \in X$ and $v \in Y$.
            But then $A \cap (X \cup Y) = \emptyset$ and thus $A \subseteq Z$, which is impossible as $G[Z]$ is $P_4$-free.

            Hence $A \cap Z \subseteq Z_i$ for some $i \in [3]$.
            If $A$ contains some $x \in X$ and some $y \in Y$, then $B$ is disjoint from $Q$ by \ref{p:bip} and from $X \cup Y$, as $x$ is complete to $Y$ and $y$ is complete to $X$.
            So $u \in Z_k$ and $v \in Z_\ell$ for some $k \neq \ell$, and thus $A \cap Z = \emptyset$ by \ref{p:multi}.
            This gives $A \subseteq X \cup Y$, again contradicting the $P_4$-freeness of $G[X \cup Y]$.
            So $A \subseteq X \cup Z_i$ or $A \subseteq Y \cup Z_i$, contradicting \ref{p:star}.
    \end{claimproof}

    This completes the proof of \cref{thm:p4p2hard}.
\end{proof}

\newpage
\section{\col{$3$} in $P_t$-free graphs}
In this section, we prove \cref{thm:pthard}.
The crucial ingredient is the construction of another \emph{selector gadget}.
Let $\ell \geq 1$ be an integer.
Consider a tuple $\cS=(S,L,\mathbf{v})$, where $S$ is a graph, $L : V(S) \to 2^{[3]}$ is a list function, and $\mathbf{v} = (v_1,\ldots,v_\ell)$ is a sequence of distinct vertices of $S$ called \emph{ports}.
For $\iota \in \{2,3\}$, we say that a proper coloring $\phi$ of $S$ that respects lists $L$ \emph{$(1,\iota)$-selects} $i \in [\ell]$ if $\phi(v_i)=1$ and $\phi(v_j) = \iota$ for all $j \in [\ell]\setminus \{i\}$.
A tuple $\cS$ is called a \emph{$(1,\iota)$-selector} if:
\begin{enumerate}
    \item for every $i \in [\ell]$, there exists a proper coloring $\phi_i$ of $S$ that respects lists and $(1,\iota)$-selects $i$, and
    \item for every proper coloring $\phi$ of $S$ that respects lists, there exists $i \in [\ell]$ such that $\phi$ $(1,\iota)$-selects $i$.
\end{enumerate}
If $\iota$ is clear from the context, we will simply refer to a $(1,\iota)$-selector as a \emph{selector}.

\begin{lemma}\label{lem:selector}
    For $\iota \in \{2,3\}$ and every $\ell \geq 1$, in polynomial time one can construct a $(1,\iota)$-selector $\cS=(S,L,\mathbf{v})$ where $S$ is $P_{13}$-free and has $\Oh(\ell)$ vertices.
\end{lemma}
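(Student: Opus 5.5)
The plan is to adapt the threshold (``layered chain graph'') construction from the proof of \cref{thm:list2p2hard} to three colors, and then attach the ports through equality-testing neighborhoods in the spirit of the nonedge gadget. Induced paths should stay short because the gadget is a union of boundedly many chain graphs, which are $2P_2$-free.

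\emph{Core of the gadget.} I take three independent layers $X_0,X_1,X_2$, indices modulo $3$, with $X_a=\{x_{a,p}\mid p\in[\ell+1]\}$. I add an edge $x_{a,p}x_{a+1,q}$ exactly when $p\ge q$. With three colors $c_0,c_1,c_2$ (a relabeling of $[3]$ fixed later), I set the lists as before: $L(x_{a,1})=\{c_{a+1}\}$, $L(x_{a,\ell+1})=\{c_a\}$, and $\{c_a,c_{a+1}\}$ otherwise. The argument of \cref{clm:from-coloring-to-selection} goes through verbatim. Letting $t_a$ be the largest $p$ with $\phi(x_{a,p})=c_{a+1}$, the chain edges force $t_a\le t_{a+1}$, and cyclically $t_0\le t_1\le t_2\le t_0$. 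So every list coloring is the unique ``threshold-$t$'' coloring for some $t\in[\ell]$. Conversely, each threshold coloring is proper, as in \cref{clm:from-selection-to-coloring}; between $X_a$ and $X_{a+2}=X_{a-1}$ the roles are the same because all three pairs are consecutive. This part has $3(\ell+1)$ vertices.

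\emph{Ports.} For each $i\in[\ell]$ I add a port $v_i$ with list $\{1,\iota\}$. I choose the relabeling so that $\{1,\iota\}=\{c_0,c_1\}$, say $1=c_1$ and $\iota=c_0$; the other choice is symmetric. Mimicking the nonedge gadget, $v_i$ should be forced to avoid $c_1$ unless $t=i$, and forced to avoid $c_0$ when $t=i$. Neighbors of the form $x_{1,p}$ with $p\le i+1$ see color $c_1$ iff $t\le i$. Neighbors $x_{2,p}$ with $p\ge i$ see color $c_0$ iff $t\ge i$; this is the analogue of using $X_3$ in the four-layer case, since $x_{2,p}$ has list $\{c_2,c_0\}$. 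Together these give ``both colors blocked iff $t=i$'', which is the wrong polarity. I therefore plan to insert one intermediate vertex $w_i$ with list $\{c_0,c_1\}$ adjacent to those sets. Then $w_i$ is colorable iff $t\neq i$, with color forced to $c_1$ if $t>i$ and to $c_0$ if $t<i$. Next I hang $v_i$ off $w_i$ together with $x_{1,i+1}$ and $x_{2,i}$. The goal is that $v_i$ is forced to $\iota$ whenever $t\ne i$ and forced to $1$ when $t=i$.

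I expect to need a small case analysis, as in \cref{fig:nonedgegadget}, to fix the precise neighborhoods; the fallback is a constant-size sub-gadget per port. Either way the total stays $\Oh(\ell)$ vertices. Both selector properties then follow from the bijection between list colorings of the core and $t\in[\ell]$.

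\emph{$P_{13}$-freeness (main obstacle).} Induced paths inside the core are handled as in \cref{clm:si-2p2}. Each bipartite piece between consecutive layers is a chain graph, hence $2P_2$-free and so $P_5$-free. An induced path therefore spends only a bounded number of consecutive vertices in any one pair of layers, and the cyclic inequality trick ($p<q<p'<\dots$) bounds how many times it can wind around the three layers. Each port structure attaches to the core via threshold-shaped (prefix or suffix) neighborhoods, so an argument like \cref{clm:hi-2p2} shows two port vertices rarely lie on one induced path in a useful way. Along the way I may add edges between vertices whose lists are disjoint, or whose colors always differ, to kill long paths. Combining these bounds to get the explicit constant $13$ is the step I expect to be hardest and most fiddly.
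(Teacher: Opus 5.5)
\textbf{Gap 1: the port construction makes the gadget uncolorable.} Your three-layer core is sound. With the stated lists, the argument of \cref{clm:from-coloring-to-selection} goes through, so its list colorings are exactly the $\ell$ threshold colorings. The ports are what break. When $t=i$, the vertex $x_{1,i+1}$ gets color $c_1$ and $x_{2,i}$ gets color $c_0$. Both are neighbors of $w_i$, so $w_i$, with list $\{c_0,c_1\}$, has no available color. This is exactly the ``both colors blocked iff $t=i$'' behavior you wanted to avoid, only moved from $v_i$ to $w_i$. There is a $w_i$ for every $i\in[\ell]$, and every list coloring of the core has some threshold $t\in[\ell]$. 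So $w_t$ is always uncolorable, the whole graph has no list coloring, and property~1 of a selector fails. No way of hanging $v_i$ off $w_i$ can repair this.

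The missing idea is a per-port gadget that distinguishes the three states $t<i$, $t=i$, $t>i$, admits a coloring in each of them, and outputs color $1$ exactly in the middle one. The paper first produces a pair $(x_i,y_i)=(d_{i-1},b_i)$ whose colors are $(3,1)$, $(3,3)$, $(1,3)$ in the three states. It then attaches $v_i,u_i,w_i$ so that $v_i$ is forced to $2,1,2$, respectively (Step~4, with $\iota=2$).

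\textbf{Gap 2: $P_{13}$-freeness is not argued, and your analogy does not transfer.} \cref{clm:si-2p2} uses that non-consecutive layers are complete to each other. With three layers every pair is consecutive, and the core is not even $2P_2$-free. For example, $x_{0,3}x_{1,1}$ and $x_{1,4}x_{2,2}$ induce a $2P_2$, and $x_{2,7},x_{0,5},x_{1,1},x_{2,1},x_{1,6},x_{2,2}$ is an induced $P_6$.

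The core alone can be bounded. Between two fixed layers the graph is a chain graph, so the path edges there form a single subpath with at most three edges. Hence the core is $P_{11}$-free.

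The ports, however, attach through large prefix/suffix neighborhoods that are far from cliques. An induced path may therefore leave and re-enter the core many times. Reaching the specific bound $P_{13}$ needs an argument you have not supplied, and it depends on a port gadget you have not yet fixed.

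The paper's gadget is built to make this step easy. $K=\{a_i,b_i \mid i\in[\ell]\}$ induces a chain graph. Every component of $S-K$ is $P_5$-free and has a single vertex or an edge as its neighborhood in $K$. So an induced path meets $K$ in one subpath and has at most $4+4+4=12$ vertices.
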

\begin{proof}
    By symmetry, we may assume that $\iota=2$.
    If $\ell=1$, then a single-vertex graph with a list $\{1\}$ is a selector.
    So, assume that $\ell \geq 2$.

    We will build the gadget in several steps.
    At each step, we work with an induced subgraph of the final graph and analyze
    its possible list colorings.

    \paragraph{Step 1.}
    For each $i \in [\ell]$, we introduce a triangle $T_i$ with vertices $a_i,b_i,c_i$.
    We set $L(a_i)=\{1,2\}$ and $L(c_i)=\{2,3\}$ for every $i \in [\ell]$,
    set $L(b_i)=\{1,3\}$ for every $i \in [\ell-1]$, and set $L(b_\ell)=\{3\}$.
    Furthermore, we add an edge $a_ib_j$ whenever $i < j$.

    For every $i \in [\ell-1]$, there are two possible list colorings of $T_i$:
    $(a_i,b_i,c_i)$ is colored either $(1,3,2)$ or $(2,1,3)$.
    We call them, respectively, \emph{type 1} and \emph{type 2}.
    The triangle $T_\ell$ is forced to have a coloring of type~1.

    Note that if $i < j$ and $T_i$ has coloring of type 1, then $T_j$ cannot have coloring of type 2 because of the edge $a_ib_j$.
    Consequently, in every proper coloring of the graph that we constructed so far,
    there exists $t \in [\ell]$ such that $T_i$ has coloring of type 2 whenever $i < t$ and type 1 whenever $i \geq t$.
    Moreover, such a coloring exists for every $t \in [\ell]$.

    \paragraph{Step 2.}
    For each $i \in [\ell-1]$, we add three vertices $d_i,e_i,f_i$ and edges $a_ie_i$, $a_if_i$, $d_ie_i$, and $d_if_i$.
    We set $L(d_i)=L(e_i)=\{1,3\}$ and $L(f_i)=\{1,2\}$.
    If $T_i$ has coloring of type 1, then $(d_i,e_i,f_i)$ is colored $(1,3,2)$,
    and if $T_i$ has coloring of type 2, then $(d_i,e_i,f_i)$ is colored $(3,1,1)$.

    Consequently, in every proper coloring $\phi$,
    there exists $t \in [\ell]$ such that
\begin{equation}
\label{eq:coloringstep2}
\begin{aligned}
    \phi(d_i)&=\begin{cases}
        3 & \text{ if } i < t,\\
        1 & \text{ if } i \geq t.
    \end{cases}
    && \text{for every } i \in [\ell-1], \qquad
    \phi(b_i)&=\begin{cases}
        1 & \text{ if } i < t,\\
        3 & \text{ if } i \geq t.
    \end{cases}
    && \text{for every } i \in [\ell].
\end{aligned}
\end{equation}

\paragraph{Step 3.}
    Next, we add a vertex $d_0$ with list $\{3\}$.
    For each $i \in [\ell]$, we define $x_i=d_{i-1}$ and $y_i=b_i$.

    Now, \eqref{eq:coloringstep2} yields that in every proper coloring $\phi$,
    there exists $t \in [\ell]$ such that
\begin{equation}
\label{eq:coloringstep3}
    (\phi(x_i),\phi(y_i)) =
    \begin{cases}
        (3,1) & \text{ if } i < t, \\
        (3,3) & \text{ if } i = t, \\
        (1,3) & \text{ if } i > t.
    \end{cases}
\end{equation}
    Moreover, for every $t \in [\ell]$ the corresponding coloring exists.

\paragraph{Step 4.}

Finally, for each $i \in [\ell]$, we add the port vertex $v_i$ and two additional vertices $u_i,w_i$,
and edges $v_ix_i$, $v_iy_i$, $w_iv_i$, $w_ix_i$, $u_iw_i$, and $y_iu_i$.
We set lists $L(v_i) = \{1,2\}$, $L(u_i) = \{1,3\}$, and $L(w_i) = \{1,2,3\}$.
For each of the possible ordered pairs $(\phi(x_i),\phi(y_i))$, the unique extension to $\{v_i,u_i,w_i\}$ is as follows:
\begin{equation}
\label{eq:coloringstep4}
\begin{array}{cc|ccc}
    x_i & y_i & v_i & u_i & w_i \\ \hline
    3 & 1 & 2 & 3 & 1 \\
    3 & 3 & 1 & 1 & 2 \\
    1 & 3 & 2 & 1 & 3
\end{array}
\end{equation}
If $x_i$ and $y_i$ are both colored 1, then there is no way to color the subgraph properly.

This completes the construction of the selector $\cS=(S,L,\mathbf{v})$;
note that $S$ has $9\ell - 2 = \Oh(\ell)$ vertices and can clearly be constructed in polynomial time.
Combining \eqref{eq:coloringstep3} and \eqref{eq:coloringstep4}, we obtain the following claim.

\begin{claim}
    In every proper coloring $\phi$ of the gadget, there exists $t \in [\ell]$ such that $\phi(v_t) = 1$ and $\phi(v_j) = 2$ for all $j \in [\ell]\setminus \{t\}$.
    Furthermore, for every $t \in [\ell]$ the corresponding coloring exists.
\end{claim}

Thus, $(S,L,\mathbf{v})$ is indeed a selector.

\begin{claim}
    $S$ is $P_{13}$-free.
\end{claim}
\begin{claimproof}
    Define $K=\bigcup_{i \in [\ell]} \{a_i,b_i\}$.
    Note that $S[K]$ is a chain graph, i.e., it is $2P_2$-free, so in particular, $P_5$-free.
    There are three possible types of components of $S-K$:
    \begin{itemize}
        \item a single vertex $c_i$ with neighborhood $\{a_i,b_i\}$,
        \item $\{d_0,v_1,u_1,w_1\}$ with neighborhood $\{b_1\}$, and
        \item $\{d_i,e_i,f_i,v_{i+1},u_{i+1},w_{i+1}\}$ for every $i \in [\ell-1]$, with neighborhood $\{a_i,b_{i+1}\}$.
    \end{itemize}
    Direct verification shows that each of these components is $P_5$-free, and that the neighborhood of every component in $K$ is either a single vertex, or two vertices joined with an edge.

    Let $P$ be a longest induced path in $S$.
    Note that it is impossible that $P$ visits $K$, leaves this set, and then visits $K$ again, because the neighborhood of every component of $S-K$ in $K$ is a clique.
    Thus, the path $P$ can be split into (at most) three subpaths: the first and the last one are contained in a component of $S-K$, and the middle one is contained in $K$.
    The first and the last subpath have at most 4 vertices each, and the middle one also has at most 4 vertices.
    Consequently, $P$ has at most $4+4+4=12$ vertices. This means that $S$ is $P_{13}$-free.
\end{claimproof}

   This completes the proof of \cref{lem:selector}.
\end{proof}

Before we proceed to the hardness proof, let us introduce an auxiliary lemma.

\begin{lemma}\label{lem:combinept}
    Let $G$ be a graph, and $B_1,\ldots,B_p$ be a partition of its vertex set, such that:
    \begin{itemize}
    \item for every $i \in [p]$, the graph $G[B_i]$ is $P_a$-free, where $a \geq 2$, and
    \item for every $i,j \in [p]$, where $i < j$, the largest induced matching in the graph $G[B_i, B_j]$ has $b_{i,j}$ edges.
    \end{itemize}
    Then, $G$ is $P_{c}$-free, where $c = (a-1)(1 + 3 \sum_{1 \leq i < j \leq p} b_{i,j})+1$.
\end{lemma}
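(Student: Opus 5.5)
The plan is to take an arbitrary induced path $P$ in $G$ and bound its number of vertices by $(a-1)(1+3\sum_{i<j} b_{i,j})$. That bound is less than $c$, so $G$ is $P_c$-free. Write $P = p_1p_2\cdots p_m$.

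First, split $P$ into maximal runs of consecutive vertices that lie in the same part $B_i$. Each run is a subpath of $P$ contained in one $G[B_i]$. An induced subpath of an induced path is itself an induced path, and $G[B_i]$ is $P_a$-free, so each run has at most $a-1$ vertices. Consecutive runs are separated by a \emph{transition edge} $p_kp_{k+1}$ whose endpoints lie in distinct parts $B_i, B_j$. If $T$ denotes the number of transition edges, there are exactly $T+1$ runs, and hence $m \le (a-1)(T+1)$.

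It remains to show that, for each pair $i<j$, the number $T_{i,j}$ of transition edges between $B_i$ and $B_j$ is at most $3b_{i,j}$. Summing over pairs then gives $T \le 3\sum_{i<j} b_{i,j}$. The key observation concerns two edges $e_k = p_kp_{k+1}$ and $e_{k'} = p_{k'}p_{k'+1}$ of $P$ with $k' \ge k+3$. They are vertex-disjoint, and since $P$ is induced, the only edges of $G$ among $p_k,p_{k+1},p_{k'},p_{k'+1}$ are the path edges. Path edges only join consecutive indices, and $k+1$ and $k'$ differ by at least $2$, so no extra edge appears. Consequently, any family of edges of $P$ whose indices pairwise differ by at least $3$ forms an induced matching in $G$. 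If all of them join $B_i$ and $B_j$, the family is also an induced matching in $G[B_i,B_j]$, because that graph has fewer edges.

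Now process the $B_i$–$B_j$ transition edges greedily in order of index. Pick the first one, discard every transition edge among the next two indices, and repeat. Each picked edge accounts for at most $3$ transition edges, namely itself and the edges at the two following indices. The picked edges have pairwise index gaps at least $3$, so they form an induced matching in $G[B_i,B_j]$ and number at most $b_{i,j}$. Hence $T_{i,j} \le 3b_{i,j}$, which completes the bound.

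The only delicate point is the induced-matching argument, specifically checking that a gap of $3$ in the indices suffices to rule out edges between the chosen path edges. The argument relies on $P$ being induced in all of $G$, not merely in $G[B_i,B_j]$. Everything else is bookkeeping.
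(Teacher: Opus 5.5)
Your proof is correct and follows essentially the same route as the paper: split $P$ into at most $T+1$ monochromatic-part subpaths of at most $a-1$ vertices each, and bound the $B_i$--$B_j$ crossing edges by $3b_{i,j}$ using the fact that path edges whose indices differ by at least $3$ form an induced matching. The only cosmetic difference is that the paper partitions the path edges into the three residue classes modulo $3$ instead of making your greedy selection; both give the same factor $3$.
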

\begin{proof}
    An edge of $G$ is called \emph{crossing} if its endpoints belong to different sets $B_i$ and $B_j$.
    Let $P$ be an induced path in $G$, let $q$ be the number of its crossing edges, and let $s=\sum_{1 \leq i < j \leq p} b_{i,j}$.
    Enumerate the consecutive edges of $P$ as $e_1,\ldots,e_m$.
    For $r \in \{0,1,2\}$, let $M_r=\{e_i \mid i \equiv r \pmod 3\}$.
    The sets $M_0,M_1,M_2$ partition $E(P)$.
    Since $P$ is induced, every $M_r$ is an induced matching in $G$.

    Fix $i<j$ in $[p]$.
    For every $r \in \{0,1,2\}$, the set $M_r \cap E(B_i,B_j)$ is an induced matching in $G[B_i,B_j]$.
    Hence,
    \[
        |E(P) \cap E(B_i,B_j)| = \sum_{r=0}^2 |M_r \cap E(B_i,B_j)| \leq 3b_{i,j}.
    \]
    Consequently, $q \leq 3s$.

    Removing the crossing edges from $P$ splits it into $q+1$ subpaths, each of which is contained in a single set $B_i$.
    Since $G[B_i]$ is $P_a$-free, each such subpath has at most $a-1$ vertices and hence at most $a-2$ edges.
    Therefore,
    \[
        |E(P)| \leq q+(q+1)(a-2) = (a-1)(q+1)-1 \leq (a-1)(1+3s)-1.
    \]
    Thus, $P$ has at most $(a-1)(1+3s)$ vertices.
    As this holds for every induced path $P$ in $G$, the graph $G$ is $P_c$-free.
\end{proof}

\begin{theorem}
    \label{thm:listpt}
    \lcol{$3$} on $P_t$-free graphs is \Wone-hard when parameterized by $t$.
    Furthermore, the problem on $n$-vertex instances cannot be solved in time $f(t) \cdot n^{o(t/\log t)}$, for any computable function $f$, unless the ETH fails.
\end{theorem}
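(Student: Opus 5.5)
We reduce from \MSI with a subcubic pattern graph $H$ on $r$ vertices, using the second item of \cref{thm:msi}. Let $(G',H)$ be an instance with $V(H)=[r]$, host parts $V^1,\ldots,V^r$, each independent and of size $\ell$, and edges between $V^i$ and $V^j$ only when $ij\in E(H)$. We build an instance of \lcol{$3$} whose graph is $P_t$-free with $t=\Oh(r)$ and whose size is polynomial in $|V(G')|$. An $f(t)\cdot n^{o(t/\log t)}$ algorithm would then give an $f'(r)\cdot |V(G')|^{o(r/\log r)}$ algorithm for \MSI, and \Wone-hardness follows from the same (parameter-preserving) reduction.

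\textbf{Construction.} For every $i\in[r]$ we place a selector from \cref{lem:selector} with $\ell$ ports $v^i_1,\ldots,v^i_\ell$, one port per vertex of $V^i$. Colorings of this selector then correspond bijectively to choices of a vertex of $V^i$: the selected port gets color $1$ and all other ports get $\iota$. To encode an edge $ij\in E(H)$ with $i<j$, we use a $(1,2)$-selector for one endpoint and a $(1,3)$-selector for the other. Since $H$ is subcubic it is $4$-colorable, so we can fix a proper vertex coloring of $H$. We then choose a vertex gadget's type, and possibly several copies of it linked by equality gadgets, so that every edge of $H$ is handled with compatible types. A simpler alternative is to give each vertex $i$ up to three copies of the selector, one per incident edge, and force the copies to select the same index with a bounded-size linking structure. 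The constraint itself is: for every \emph{non-adjacent} pair $v^i_p\in V^i$, $v^j_q\in V^j$, forbid $\phi(v^i_p)=\phi(v^j_q)=1$. The cleanest way is to add the edge $v^i_pv^j_q$ directly. Both endpoints then have color $1$ exactly when both are selected, and non-selected ports get colors $\iota\neq\iota'$ in the two selectors, so the edge is harmless otherwise. This requires the two selectors to use different $\iota$ (one $(1,2)$, one $(1,3)$); that is exactly why both types were provided, and the subcubic pattern lets us assign the types consistently after splitting vertex selectors into copies.

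\textbf{Correctness.} Correctness is then immediate from the selector property. Every list coloring picks one index per selector, and an added edge is violated only if both of its endpoints are selected, i.e., the chosen pair is a non-edge of $G'$. Conversely, a multicolored subgraph yields a selection that violates no added edge.

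\textbf{$P_t$-freeness.} We apply \cref{lem:combinept} with blocks $B_1,\ldots,B_p$ equal to the selector gadgets (plus the constant-size linking pieces), so $a=13$ and $p=\Oh(r)$. The main obstacle is bounding the induced matchings between blocks. The non-edges of $G'$ between $V^i$ and $V^j$ can form large induced matchings, and indeed the complement bipartite graph can contain arbitrary structure. To fix this, we do not connect ports directly. Instead we route each pair $(i,j)$ through a dedicated bounded-size comparison, mimicking the nonedge gadget of \cref{thm:list2p2hard}: a single vertex per non-edge, adjacent to threshold sets of a chain-structured ``layer'' attached to each selector. The bipartite graphs between blocks are then chain graphs (or unions of $\Oh(1)$ of them) and have induced matchings of size $\Oh(1)$. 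Only pairs $ij\in E(H)$ interact, and there are $\Oh(r)$ of them, so $\sum b_{i,j}=\Oh(r)$ and $t=\Oh(r)$. The hard step is designing these threshold layers inside the selector (reusing the monotone structure \eqref{eq:coloringstep3} of $x_i,y_i$, which already has a chain-like pattern) so that all cross-block bipartite graphs are $2P_2$-free while the blocks stay $P_{\Oh(1)}$-free.
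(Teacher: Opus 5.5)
\textbf{Gap: the constant bound on cross-block induced matchings is never achieved.} Your reduction breaks down exactly at the step you call ``the hard step'': you never give a construction that encodes the \MSI constraints while keeping the induced matchings between blocks of constant size. As you observe, joining the ports of non-adjacent pairs $v^i_pv^j_q$ directly fails. The bipartite complement of $G'[V^i,V^j]$ is an arbitrary bipartite graph, so induced matchings (and even induced paths) between two selectors are unbounded, and $t$ is not a function of $r$.

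Your proposed repair is a single vertex per non-edge, adjacent to ``threshold sets'' in each selector as in the nonedge gadget of \cref{thm:list2p2hard}. You do not construct it, and there is no evident way to. That gadget relies on each selector having a private palette $c^i_0,\dots,c^i_3$. A vertex sees a color if \emph{some} neighbor has it, and with monotone neighborhoods this yields only the threshold conditions $s\le p$ or $s\ge p$. Expressing $s=p$ therefore needs two colors blocked from side $i$, and $t=q$ needs two further colors blocked from side $j$. This is why the list there is $\{c_0^i,c_1^i,c_0^j,c_1^j\}$. With only three colors shared by all selectors, there is no room for this.

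The type bookkeeping is also off. Direct edges require every constraint to join a $(1,2)$-selector to a $(1,3)$-selector, which needs a proper $2$-coloring of the constraint graph, so $4$-colorability of $H$ is irrelevant. The ``equality gadgets'' linking copies of a selector are left unspecified.

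\textbf{The missing idea: encode edges, not non-edges.}
\begin{itemize}
\item For every $ij\in E(H)$, add a $(1,3)$-selector $S^{ij}$ whose ports are indexed by the edges $E^{ij}$ of $G'$ between $V^i$ and $V^j$.
\item Take all vertex selectors $S^i$ to be $(1,2)$-selectors. This effectively subdivides $H$, so the two types alternate automatically.
\item Join $\port(ab)$ to every $\port(a')$ with $a'\in V^i\setminus\{a\}$ and to every $\port(b')$ with $b'\in V^j\setminus\{b\}$.
\end{itemize}
The selected edge port has color $1$, so it forbids color $1$ on all other ports of $S^i$ and $S^j$, forcing the vertex choices to match the chosen edge. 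Unselected edge ports have color $3$, while vertex ports have colors $1$ or $2$, so they impose nothing.

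Crucially, each port of $S^{ij}$ is non-adjacent to exactly one port of $S^i$. An induced matching $\{\port(a_\alpha b_\alpha)\port(x_\alpha)\}_{\alpha\in[3]}$ would then force $x_2=x_3=a_1$, a contradiction. So induced matchings between blocks have at most two edges, whatever the structure of $G'$. With at most $3r$ adjacent pairs of blocks, \cref{lem:combinept} with $a=13$ gives $t=216r+13$. The rest of your outline then goes through: the ETH transfer via $|V(G)|=\Oh(|V(G')|^2)$ and $t=\Oh(r)$, and \Wone-hardness from the parameter-preserving reduction.
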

\begin{proof}
    We reduce from \MSI with subcubic pattern graph.
    Let $(G',H)$ be an instance, and denote the vertices of the pattern graph $H$ by $[r]$.
    Since $H$ is subcubic, it has at most $3r/2$ edges.
    We use the convention that if we refer to an edge $ij$ of $H$, then $i<j$.
    Set $t=216r+13$.

    Let $V^1,\ldots,V^r$ be the partition of the vertex set of the host graph $G'$ into independent sets,
    where the set $V^i$ corresponds to the vertex $i$ of $H$. Assume that $|V^1|=\ldots=|V^r|=\ell$.
    For an edge $ij$ of $H$, let $E^{ij}$ be the set of edges between $V^i$ and $V^j$ in $G'$ and let $\ell_{ij}$ denote $|E^{ij}|$. We may assume that, for each $ij \in E(H)$, it holds that $\ell_{ij} >0$, as otherwise we are clearly dealing with a no-instance of \MSI.
    Again, we use the convention that if we refer to an edge $uv \in E^{ij}$, then $u \in V^i$ and $v \in V^j$.

    Fix arbitrary orderings of every set $V^i$ and every set $E^{ij}$.
    We take all selectors introduced below to be vertex-disjoint, and use $L$ for the union of their list functions.
    For each $i \in [r]$, we introduce a $(1,2)$-selector $\cS^i=(S^i,L,\mathbf{v}^i)$ with $\ell$ ports, denoted by $\mathbf{v}^i = ( \port(v) )_{v \in V^i}$.
    For every edge $ij$ of $H$, we introduce a $(1,3)$-selector $\cS^{ij}=(S^{ij},L,\mathbf{w}^{ij})$ with $\ell_{ij}$ ports denoted by $\mathbf{w}^{ij}=( \port(e) )_{e \in E^{ij}}$.

    The coloring of $S^i$ (resp., $S^{ij}$) will represent the choice of a vertex from $V^i$ (resp., edge from $E^{ij}$) in the solution to \MSI.
    We need to make sure that the choice of edges is consistent with the choice of vertices, i.e., if we choose an edge $ab \in E^{ij}$, then we must also choose the vertices $a \in V^i$ and $b \in V^j$.

    To this end, for every edge $ij$ of $H$,
    and every edge $ab \in E^{ij}$,
    we add edges $\port(ab)\port(a')$ for every $a' \in V^i \setminus \{a\}$
    and edges $\port(ab)\port(b')$ for every $b' \in V^j \setminus \{b\}$.

    This completes the construction of the graph $G$ and the list function $L$.
    The construction can be performed in polynomial time.
    Moreover,
    \[
        |V(G)|=\Oh\left(r\ell + \sum_{ij \in E(H)} \ell_{ij}\right)=\Oh(r\ell^2)=\Oh(|V(G')|^2).
    \]

    In the next two claims we show that $(G',H)$ is a yes-instance of \MSI if and only if $(G,L)$ is a yes-instance of \lcol{$3$}.

    \begin{claim}
        If $(G',H)$ is a yes-instance of \MSI, then $(G,L)$ is a yes-instance of \lcol{$3$}.
    \end{claim}
    \begin{claimproof}
        Let $(a_i)_{i \in [r]}$ be a solution to the instance $(G',H)$ of \MSI.
        For every $i \in [r]$, color the selector $S^i$ so that it $(1,2)$-selects $a_i$.
        For every edge $ij$ of $H$, use the coloring of $S^{ij}$ in which the port $\port(a_ia_j)$ has color $1$ and every other port has color $3$.

        In the resulting coloring, the port $\port(a_i)$ has color $1$ and every other port of $S^i$ has color $2$.
        Similarly, the port $\port(a_ia_j)$ has color $1$ and every other port of $S^{ij}$ has color $3$.
        It remains to verify that the edges added between selectors are properly colored.
        Consider such an edge incident with $\port(ab) \in S^{ij}$.
        If $ab\neq a_ia_j$, then $\port(ab)$ has color $3$, whereas its neighbor in $S^i$ or $S^j$ has color $1$ or $2$.
        If $ab=a_ia_j$, then every neighbor of $\port(ab)$ is a port corresponding to a vertex different from $a_i$ or $a_j$, respectively, and hence has color $2$.
        Thus, the colorings of all selectors together form a proper coloring of $G$ that respects lists.
    \end{claimproof}

    \begin{claim}
        If $(G,L)$ is a yes-instance of \lcol{$3$}, then $(G',H)$ is a yes-instance of \MSI.
    \end{claim}
    \begin{claimproof}
        Let $\phi$ be a proper coloring of $G$ that respects lists.
        For every $i \in [r]$, the restriction of $\phi$ to the selector $S^i$ selects a unique vertex $a_i \in V^i$.
        For every edge $ij$ of $H$, there is a unique edge $ab \in E^{ij}$, where $a \in V^i$ and $b \in V^j$, such that $\phi(\port(ab))=1$; every other port of $S^{ij}$ has color $3$.

        The vertices $\port(ab)$ and $\port(a')$ are adjacent for every $a' \in V^i\setminus\{a\}$.
        Since both $\port(ab)$ and $\port(a_i)$ have color $1$, properness of $\phi$ implies that $a_i=a$.
        The symmetric argument gives $a_j=b$.
        Hence, $a_ia_j=ab \in E^{ij}$ for every edge $ij$ of $H$, and $(a_i)_{i \in [r]}$ is a solution to $(G',H)$.
    \end{claimproof}

    Now, it remains to show that $G$ is $P_t$-free for $t = 216r+13$.
    \begin{claim}
        $G$ is $P_t$-free.
    \end{claim}
    \begin{claimproof}
        Let $\mathcal{B}$ be the partition of $V(G)$ whose parts are the vertex sets of all selectors.
        By \cref{lem:selector}, every part of $\mathcal{B}$ induces a $P_{13}$-free graph.

        By construction, edges between distinct parts of $\mathcal{B}$ occur only between $S^{ij}$ and $S^i$, or between $S^{ij}$ and $S^j$, for an edge $ij$ of $H$.
        We claim that each of these bipartite graphs has no induced matching with three edges.
        Consider the graph between $S^{ij}$ and $S^i$; the other case is symmetric.
        Every edge in this graph has the form $\port(a_\alpha b_\alpha)\port(x_\alpha)$, where $a_\alpha b_\alpha \in E^{ij}$ and $x_\alpha \in V^i\setminus\{a_\alpha\}$.
        If three such edges formed an induced matching, then for any distinct $\alpha,\beta \in [3]$, the non-adjacency of $\port(a_\alpha b_\alpha)$ and $\port(x_\beta)$ would imply $x_\beta=a_\alpha$, by the definition of the edges between the two selectors.
        Taking $\alpha=1$ and $\beta\in\{2,3\}$ gives $x_2=x_3=a_1$, contradicting the fact that the three edges form a matching.
        Thus, the largest induced matching between any two distinct parts of $\mathcal{B}$ has at most two edges.

        There are at most $2|E(H)| \leq 3r$ pairs of parts that are adjacent, and all other pairs have no edges between them.
        Therefore, the sum of the largest induced-matching sizes over all pairs of distinct parts of $\mathcal{B}$ is at most $6r$.
        Applying \cref{lem:combinept} with $a=13$ yields that $G$ is $P_{12(1+3\cdot 6r)+1}$-free, that is, $P_{216r+13}$-free.
    \end{claimproof}

    As $t=216r+13$ and $|V(G)| =\Oh(|V(G')|^2)$, both the \Wone-hardness and the ETH lower bound follow from \cref{thm:msi}.
\end{proof}

Finally, we can show \cref{thm:pthard}.
\thmpt*
\begin{proof}
    We start with an instance $(G',L)$ of \lcol{$3$} given by \cref{thm:listpt}, where $G'$ is $P_{t'}$-free.
    We apply the same construction as in the proof of \cref{thm:2p2hard}:
    we introduce a triangle $q_1,q_2,q_3$, and add an edge $q_iv$ for every $i \in [3]$ and every $v \in V(G')$ such that $i \notin L(v)$.
    Let $G$ be the resulting graph. Clearly, it admits a proper $3$-coloring if and only if $(G',L)$ is a yes-instance of \lcol{$3$}.

    We claim that $G$ is $P_t$-free, where $t = 2t' + 1$.
    Let $P$ be a longest induced path in $G$.
    It uses at most two vertices from $\{q_1,q_2,q_3\}$ and, if they are used, they must be consecutive on the path.
    Thus, $P$ can be split into at most three subpaths: the first and the last one are contained in $G'$, and the middle uses at most 2 vertices among $\{q_1,q_2,q_3\}$.
    As each of the subpaths contained in $G'$ has at most $t'-1$ vertices,
    we conclude that $P$ has at most $2(t'-1) + 2 = 2t'$ vertices.
    Thus, $G$ is indeed $P_{2t'+1}$-free.

    As $t = 2t' + 1$ and $n = |V(G)| = |V(G')| + 3$, both the \Wone-hardness and the ETH lower bound from \cref{thm:listpt} transfer to \col{$3$}.
\end{proof}

\newpage
\section{Vertex-critical $(P_4+sP_1)$-free graphs}
Now, let us show that our approach from the proof of \cref{thm:p4sp1algo} can be used to bound the number of vertex-$k$-critical $(P_4+sP_1)$-free graphs and, in particular, to prove \cref{thm:obstructions}.

\thmobstructions*

Let us first dispose of the degenerate case $s=0$.
Note that $P_4$-free graphs are perfect and thus they are $k$-colorable if and only if they do not contain a clique of size $k+1$.
Consequently, $K_{k+1}$ is the only vertex-$(k+1)$-critical cograph, so from now on we may assume that $s \geq 1$.

Actually, we will prove a stronger statement, concerning list $k$-colorings.
We say that an instance $(G,L)$ is a \emph{minimal obstruction to list $k$-colorability} if $(G,L)$ is a no-instance, but, for every $v \in V(G)$, $(G-v,L)$ is a yes-instance of \lcol{$k$}.

\begin{theorem}\label{thm:min-list-obstructions}
    Let $s,k \geq 1$.
    Every $(P_4+sP_1)$-free minimal obstruction to list $k$-colorability has at most $kM$ vertices, where $M$ is as in \eqref{eq:constants}.
    In particular, treating $s$ as a constant, this bound is $2^{\Oh(k^2)}$.
\end{theorem}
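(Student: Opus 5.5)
Let $(G,L)$ be a $(P_4+sP_1)$-free minimal obstruction to list $k$-colorability, and suppose for contradiction that $|V(G)| > kM$. The plan is to run the procedure of \cref{cor:reduce} on $(G,L)$ once, without caring about running time, and show that each of its three outputs contradicts minimality or the size assumption.

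\textbf{First output.} Suppose a largest independent set of $G$ has fewer than $M$ vertices. Pick any vertex $v$. By minimality, $(G-v,L)$ has a proper list coloring. This gives a partition of $V(G)\setminus\{v\}$ into at most $k$ independent sets, each of size less than $M$. Hence $|V(G)|-1 \le k(M-1)$, so $|V(G)| \le kM-k+1 \le kM$, contradicting our assumption. (Here $k\ge 1$ is used.)

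\textbf{Second output.} Suppose $G$ contains a clique $Q$ of size $k+1$. Then $(G[Q],L)$ is already a no-instance, because at most $k$ colors are available. By minimality, every proper induced subgraph of $G$ is a yes-instance, so $Q=V(G)$. Then $|V(G)|=k+1\le kM$, since $M\ge 2$ (indeed $M\ge\sigma>s\ge1$). This again contradicts the assumption.

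\textbf{Third output.} We obtain a nonempty set $Z$ such that any proper list coloring of $G-Z$ extends to one of $G$. Take any $z\in Z$. By minimality, $(G-z,L)$ is a yes-instance. Restricting that coloring to $V(G)\setminus Z$ shows that $(G-Z,L)$ is a yes-instance. By \cref{lem:reduce} (equivalently, \cref{cor:reduce}), $(G,L)$ is then a yes-instance, which contradicts $(G,L)$ being an obstruction.

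Hence $|V(G)|\le kM$, and together with \eqref{eq:M-bound} this gives the bound $2^{\Oh(k^2)}$. I do not expect a genuine obstacle here; the main point is checking that \cref{cor:reduce} really yields one of the three outputs with no size hypothesis on $G$. Its largest-independent-set step and \cref{lem:extract-is} need only $|X|\ge M$, which holds whenever the first output is not returned, so this is fine.

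To derive \cref{thm:obstructions}, first let $k\ge 2$ and take a vertex-$(k+1)$-critical $G$. With all lists equal to $[k]$, the pair $(G,L)$ is a minimal obstruction, so $|V(G)|\le kM$. There are only finitely many graphs on that many vertices, which proves the claim after shifting $k$ by one.
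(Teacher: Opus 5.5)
Your proof is correct and follows the paper's argument: apply \cref{cor:reduce} once and refute each of its three outcomes using minimality together with $|V(G)|>kM$. The only difference is in the third case, where you derive the contradiction through a single vertex $z\in Z$ instead of noting directly that $G-Z$ is a no-instance proper induced subgraph; the two are equivalent.
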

\begin{proof}
    Let $(G,L)$ be a $(P_4+sP_1)$-free minimal obstruction to list $k$-colorability.
    For contradiction, suppose that $|V(G)| > kM$.
    We apply \cref{cor:reduce} to $(G,L)$ and discuss its three possible outcomes.

    Suppose that the first outcome is returned, i.e., a largest independent set of $G$ has size at most $M-1$.
    Let $v$ be an arbitrary vertex of $G$. By minimality, $(G-v,L)$ is a yes-instance of \lcol{$k$}, so in particular, $G-v$ is $k$-colorable.
    This means that $|V(G)| \leq k(M-1)+1 \leq kM$, a contradiction.

    If the second outcome, i.e., a clique of size $k+1$ is returned, then this clique itself is an obstruction to list $k$-colorability. As $k+1 < kM$, we have a contradiction with minimality of $(G,L)$.

    Finally, if the third outcome is returned, then we obtain a proper induced subgraph $G'$ of $G$ such that $(G',L)$ is a no-instance of \lcol{$k$}. Again, this contradicts the minimality of $(G,L)$.
\end{proof}

A vertex-$(k+1)$-critical graph, equipped with lists all equal to $[k]$, is a minimal obstruction to list $k$-colorability.
Thus \cref{thm:min-list-obstructions} implies that every $(P_4+sP_1)$-free vertex-$(k+1)$-critical graph has at most $kM = 2^{\Oh(k^2)}$ vertices, and so there are only finitely many such graphs.
Together with the case $s=0$ discussed above, this completes the proof of \cref{thm:obstructions}.

For the non-list setting, we can slightly improve the bound on the number of vertices in a vertex-$(k+1)$-critical $(P_4+sP_1)$-free graph.

\begin{theorem}\label{thm:min-obstructions}
    Let $s,k \geq 1$ be integers and define
    \[
        q' = \max (2s+3,sk+2), \quad p' = sq', \quad M'= h_s(2k,p').
    \]
    Every $(P_4+sP_1)$-free vertex-$(k+1)$-critical graph has at most $kM'$ vertices.
    In particular, treating $s$ as a constant, this bound is $2^{\Oh(k \log k)}$.
\end{theorem}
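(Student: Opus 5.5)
We follow the proof of \cref{thm:min-list-obstructions}, with one change: without lists, a much cheaper reduction replaces \cref{lem:reduce}, so we only need $q' = \max(2s+3, sk+2)$ distinguished components instead of $(sk+1)2^k+1$. Let $G$ be a $(P_4+sP_1)$-free vertex-$(k+1)$-critical graph and suppose $|V(G)| > kM'$. Fix any vertex $v$. By criticality, $G-v$ is $k$-colorable, so if the independence number of $G$ were below $M'$, then $|V(G)| \leq k(M'-1)+1 \leq kM'$, a contradiction. Hence $G$ has an independent set $X$ with $|X| \geq M' = h_s(2k,p')$.

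Since $p' = sq' > s$, we apply \cref{lem:extract-is} with $p := p'$. If it returns a clique of size $k+1$, that clique is a proper induced non-$k$-colorable subgraph, because $k+1 < |V(G)|$; this contradicts criticality. Otherwise we get $S$ of size $p' = sq'$ satisfying \eqref{eq:extract-is}. Since $q' \geq 2s+3$, \cref{lem:components} then gives a partition $W,D$ and distinguished components $C_1,\dots,C_{q'}$ of the cograph $G[W]$, where every vertex of $D$ is complete to all but at most $s$ of them.

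The new ingredient is a reduction that deletes one distinguished component. Let $C := C_{q'}$. Criticality gives a proper $k$-coloring $\phi$ of $G - C$, and we want to extend it to $G$. Let $A = \phi(D)$ and $K = [k]\setminus A$. As in \cref{clm:surviving-components}, each color $a \in A$ is used on at most $s$ of the components $C_1,\dots,C_{q'-1}$, since a vertex of $D$ colored $a$ is complete to all but $s$ of them. So at most $s|A| \leq sk < q'-1$ of these components meet $A$. Hence some $C_j$ with $j < q'$ is colored entirely from $K$, and $G[C_j]$ is therefore $|K|$-colorable.

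The remaining step, which I expect to be the main obstacle, is to show that $C$ is also $|K|$-colorable. Once that holds, coloring $C$ with colors from $K$ extends $\phi$ properly, because $C$ is adjacent only to $D$ inside $G$, and $D$ uses no color of $K$. This contradicts non-$k$-colorability of $G$. To get it, I would compare $C$ with the other components via cliques, using that cographs are perfect, so $\chi(G[C_i]) = \omega(G[C_i])$. We need $\omega(C) \leq \omega(C_j)$. The natural choice is therefore to delete a component of \emph{minimum} clique number among the distinguished ones, rather than an arbitrary one; then $\omega(C) \leq \omega(C_j) \leq |K|$ for any such $C_j$. With this choice $C$ is $|K|$-colorable, the extension works, and we reach a contradiction. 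Thus $|V(G)| \leq kM'$.

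For the bound, $p' = O(k)$, so by \eqref{eq:constants} $M' = p'(1+2k p'^s)^{2k} = (O(k))^{O(k)} = 2^{O(k\log k)}$.
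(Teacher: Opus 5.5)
Your proof is correct and follows essentially the same argument as the paper: rerun \cref{lem:extract-is} and \cref{lem:components} with $q'$, then replace \cref{lem:reduce} by a chromatic-number comparison resting on the same count (at most $sk$ surviving components meet $A$). The paper keeps the $sk+1$ components of largest chromatic number and deletes the rest, while you delete a single component of minimum clique number; these differ only cosmetically. Your detour through clique numbers and perfection of cographs is also unnecessary, since choosing $C$ with minimum $\chi$ directly gives $\chi(G[C]) \le \chi(G[C_j]) \le |K|$.
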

\begin{proof}
    The proof outline is essentially the same as for the list version, but we use the parameters $q',p',M'$ instead of $q,\sigma,M$.
    Let us only sketch the necessary modifications; as we are interested solely in the bound on the number of vertices, we ignore all algorithmic aspects.
    Let $G$ be a $(P_4+sP_1)$-free vertex-$(k+1)$-critical graph, i.e., $G$ is not $k$-colorable, but every proper induced subgraph of $G$ is.
    Suppose that $|V(G)| > kM'$.

    If every independent set of $G$ has fewer than $M'$ vertices, then, as $G-v$ is $k$-colorable for any $v \in V(G)$, we obtain $|V(G)| \leq k(M'-1)+1 \leq kM'$, a contradiction.
    So let $X$ be an independent set of size $M' = h_s(2k,p')$ and apply \cref{lem:extract-is} to $G$ and $X$, with $p := p'$.
    If a clique $Q$ of size $k+1$ is returned, then $Q$ is not $k$-colorable, so $G = Q$ by criticality, and thus $|V(G)| = k+1 \leq kM'$, again a contradiction.
    Otherwise, we obtain a set $S$ of size $p'$ satisfying \eqref{eq:extract-is}, and then \cref{lem:components}, applied with $q := q' \geq 2s+3$ and $\sigma := p' = sq'$, gives us a partition $W,D$ of $V(G)$ and distinguished components $C_1,\ldots,C_{q'}$ of $G[W]$.

    Finally, we modify \cref{lem:reduce}; here we use the fact that all lists are equal to $[k]$.
    Indeed, for $K \subseteq [k]$, the instance $(G[C_i],L|K)$ is a no-instance of \lcol{$k$} if and only if $\chi(G[C_i]) > |K|$, so it depends on $K$ only through $|K|$, and it is monotone in $|K|$.
    Consequently, instead of the $2^k$ families $\cC_K$, it suffices to consider one family $\cC$, consisting of $sk+1$ distinguished components with the largest chromatic number;
    as $q' \geq sk+2$, the union $Z$ of the remaining distinguished components is nonempty.
    In the proof of \cref{clm:deleted-components} we now argue as follows: if a deleted component $C$ satisfies $\chi(G[C]) > |K|$, then $\chi(G[C_i]) \geq \chi(G[C]) > |K|$ for every $C_i \in \cC$, so on each of the $sk+1$ surviving components in $\cC$ the coloring $\phi$ must use a color outside $K$, i.e., a color from $A$; this contradicts \cref{clm:surviving-components}.
    The remaining arguments are unchanged, so $G$ is $k$-colorable if and only if $G-Z$ is.
    As $Z \neq \emptyset$ and $G-Z$ is a proper induced subgraph of $G$, the graph $G-Z$ is $k$-colorable, and thus so is $G$, a contradiction.

    It remains to estimate $M'$. As $s$ is a constant, we have $q' = \Oh(k)$ and $p' = \Oh(k)$, so
    \[
        M' = p'\left(1+2k \cdot (p')^s\right)^{2k} = k^{\Oh(k)} = 2^{\Oh(k \log k)},
    \]
    and hence $kM' = 2^{\Oh(k \log k)}$ as well.
\end{proof}

\paragraph{Acknowledgement.} ChatGPT and Claude were used to simplify some constructions and to identify or correct omissions and minor mistakes.
 The author takes full responsibility for the content of the paper.

\bibliographystyle{abbrv}
\bibliography{main}
\end{document}